  \providecommand\BibTeX{{%
    \normalfont B\kern-0.5em{\scshape i\kern-0.25em b}\kern-0.8em\TeX}}}
\def\@copyrightpermission{
  \\
  \href{https://creativecommons.org/licenses/by/4.0/}{\includegraphics[width=1.6cm]{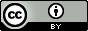}} \\
  This work is licensed under a \href{https://creativecommons.org/licenses/by/4.0/}{Creative 
  Commons Attribution 4.0 International License}}
\begin{document}



\newcommand{\defeq}{\mathrel{\overset{\text{\tiny def}}{=}}} 

\newcommand{\pl}[1]{\textsc{#1}} 

\newcommand{\lambdaAEff}{$\lambda_{\text{\ae}}$} 

\newcommand{\bnfis}{\mathrel{\;{:}{:}{=}\ }}
\newcommand{\bnfor}{\mathrel{\;\big|\ \ }}



\newcommand{\One}{\mathbb{1}} 
\newcommand{\one}{\star} 
\newcommand{\Zero}{\mathbb{0}} 

\newcommand{\Bool}{\mathbb{B}} 
\newcommand{\true}{\mathbf{true}} 
\newcommand{\false}{\mathbf{false}} 

\newcommand{\expto}{\Rightarrow} 
\newcommand{\lam}[1]{\lambda #1 \,.\,} 
\newcommand{\pair}[2]{\langle #1 , #2 \rangle} 

\newcommand{\lifted}[1]{#1_\bot} 
\newcommand{\idte}[4]{\mathbf{ifdef}~#1~\mathbf{then}~#2 \mapsto #3~\mathbf{else}~#4} 

\newcommand{\ite}[3]{\mathbf{if}~#1~\mathbf{then}~#2~\mathbf{else}~#3} 


\newcommand{\Tree}[2]{\mathrm{Tree}_{#1}\left(#2\right)} 
\newcommand{\retTree}[1]{\mathsf{return}\,#1} 

\newcommand{\opsym}[1]{\mathsf{#1}} 
\newcommand{\op}{\opsym{op}} 

\newcommand{\sig}{\Sigma} 

\renewcommand{\o}{o} 
\renewcommand{\i}{\iota} 

\newcommand{\opincomp}[2]{{\mathsf{#1}}\,{\tmkw{\downarrow}}\,#2} 
\newcommand{\opincompp}[2]{{\mathsf{#1}}\,{\tmkw{\downarrow\downarrow}}\,#2} 

\newcommand{\eq}{\mathrm{Eq}} 

\newcommand{\FreeAlg}[2]{\mathrm{Free}_{#1}\left(#2\right)} 
\newcommand{\lift}[1]{#1^\dagger} 
\newcommand{\freelift}[1]{#1^\ddagger} 

\newcommand{\M}{\mathcal{M}} 
\newcommand{\Mcarrier}{\vert \mathcal{M} \vert} 

\newcommand{\T}{T} 


\newcommand{\sigget}{\mathsf{get}}
\newcommand{\sigset}{\mathsf{set}}


\newcommand{\at}{\mathbin{!}} 
\newcommand{\att}{\mathbin{!!}} 


\newcommand{\tysym}[1]{\mathsf{#1}}
\newcommand{\tybase}{\tysym{b}} 
\newcommand{\tyunit}{\tysym{1}} 
\newcommand{\tyint}{\tysym{int}} 
\newcommand{\tystring}{\tysym{string}} 
\newcommand{\tylist}[1]{\tysym{list}~\tysym{#1}} 
\newcommand{\tyempty}{\tysym{0}} 
\newcommand{\typrod}[2]{#1 \times #2} 
\newcommand{\tysum}[2]{#1 + #2} 
\newcommand{\tyfun}[2]{#1 \to #2} 
\newcommand{\typromise}[1]{\langle #1 \rangle} 


\newcommand{\tycomp}[2]{#1 \at #2} 


\newcommand{\tyrun}[3]{#1 \att (#2,#3)} 
\newcommand{\typar}[2]{#1 \mathbin{\tmkw{\vert\vert}}  #2} 
\newcommand{\tyC}{C} 
\newcommand{\tyD}{D} 


\newcommand{\tm}[1]{\mathsf{#1}} 
\newcommand{\tmkw}[1]{\tm{\color{keywordColor}#1}} 

\newcommand{\tmpromise}[1]{\langle #1 \rangle} 

\newcommand{\tmconst}[1]{\tm{#1}}
\newcommand{\tmunit}{()} 
\newcommand{\tmpair}[2]{( #1 , #2 )} 
\newcommand{\tminl}[2][]{\tmkw{inl}_{#1}\,#2} 
\newcommand{\tminr}[2][]{\tmkw{inr}_{#1}\,#2} 
\newcommand{\tmfun}[2]{{\mathop{\tmkw{fun}}}\; (#1) \mapsto #2} 
\newcommand{\tmfunano}[2]{{\mathop{\tmkw{fun}}}\; #1 \mapsto #2} 
\newcommand{\tmapp}[2]{#1\,#2} 

\newcommand{\tmreturn}[2][]{\tmkw{return}_{#1}\, #2} 
\newcommand{\tmlet}[3]{\tmkw{let}\; #1 = #2 \;\tmkw{in}\; #3} 
\newcommand{\tmletrec}[5][]{\tmkw{let}\;\tmkw{rec}\; #2\; #3 #1 = #4 \;\tmkw{in}\; #5} 

\newcommand{\tmop}[4]{\tm{#1}\;(#2, #3. #4)} 
\newcommand{\tmopin}[3]{\tmkw{\downarrow}\, \tm{#1}\,(#2, #3)} 
\newcommand{\tmopout}[3]{\tmkw{\uparrow}\,\tm{#1}\, (#2, #3)} 
\newcommand{\tmopoutbig}[3]{\tmkw{\uparrow}\,\tm{#1}\, \big(#2, #3\big)} 
\newcommand{\tmopoutgen}[2]{\tmkw{\uparrow}\,\tm{#1}\, #2} 

\newcommand{\tmmatch}[3][]{\tmkw{match}\;#2\;\tmkw{with}\;\{#3\}_{#1}} 

\newcommand{\tmawait}[3]{\tmkw{await}\;#1\;\tmkw{until}\;\tmpromise{#2}\;\tmkw{in}\;#3} 

\newcommand{\tmwith}[5]{\tmkw{promise}\; (\tm{#1}\; #2 \mapsto #3)\; \tmkw{as}\; #4\; \tmkw{in}\; #5} 

\newcommand{\tmrun}[1]{\tmkw{run}\; #1} 
\newcommand{\tmpar}[2]{#1 \mathbin{\tmkw{\vert\vert}} #2} 


\newcommand{\reduces}{\leadsto} 
\newcommand{\tyreduces}{\rightsquigarrow} 

\newcommand{\E}{\mathcal{E}} 
\renewcommand{\H}{\mathcal{H}} 
\newcommand{\F}{\mathcal{F}} 


\newcommand{\types}{\vdash} 
\newcommand{\of}{\mathinner{:}} 

\newcommand{\sub}{\sqsubseteq} 

\newcommand{\coopinfer}[3]{\inferrule*[Lab={\color{rulenameColor}#1}]{#2}{#3}}


\makeatletter
\newcommand{\hourglass}{}                  
\DeclareRobustCommand{\hourglass}{\mathrel{\mathpalette\hour@glass\relax}}

\newcommand\hour@glass[2]{%
  \vcenter{\hbox{%
    \rotatebox[origin=c]{90}{\scalebox{0.8}{$\m@th#1\bowtie$}}%
  }}%
}
\makeatother

\newcommand{\awaiting}[2]{#1 \hourglass #2} 

\newcommand{\CompResult}[2]{\mathsf{CompRes}\langle#1 \,\vert\, #2\rangle} 
\newcommand{\RunResult}[2]{\mathsf{RunRes}\langle#1 \,\vert\, #2\rangle} 

\newcommand{\Result}[2]{\mathsf{Res}\langle#1 \,\vert\, #2\rangle} 

\newcommand{\ProcResult}[1]{\mathsf{ProcRes}\langle #1 \rangle} 
\newcommand{\ParResult}[1]{\mathsf{ParRes}\langle #1 \rangle} 


\newcommand{\cond}[3]{\mathsf{if}\;#1\;\mathsf{then}\;#2\;\mathsf{else}\;#3} 

\newcommand{\carrier}[1]{\vert #1 \vert} 
\newcommand{\order}[1]{\sqsubseteq_{#1}} 
\newcommand{\lub}[1]{\bigsqcup_n \langle #1 \rangle} 

\newcommand{\Pow}[1]{\mathcal{P}(#1)} 
\newcommand{\sem}[1]{[\![#1]\!]} 

\makeatletter
\providecommand*{\cupdot}{
  \mathbin{%
    \mathpalette\@cupdot{}%
  }%
}
\newcommand*{\@cupdot}[2]{%
  \ooalign{%
    $\m@th#1\cup$\cr
    \hidewidth$\m@th#1\cdot$\hidewidth
  }%
}
\makeatother


\definecolor{redexColor}{rgb}{0.83, 0.83, 0.83} 
\newcommand{\highlightgray}[1]{{\setlength{\fboxsep}{1.5pt}\colorbox{redexColor}{$#1$}}} 
\newcommand{\highlightwhite}[1]{{\setlength{\fboxsep}{1.5pt}\colorbox{white}{$#1$}}} 

\author{Danel Ahman}
\orcid{0000-0001-6595-2756}
\email{danel.ahman@fmf.uni-lj.si}
\author{Matija Pretnar}
\orcid{0000-0001-7755-2303}
\email{matija.pretnar@fmf.uni-lj.si}
\affiliation{%
  \institution{University of Ljubljana}
  \department{Faculty of Mathematics and Physics}
  \streetaddress{Jadranska 21}
  \city{Ljubljana}
  \postcode{SI-1000}
  \country{Slovenia}
}

\begin{CCSXML}
<ccs2012>
<concept>
<concept_id>10003752.10003753.10003761</concept_id>
<concept_desc>Theory of computation~Concurrency</concept_desc>
<concept_significance>500</concept_significance>
</concept>
<concept>
<concept_id>10003752.10010124.10010125</concept_id>
<concept_desc>Theory of computation~Program constructs</concept_desc>
<concept_significance>500</concept_significance>
</concept>
<concept>
<concept_id>10003752.10010124.10010131</concept_id>
<concept_desc>Theory of computation~Program semantics</concept_desc>
<concept_significance>500</concept_significance>
</concept>
</ccs2012>
\end{CCSXML}

\ccsdesc[500]{Theory of computation~Concurrency}
\ccsdesc[500]{Theory of computation~Program constructs}
\ccsdesc[500]{Theory of computation~Program semantics}

\keywords{algebraic effects, asynchrony, concurrency, interrupt handling, signals.}

\title{Asynchronous Effects}

\begin{abstract}
We explore asynchronous programming with algebraic effects. 
We complement their conventional synchronous treatment by showing how to naturally 
also accommodate asynchrony within them, namely, by decoupling the execution of 
operation calls into signalling that an operation’s implementation needs to be executed, and 
interrupting a running computation with the operation’s result, to which the computation can 
react by installing interrupt handlers. We formalise these ideas in a small core calculus, 
called \lambdaAEff. We demonstrate the flexibility of \lambdaAEff~using examples ranging 
from a multi-party web application, to preemptive multi-threading, to remote function 
calls, to a parallel variant of runners of algebraic effects. In addition, the paper is accompanied by 
a formalisation of \lambdaAEff's type safety proofs in \pl{Agda}, and a prototype implementation 
of \lambdaAEff~in \pl{OCaml}.
\end{abstract}

\maketitle



\definecolor{codegreen}{rgb}{0,0.6,0}
\definecolor{codegray}{rgb}{0.5,0.5,0.5}
\definecolor{codepurple}{rgb}{0.58,0,0.82}
\definecolor{backcolour}{rgb}{0.95,0.95,0.92}

\definecolor{keywordColor}{rgb}{0.0,0.0,0.5} 
\definecolor{rulenameColor}{rgb}{0.5,0.5,0.5} 


\citestyle{acmauthoryear}

\newcommand{\sref}[2]{\hyperref[#2]{#1~\ref{#2}}} 
\newcommand{\srefcase}[3]{\hyperref[#2]{#1~\ref{#2}~(#3)}} 

\def\sectionautorefname{Section}
\def\subsectionautorefname{Section}
\def\subsubsectionautorefname{Section}

\def\lstlanguagefiles{aeff}
\lstset{language=aeff,upquote=true}
\let\ls\lstinline




\section{Introduction}

Effectful programming abstractions are at the heart of many modern general-purpose 
programming languages.
They can increase expressiveness by giving 
access to first-class continuations, but often simply help users to write 
cleaner code, e.g., by avoiding having to manage a program's memory explicitly in state-passing style, 
or getting lost in callback hell while programming asynchronously.

An increasing number of language designers and programmers are starting to 
embrace \emph{algebraic effects}, 
where one uses algebraic operations \cite{Plotkin:NotionsOfComputation} and 
effect handlers \cite{Plotkin:HandlingEffects} to uniformly and user-definably 
express a wide range of effectful behaviour, 
ranging from basic examples such as state, rollbacks, exceptions, 
and nondeterminism \cite{Bauer:AlgebraicEffects}, to advanced applications 
in concurrency \cite{Dolan:MulticoreOCaml} and statistical probabilistic programming 
\cite{Bingham:Pyro}, and even quantum computation \cite{Staton:AlgEffQuantum}.

While covering many examples, the conventional treatment of 
algebraic effects is \emph{synchronous} by nature. In it 
effects are invoked by placing operation calls in one's code, 
which then propagate outwards until they trigger the actual effect, finally yielding 
a result to the rest of the computation that has been \emph{waiting} the whole 
time. While blocking the computation is indeed sometimes needed, e.g., 
in the presence of general effect handlers that can execute their continuation any 
number of times, it forces all uses of algebraic effects to be synchronous, even when this
is not necessary, e.g., when the effect involves executing 
a remote query to which a response is not needed (immediately).

Motivated by the recent interest in the combination of
asynchrony and algebraic effects \cite{Leijen:AsyncAwait,Dolan:MulticoreOCaml}, 
we explore what it takes (in terms of
language design, safe programming abstractions, and a 
self-contained core calculus) to accompany the
synchronous treatment of algebraic effects with
an \emph{asynchronous} one. At the heart of our approach is the 
decoupling of the execution of operation calls
into \emph{signalling} that some implementation of an operation needs to be executed, and \emph{interrupting} a
running computation with its result, to which the computation can react by
installing \emph{interrupt handlers}. Importantly, we show that our 
approach is flexible enough that not all signals need to have a
corresponding interrupt, and vice versa, allowing us to also model 
\linebreak 
\emph{spontaneous behaviour}, such as a
user clicking a button or the environment preempting a thread.

While we are not the first ones to work on asynchrony for algebraic effects, 
the prior work in this area (in the context of general effect handlers) has 
achieved it by \emph{delegating} the actual asynchrony to the respective language backends 
\cite{Leijen:AsyncAwait,Dolan:MulticoreOCaml}. In contrast, in this paper 
we demonstrate how to capture the combination of 
asynchrony and algebraic effects in a \emph{self-contained} core calculus. 
It is important to emphasise that our aim is not to replace general effect handlers,  
but instead to \emph{complement} them with robust primitives 
tailored to asynchrony---our proposed approach is algebraic by design, so as 
to be ready for future extensions with general effect handlers.

\paragraph{Paper structure}
In \autoref{sec:overview}, we give a high-level overview of our approach to 
asynchrony for algebraic effects. 
In \autoref{sec:basic-calculus:computations} 
and \ref{sec:basic-calculus:processes}, we distil our ideas into a core calculus, \lambdaAEff, 
equipped with a small-step semantics, a type-and-effect system, and proofs of 
type safety. In \autoref{sec:applications}, 
we show \lambdaAEff~in action on examples 
such as preemptive multi-threading, remote function calls, and a parallel variant of runners 
of algebraic effects. We conclude, and discuss related and future work in \autoref{sec:conclusion}.

\paragraph{Code}
The paper is accompanied by a \emph{formalisation} of \lambdaAEff's type safety proofs 
in \pl{Agda} \cite{ahman20:AeffAgda}, and a \emph{prototype implementation} of \lambdaAEff~in 
\pl{OCaml}, called \pl{{\AE}ff} \cite{pretnar20:AEff}. For ease of use, we provide them both also as 
a single virtual machine image \cite{AhmanPretnar20:Artefact}.

In the \pl{Agda} formalisation, we consider only well-typed syntax of a 
variant of \lambdaAEff~in which the subsumption rule manifests as an explicit coercion, so as to make working with 
de Bruijn indices less painful. 
Meanwhile, the \pl{{\AE}ff} implementation provides an interpreter and 
a simple typechecker, but it does not 
yet support inferring and checking effect annotations. In addition, \pl{{\AE}ff} provides   
a web interface that allows users to enter their programs and interactively click through 
their executions.
\pl{{\AE}ff} also comes with implementations of all the examples we present in this paper.

Separately, \citet{Poulson:AsyncEffectHandling} has shown how to implement \lambdaAEff~
in \pl{Frank} \cite{Convent:DooBeeDooBeeDoo}.


\section{Asynchronous Effects, by Example}
\label{sec:overview}

We begin with a high-level overview of how we accommodate asynchrony within algebraic effects.

\subsection{Conventional Algebraic Effects Are Synchronous by Nature}

We first recall the basic ideas of programming with algebraic effects, 
illustrating that their traditional treatment is synchronous by nature.
For an in-depth overview, we refer to the tutorial by \citet{Pretnar:Tutorial}, and to the 
seminal papers of the field \cite{Plotkin:NotionsOfComputation,Plotkin:HandlingEffects}.

In this algebraic treatment, sources of computational effects are modelled using signatures 
of \emph{operation symbols} $\op : A_\op \to B_\op$. For instance, one models 
$S$-valued state using two operations, $\sigget : \tyunit \to S$ and $\sigset : S \to \tyunit$; 
and $E$-valued exceptions using a single operation $\opsym{raise} : E \to \tyempty$.

Programmers can then invoke the effect that an 
$\op : A_\op \to B_\op$ models by placing an \emph{operation call} $\tmop {op} V y M$ in their code. Here, the 
parameter value $V$ has type $A_\op$, and the variable $y$, which is bound in the continuation $M$, has type $B_\op$.
For instance, for $\sigset$, the parameter $V$ would be 
the new value of the store, and for $\sigget$, the variable $y$ would be bound to the current value of the store.

A program written in terms of operation calls is by itself just an inert piece of code. To 
execute it, programmers have to provide \emph{implementations} for the operation 
calls appearing in it. The idea is that an implementation of $\tmop {op} V y M$ takes $V$ as its input, 
and its output gets bound to $y$.
For instance, this could take the form of defining a suitable effect handler 
\cite{Plotkin:HandlingEffects}, but could also be given by calls  
to runners of algebraic effects \cite{Ahman:Runners}, or simply by invoking some  
(default) top-level (native) implementation.
What is important is that some pre-defined piece of code $M_\op[V/x]$
gets executed in place of every operation call $\tmop {op} V y M$.

Now, what makes the conventional treatment of algebraic effects \emph{synchronous} is 
that the execution of an operation call $\tmop {op} V y M$ \emph{blocks} until some implementation 
of $\op$ returns a value $W$ to be bound to $y$, so that 
the execution of the continuation $M[W/y]$ could proceed \cite{Kammar:Handlers,Bauer:EffectSystem}. 
Conceptually, this kind of blocking behaviour can be illustrated as
\begin{equation}
\begin{gathered}
\label{eq:syncopcall}
\xymatrix@C=1.25em@R=0.85em@M=0.5em{
& M_\op[V/x] \ar@{}[r]|{\mbox{$\Large{\leadsto^{\!*}}$}} & \tmreturn W \ar[d]
\\
\dots \ar@{}[r]|>>>{\mbox{$\Large{\leadsto}$}} & \tmop {op} V y M \ar[u] & M[W/y] \ar@{}[r]|<<<{\mbox{$\Large{\leadsto}$}} & \dots
}
\end{gathered}
\end{equation}
where $\tmreturn W$ is a computation that causes no effects and simply returns the given value $W$.

While blocking the rest of the computation is needed in the presence of 
general effect handlers that can execute their continuation  any number 
of times, it forces all uses of algebraic effects to be synchronous, even 
when this is not necessary, e.g., when the effect in question involves 
executing a remote query to which a response is not needed immediately, 
or sometimes never at all.

In the rest of this section, we describe how we decouple the invocation of 
an operation call from the act of receiving its result, and how we give 
programmers a means to block execution only when it is necessary. 
While we end up surrendering some of effect handlers' generality, 
such as having access to the continuation that captures the rest of the 
computation to be handled, then in return we get a natural and robust formalism for 
asynchronous programming with algebraic effects.

\subsection{Outgoing Signals and Incoming Interrupts}
\label{sec:overview:signals}

We begin by observing that the execution of an operation call $\tmop {op} V y M$, 
as shown in (\ref{eq:syncopcall}), consists of \emph{three distinct phases}: (i) signalling that an 
implementation of $\op$ needs to be executed with parameter $V$ (the up-arrow), (ii) executing 
this implementation (the horizontal arrow), and (iii) interrupting the blocking of $M$ with a value $W$ 
(the down-arrow). In order to overcome the unwanted side-effects of blocking execution on every operation call, 
we shall naturally decouple these phases into separate programming concepts, allowing the execution of 
$M$ to proceed even if (ii) has not yet completed and (iii) taken place. In particular, we 
decouple an operation call into issuing an \emph{outgoing signal}, 
written $\tmopout{\op}{V}{M}$, and receiving an \emph{incoming interrupt}, written $\tmopin{\op}{W}{M}$.

It is important to note that while we have used the execution of operation calls  
to motivate the introduction of signals and interrupts as programming concepts, \emph{not all issued signals need to have a corresponding 
interrupt response}, and \emph{not all interrupts need to be responses to issued signals}, 
allowing us to also model spontaneous behaviour, such as the environment preempting a thread.

When \emph{issuing a signal} $\tmopout{\op}{V}{M}$, the value $V$ is a \emph{payload}, such as a location to be looked up or a 
message to be displayed, aimed at whoever is listening for the given signal. We use the $\tmkw{\uparrow}$-notation to indicate that signals issued in sub-computations propagate outwards---in this sense signals behave just like conventional 
operation calls. However, signals crucially differ from conventional operation calls in that no additional variables 
are bound in the continuation $M$, making it naturally possible to continue executing $M$ straight after the signal has been issued, e.g., as depicted below:
\vspace{-3ex}
\[
\xymatrix@C=1.25em@R=1.25em@M=0.5em{
& &
\\
\dots \ar@{}[r]|<<<{\mbox{$\Large{\leadsto}$}} & \tmopout {op} V M \ar[u]^{\op\, V} \ar@{}[r]|<<<{\mbox{$\Large{\leadsto}$}} & M \ar@{}[r]|<<<{\mbox{$\Large{\leadsto}$}} & \dots
}
\]

\newcommand{\client}{M_{\text{feedClient}}}

As a \emph{running example}, consider a computation $\client$, which lets a user scroll through a seemingly infinite feed, e.g., by repeatedly clicking a ``next page'' button.
For efficiency, $\client$ does not initially cache all the data, but instead requests a 
new batch of data each time the user is nearing the end of the cache. To communicate with the outside world, $\client$ can issue a signal
\[
  \tmopout{\opsym{request}}{\mathit{cachedSize} + 1}{\client}
\]
to request a new batch of data starting from the end of the current cache, or a different signal
\[
  \tmopout{\opsym{display}}{\mathit{message}}{\client}
\]
to display a message to the user. In both cases, the continuation \emph{does not wait} 
for an acknowledgement that the signal was received, 
but instead continues to provide a seamless experience to the user.
It is however worth noting that these signals differ in what $\client$ expects of them: 
to the $\opsym{request}$ signal, it expects a response at some future point in 
its execution, while it does not expect any response to the $\opsym{display}$ signal, 
illustrating that not every issued signal needs a response.

When the outside world wants to get the attention of a computation, be it in response to 
a signal or spontaneously, 
it happens by \emph{propagating an interrupt}~$\tmopin{\op}{W}{M}$ to the computation. 
Here, the value $W$ is again a payload, while $M$ is the computation receiving the interrupt.
It is important to note that unlike signals, interrupts are not triggered by the computation itself, 
but are instead issued by the \emph{outside world},
and can thus interrupt any sequence of evaluation steps,
e.g., as in
\vspace{-3ex}
\[
\xymatrix@C=1.25em@R=1.25em@M=0.5em{
&  \ar[d]^-{\op\, W}  &
\\
\dots \ar@{}[r]|<<<{\mbox{$\Large{\leadsto}$}} & M \ar@{}[r]|<<<{\mbox{$\Large{\leadsto}$}} & \tmopin {op} W M \ar@{}[r]|<<<{\mbox{$\Large{\leadsto}$}} & \dots
}
\]

In our running example, there are two interrupts of interest: 
$\tmopin{\opsym{response}}{\mathit{newBatch}}{M}$, which delivers new data to replenish the 
cache; and $\tmopin{\opsym{nextItem}}{\tmunit}{M}$, with which the user requests to see the next data 
item. In both cases, $M$ represents the state of $\client$ before the interrupt arrived.

We use 
the $\tmkw{\downarrow}$-notation to indicate that interrupts propagate inwards into sub-computations, 
trying to reach anyone listening for them, and only get discarded when they reach a $\tmkw{return}$. 
It is worth noting that programmers are not expected to write interrupts explicitly in their programs---instead, 
interrupts are usually induced by signals issued by other parallel processes, as explained next. 

\subsection{A Signal for the Sender Is an Interrupt to the Receiver}
\label{sec:overview:processes}

As noted above, the computations we consider do not evolve in isolation, instead they also communicate with 
the outside world, by issuing outgoing signals and receiving incoming interrupts.

We model the outside world by composing individual computations into \emph{parallel processes} $P, Q, \ldots$.
To keep the presentation clean and focussed on the asynchronous use of algebraic effects, we consider a very 
simple model of parallelism: a process is either one of the individual computations being run
in parallel, written $\tmrun M$, or the parallel composition of two processes, 
written $\tmpar P Q$.

\newcommand{\server}{M_{\text{feedServer}}}

To capture the signals and interrupts based interaction of processes,  
our operational semantics includes rules for \emph{propagating outgoing signals} from individual 
computations to processes, 
\emph{turning processes' outgoing signals into incoming interrupts} for their surrounding world, and
\emph{propagating incoming interrupts} from processes to individual computations.
For instance, in our running example, 
$\client$'s request for new data is executed as follows (with the active redexes highlighted):
\[
\begin{array}{r l}
  & \tmpar{\highlightgray{\tmrun (\tmopout{request}{V}{\highlightwhite{\client}})}}{\tmrun \server} 
  \\[0.5ex]
  \reduces & \highlightgray{\tmpar{(\tmopout{request}{V}{\highlightwhite{\tmrun \client}})}{\highlightwhite{\tmrun \server}}}
  \\[0.5ex]
  \reduces & \tmopoutbig{request}{V}{\tmpar{\tmrun \client}{\highlightgray{\tmopin{request}{V}{\tmrun {\highlightwhite{\server}}}}}}
  \\[0.5ex]
  \reduces & \tmopoutbig{request}{V}{\tmpar{\tmrun \client}{\tmrun (\tmopin{request}{V}{\server})}}
\end{array}
\]

Here, the first and the last reduction step respectively propagate signals outwards and 
interrupts inwards. The middle reduction step corresponds to what we call a \emph{broadcast rule}---it 
turns an outward moving signal in one of the processes into an inward moving interrupt for the process 
parallel to it, while continuing to propagate the signal outwards to any further parallel processes.

\subsection{Promising to Handle Interrupts}
\label{sect:overview:promising}

So far, we have shown that our computations can issue outgoing signals and receive incoming interrupts, and how 
these evolve when executing parallel processes, but we have not yet said 
anything about how computations can actually \emph{react} to incoming interrupts of interest. 

In order to react to incoming interrupts, our computations can install \emph{interrupt handlers}, written
\[
  \tmwith{op}{x}{M}{p}{N}
\]
that should be read as: ``we promise to handle a future interrupt named $\op$ using the computation  
$M$ in the continuation $N$, with $x$ bound to the payload of the interrupt''. Fulfilling this promise consists of executing $M$ and binding its result to the 
variable $p$ in $N$. This is captured by the reduction rule 
\[
  \tmopin{op}{V}{\tmwith{op}{x}{M}{p}{N}} \reduces \tmlet{p}{M[V/x]}{\tmopin{op}{V}{N}}
\]

It is worth noting two things: the interrupt handler is \emph{not reinstalled by default}, 
and the interrupt itself \emph{keeps propagating inwards} into the sub-computation $N$.
Regarding the former, 
programmers can selectively reinstall interrupt handlers when needed, 
by defining them suitably recursively, e.g., 
as we demonstrate in \autoref{sec:overview:runningexample}.
Concerning the latter, then in order to skip certain interrupt handlers for some $\opsym{op}$, one 
can carry additional data 
in $\opsym{op}$'s payload (e.g., a thread ID) and then condition the (non-)triggering of those interrupt 
handlers on this data, e.g., as we do in \autoref{sec:applications:guarder-handlers}.

Interrupts that do not match a given interrupt handler ($\op \neq \op'$) are simply propagated past it:
\[
  \tmopin{op'}{V}{\tmwith{op}{x}{M}{p}{N}} \reduces \tmwith{op}{x}{M}{p}{\tmopin{op'}{V}{N}}
\]

Interrupt handlers differ from operation calls in two important aspects.
First, they enable \emph{user-side post-processing} of received data, using $M$, 
while in operation calls the result is immediately bound in the continuation. Second, and more 
importantly, their semantics is \emph{non-blocking}. In particular, 
\[
N \reduces N' \qquad \text{implies} \qquad \tmwith{op}{x}{M}{p}{N} \reduces \tmwith{op}{x}{M}{p}{N'}
\]
meaning that the continuation $N$, and thus the whole computation, can make progress 
even though no incoming interrupt $\opsym{op}$ has been propagated to the computation 
from the outside world.

As the observant reader might have noticed, the non-blocking behaviour of interrupt handling 
means that our operational semantics has to work on \emph{open terms} because the variable $p$ can 
appear free in both $N$ and $N'$ above. However, it is important to note that $p$ is not an arbitrary variable, 
but in fact gets assigned a distinguished \emph{promise type} $\typromise X$ for some value type 
$X$---we shall crucially make use of this typing of $p$ in the proof of type safety for our \lambdaAEff-calculus (see \autoref{theorem:progress}).

\subsection{Blocking on Interrupts Only When Necessary}
\label{sec:overview:await}

As noted earlier, installing an interrupt handler means making a promise to handle a given 
interrupt in the future. To check that an interrupt has been received and handled, 
we provide programmers a means to selectively \emph{block execution}
and \emph{await} a specific promise to be fulfilled, written 
$\tmawait{V}{x}{M}$, where if $V$ has a promise type $\typromise X$, the variable $x$ bound in $M$ has type $X$.
Importantly, the continuation $M$ is executed only 
when the $\tmkw{await}$ is handed a \emph{fulfilled promise} $\tmpromise V$: 
\[
\tmawait{\tmpromise V}{x}{M} \reduces M[V/x]
\]

Revisiting our example of scrolling through a seemingly infinite feed,
$\client$ could use $\tmkw{await}$ to block until it has received an initial configuration, 
such as the batch size used by $\server$.

As the terminology suggests, this part of \lambdaAEff~is strongly influenced by existing work on 
\emph{futures and promises} \cite{Schwinghammer:Thesis} for structuring concurrent programs, and their use in modern languages, 
such as in \pl{Scala} \cite{Haller:Futures}. While prior work often models promises as writable, single-assignment 
references, we instead use the substitution of values for ordinary immutable variables (of distinguished promise type) 
to model that a promise gets fulfilled exactly once.

\subsection{Putting It All Together}
\label{sec:overview:runningexample}

Finally, we show how to implement our example of scrolling through a seemingly infinite feed.
For a simpler exposition, we allow ourselves access to mutable references, though the same can be  
achieved by rolling one's own state. 
Further, we use $\tmopoutgen {op} V$    
as a syntactic sugar for $\tmopout {op} V {\tmreturn \tmunit}$.

\subsubsection{Client}
\label{sec:overview:runningexample:client}

We implement the client computation $\client$ as the function \ls$client$ defined below. 
For presentation purposes, we split the definition of \ls$client$ between multiple code blocks.

First, the client sets up the initial values of the auxiliary references, 
issues a signal to the server asking for the data batch size that it uses, and then installs a corresponding
interrupt handler:
\begin{lstlisting}
let client () =
    let (cachedData , requestInProgress , currentItem) = (ref [] , ref false , ref 0) in
    send batchSizeRequest ();
    promise (batchSizeResponse batchSize |-> return <<batchSize>>) as batchSizePromise in
\end{lstlisting}

While the server is asynchronously responding to the batch size request, the client
sets up an auxiliary function \ls$requestNewData$, which it later uses to request new data from the server:
\begin{lstlisting}
    let requestNewData offset =
        requestInProgress := true;
        send request offset;
        promise (response newBatch |->
            cachedData := !cachedData @ newBatch;
            requestInProgress := false; return <<()>>
        ) as _ in return ()
    in
\end{lstlisting}
Here, the client first sets a flag indicating that a new data request is in process, 
then issues a $\opsym{request}$ signal to the server, and finally installs an 
interrupt handler that updates the cache 
once the $\opsym{response}$ interrupt arrives.
Note that the client does not block while awaiting new data, instead it continues executing, notifying 
the user to wait and try again once the cache is empty (see below).

Then, the client sets up its main loop, which is a simple recursively defined interrupt handler:
\begin{lstlisting}
    let rec clientLoop batchSize =
        promise (nextItem () |->
            let cachedSize = length !cachedData in
            (if (!currentItem > cachedSize - batchSize / 2) && (not !requestInProgress) then
                 requestNewData (cachedSize + 1)
             else
                 return ());
            (if !currentItem < cachedSize then
                 send display (toString (nth !cachedData !currentItem));
                 currentItem := !currentItem + 1
             else  
                 send display "please wait a bit and try again");
            clientLoop batchSize
        ) as p in return p
    in
\end{lstlisting}
In it, the client listens for a $\opsym{nextItem}$ interrupt from the user to display more data.
Once the interrupt arrives, the client checks if its cache is becoming empty---if so, 
it uses the $\opsym{requestNewData}$ function to 
request more data from the server. Next, if there is still some 
data in the cache, the client issues a signal to display the next data item to the user. 
If however the cache is empty, the client issues a signal 
to display a waiting message to the user. The client then simply recursively reinvokes itself.

As a last step of setting itself up, the client blocks until the server has responded 
with the batch size it uses, after which the client starts its main loop 
with the received batch size as follows:
\begin{lstlisting}
    await batchSizePromise until <<batchSize>> in clientLoop batchSize
\end{lstlisting}

\subsubsection{Server}
\label{sec:overview:runningexample:server}

We implement the server computation $\server$ as the following function:
\begin{lstlisting}
let server batchSize =
    let rec waitForBatchSize () =
        promise (batchSizeRequest () |->
            send batchSizeResponse batchSize;
            waitForBatchSize ()
        ) as p in return p
    in
    let rec waitForRequest () =
        promise (request offset |->
            let payload = map (fun x |-> 10 * x) (range offset (offset + batchSize - 1)) in
            send response payload;
            waitForRequest ()
        ) as p in return p
    in
    waitForBatchSize (); waitForRequest ()
\end{lstlisting}
where the computation \lstinline{range i j} returns a list of integers ranging from \lstinline{i} to \lstinline{j} (both inclusive).

The server simply installs two recursively defined interrupt handlers: the first 
one listens for and responds to client's requests about the batch size it uses; 
and the second one responds to client's requests for new data. Both interrupt 
handlers then simply recursively reinstall themselves.

\subsubsection{User}
\label{sec:overview:runningexample:user}

We can also simulate the user as a computation. Namely, we implement 
it as a function that every now and then issues a request to 
the client to display the next data item:

\begin{lstlisting}
let rec user () =
    let rec wait n = 
        if n = 0 then return () else wait (n - 1)
    in
    send nextItem (); wait 10; user ()
\end{lstlisting}
It is straightforward to extend the user also with a handler for $\opsym{display}$ interrupts (we omit it here).

\subsubsection{Running the Server, Client, and User in Parallel}
\label{sec:overview:runningexample:parallel}

Finally, we can simulate our running example in full by running all 
three computations we defined above as parallel processes, e.g., as follows: 
\begin{lstlisting}
run (server 42) || run (client ()) || run (user ())
\end{lstlisting}


\section{A Calculus for Asynchronous Effects: Values and Computations}
\label{sec:basic-calculus:computations}

We now distil the ideas we introduced in the previous section into a core 
calculus for programming with asynchronous effects, called \lambdaAEff. It 
is based on \citeauthor{Levy:FGCBV}'s [\citeyear{Levy:FGCBV}] fine-grain 
call-by-value $\lambda$-calculus (FGCBV), and as such, it is a low-level 
intermediate language to which a corresponding high-level user-facing programming language 
could be compiled to.
In order to better explain the different features of the calculus and its semantics, we split 
\lambdaAEff~into a \emph{sequential} part (discussed below) and a 
\emph{parallel} part (discussed in \autoref{sec:basic-calculus:processes}). 
We note that this separation is purely presentational.

\subsection{Values and Computations}
\label{sec:basic-calculus:values-and-computations}

The syntax of terms is given in \autoref{fig:terms}, stratified into \emph{values} and \emph{computations}, 
as in FGCBV.

\begin{figure}[tp]
  \parbox{\textwidth}{
  \centering
  \small
  \begin{align*}
  \intertext{\textbf{Values}}
  V, W
  \bnfis& x                                       & &\text{variable} \\
  \bnfor& \tmunit \bnfor\! \tmpair{V}{W}                                & &\text{unit and pair} \\
  \bnfor& \tminl[Y]{V} \bnfor\! \tminr[X]{V}    & &\text{left and right injections} \\
  \bnfor& \tmfun{x : X}{M}                        & &\text{function abstraction} \\
  \bnfor& \tmpromise V                            & &\text{fulfilled promise}
  \\[1ex]
  \intertext{\textbf{Computations}}
  M, N
  \bnfis& \tmreturn{V}                            & &\text{returning a value} \\
  \bnfor& \tmlet{x}{M}{N}          & &\text{sequencing} \\
  \bnfor& \tmletrec[: \tyfun{X}{\tycomp{Y}{(\o,\i)}}]{f}{x}{M}{N} & &\text{recursive definition} \\
  \bnfor& V\,W                                    & &\text{function application} \\
  \bnfor& \tmmatch{V}{\tmpair{x}{y} \mapsto M}    & &\text{product elimination} \\
  \bnfor& \tmmatch[\tycomp{Z}{(\o,\i)}]{V}{}                        & &\text{empty elimination} \\
  \bnfor& \tmmatch{V}{\tminl{x} \mapsto M, \tminr{y} \mapsto N}
                                                  & &\text{sum elimination} \\
  \bnfor& \tmopout{op}{V}{M}       & &\text{outgoing signal} \\
  \bnfor& \tmopin{op}{V}{M}          & &\text{incoming interrupt} \\
  \bnfor& \tmwith{op}{x}{M}{p}{N}      & &\text{interrupt handler} \\
  \bnfor& \tmawait{V}{x}{M}             & &\text{awaiting a promise to be fulfilled}
  \end{align*}
  } 
  \caption{Values and computations.}
  \label{fig:terms}
\end{figure}

\paragraph{Values}

The values $V,W,\ldots$ are mostly standard. They include  
variables, introduction forms for 
sums and products, and functions. The only \lambdaAEff-specific value 
is $\tmpromise V$, which denotes a \emph{fulfilled promise}, indicating that the promise of 
handling some interrupt has been fulfilled with the value $V$. 

\paragraph{Computations} The computations $M,N,\ldots$ also include all 
standard terms from FGCBV: 
returning values, sequencing, recursion, function 
application, and elimination forms.
Note that we annotate recursive definitions with the type of the function being defined, 
including the annotations $(\o,\i)$ that describe the possible effects of the function.
While we do not study effect inference in this paper, our experience 
is that these annotations should make it possible 
to fully infer types.

The first two computations specific to \lambdaAEff~are \emph{signals} $\tmopout{op}{V}{M}$ and 
\emph{interrupts} $\tmopin{op}{V}{M}$, where the name 
$\opsym{op}$ is drawn from an assumed set $\sig$ of names, $V$ is a data  
payload, and $M$ is a continuation.

The next \lambdaAEff-specific computation is the \emph{interrupt handler} $\tmwith{op}{x}{M}{p}{N}$, 
where $x$ is bound in $M$ and $p$ in $N$.
As discussed in the previous section, one should understand this computation as making a promise 
to handle a future incoming interrupt $\opsym{op}$ by executing the computation $M$. Sub-computations of the continuation 
$N$ can then explicitly await, when necessary, this promise to be fulfilled by blocking on the \emph{promise variable} $p$
using the final \lambdaAEff-specific computation term, the \emph{awaiting} construct $\tmawait{V}{x}{M}$.
We note that $p$ is an ordinary variable---it just gets assigned the distinguished promise type by the interrupt handler.

\subsection{Small-Step Operational Semantics}
\label{sec:basic-calculus:semantics:computations}

We equip \lambdaAEff~with an evaluation contexts based 
small-step operational semantics, 
defined using a reduction relation $M \reduces N$. 
The \emph{reduction rules} and \emph{evaluation contexts} are given 
in \autoref{fig:small-step-semantics-of-computations}.

\begin{figure}[tp]
  \small
  \begin{align*}
    \intertext{\textbf{Standard computation rules}}
    \tmapp{(\tmfun{x \of X}{M})}{V} &\reduces M[V/x]
    \\
    \tmlet{x}{(\tmreturn V)}{N} &\reduces N[V/x]
    \\
    \tmmatch{\tmpair{V}{W}}{\tmpair{x}{y} \mapsto M} &\reduces M[V/x, W/y]
    \\
    \mathllap{
      \tmmatch{(\tminl[Y]{V})}{\tminl{x} \mapsto M, \tminr{y} \mapsto N} 
    } &\reduces
    M[V/x]
    \\
    \mathllap{
      \tmmatch{(\tminr[X]{W})}{\tminl{x} \mapsto M, \tminr{y} \mapsto N}
    } &\reduces
    N[W/y]
    \\
    \tmletrec[: \tyfun{X\!}{\tycomp{\!Y\!}{\!(\o,\i)}}]{f}{x}{M}{N}  &\reduces
      N[\tmfun{x \of X}{\tmletrec[: \tyfun{X\!}{\tycomp{\!Y\!}{\!(\o,\i)}}]{f}{x}{M}{M}}/f]
    \\[1ex]
    \intertext{\textbf{Algebraicity of signals and interrupt handlers}}
    \tmlet{x}{(\tmopout{op}{V}{M})}{N} &\reduces \tmopout{op}{V}{\tmlet{x}{M}{N}}
    \\
    \tmlet{x}{(\tmwith{op}{y}{M}{p}{N_1})}{N_2} &\reduces \tmwith{op}{y}{M}{p}{(\tmlet{x}{N_1}{N_2})}
    \\[1ex]
    \intertext{\textbf{Commutativity of signals with interrupt handlers}}
    \tmwith{op}{x}{M}{p}{\tmopout{op'}{V}{N}} &\reduces \tmopout{op'}{V}{\tmwith{op}{x}{M}{p}{N}}
    \\[1ex]
    \intertext{\textbf{Interrupt propagation}}
    \tmopin{op}{V}{\tmreturn W} &\reduces \tmreturn W
    \\
    \tmopin{op}{V}{\tmopout{op'}{W}{M}} &\reduces \tmopout{op'}{W}{\tmopin{op}{V}{M}}
    \\
    \tmopin{op}{V}{\tmwith{op}{x}{M}{p}{N}} &\reduces \tmlet{p}{M[V/x]}{\tmopin{op}{V}{N}}
    \\
    \tmopin{op'}{V}{\tmwith{op}{x}{M}{p}{N}} &\reduces \tmwith{op}{x}{M}{p}{\tmopin{op'}{V}{N}}
    \quad {\color{rulenameColor}(\op \neq \op')}
    \\[-6ex]
    \end{align*}
  \begin{minipage}[t]{0.4\textwidth}
    \begin{align*}
    \intertext{\quad\,\textbf{Awaiting a promise to be fulfilled}}
    \tmawait{\tmpromise V}{x}{M} \reduces M[V/x]
    \end{align*}
  \end{minipage}
  \begin{minipage}[t]{0.4\textwidth}
    \centering
    \begin{align*}
    \intertext{\textbf{Evaluation context rule}}
    \coopinfer{}{
      M \reduces N
    }{
      \E[M] \reduces \E[N]
    }
    \end{align*}
  \vspace{-4ex}
  \end{minipage}
  \begin{align*}
  \intertext{\textbf{where}\vspace{1ex}}
  \text{$\E$}
  \bnfis [~]
  \bnfor \tmlet{x}{\E}{N}
  \bnfor \tmopout{op}{V}{\E}
  \bnfor \tmopin{op}{V}{\E} 
  \bnfor \tmwith{op}{x}{M}{p}{\E}
  \end{align*}
  \caption{Small-step operational semantics of computations.}
  \label{fig:small-step-semantics-of-computations}
\end{figure}

\paragraph{Computation rules}
The first group includes \emph{standard reduction rules} from FGCBV, such as $\beta$-reducing function applications, sequential composition, and the standard elimination forms. The semantics also includes a rule for unfolding general-recursive definitions. 
These rules involve standard \emph{capture avoiding substitutions} $M[V/x]$, defined by straightforward structural recursion.

\paragraph{Algebraicity}
This group of reduction rules \emph{propagates outwards} the signals (resp.~interrupt handlers) that have been issued (resp.~installed) in 
sub-computations. While it is not surprising that outgoing signals 
behave like algebraic \emph{operation calls}, getting propagated outwards as far as possible, then it is much more curious that 
the natural operational behaviour of interrupt handlers turns out to be the same. As we shall explain in \autoref{sec:conclusion},  
despite using the (systems-inspired) ``handler'' terminology, mathematically interrupt handlers are in fact a form of algebraic operations.

\paragraph{Commutativity of signals with interrupt handlers}
This rule complements the algebraicity rule for signals, by further propagating 
them outwards, past any enveloping interrupt handlers. From the perspective of algebraic effects, 
this rule is an example of two algebraic operations \emph{commuting}.
For this rule to be type safe, the type system ensures that the (promise) variable $p$ cannot appear in $V$. 

\paragraph{Interrupt propagation}
The handler-operation curiosity does not end with interrupt handlers. This group of reduction rules describes how  
interrupts are \emph{propagated inwards} into sub-computations. While $\tmopin{op}{V}{M}$ might look like a conventional  
operation call, then its operational behaviour instead mirrors that of (deep) \emph{effect handling}, where one also recursively descends into
the computation being handled. The first reduction rule states that we can safely discard an interrupt when it reaches a terminal 
computation $\tmreturn W$. The second rule states that we can propagate incoming interrupts past any outward moving signals. The last 
two rules describe how interrupts interact with interrupt handlers, in particular, that the former behave like effect handling 
(when understanding interrupt handlers as generalised algebraic operations). On the one hand, if the interrupt 
matches the interrupt handler it encounters, the corresponding handler code $M$ is executed, and the interrupt is 
propagated inwards into the continuation $N$. On the other hand, if the interrupt 
does not match the interrupt handler, it is simply propagated past the interrupt handler into $N$.

\paragraph{Awaiting}
The semantics includes a $\beta$-rule for the $\tmkw{await}$ construct, allowing the blocked computation $M$
to proceed executing as $M[V/x]$ when $\tmkw{await}$ is given a fulfilled promise $\tmpromise V$. 

\paragraph{Evaluation contexts}
The semantics allows reductions under \emph{evaluation contexts} $\E$.
Observe that  as discussed earlier, the inclusion of interrupt handlers in the evaluation contexts means that reductions
involve potentially open terms. 
Also, differently from the semantics of conventional operation calls \cite{Kammar:Handlers,Bauer:EffectSystem}, 
our evaluation contexts include outgoing signals. As such, the \emph{evaluation context rule} allows the execution of a computation 
to proceed even if a signal has not yet been propagated to its receiver, or when an interrupt has 
not yet arrived. Importantly, the evaluation contexts do not include $\tmkw{await}$, so as to model its intended blocking behaviour.
We write $\E[M]$ for the recursive operation of filling the hole $[~]$ in $\E$ with $M$.

\paragraph{Non-confluence}
It is worth noting that the asynchronous design means that the operational semantics 
is naturally \emph{nondeterministic}. But more interestingly, the semantics is also \emph{not confluent}.

For one source of non-confluence, let us consider two reduction sequences of a same (closed) computation, 
where for better readability, we highlight the active redex for each reduction step:
 \[
\hspace{-0.15cm}
\begin{array}{r@{\,} l}
  & \tmopin{op}{V}{\tmwith{op}{x}{(\tmwith{op'}{y}{M}{q}{\tmawait{q}{z}{M'}})}{p}{\!\highlightgray{N}}} 
  \\[1ex]
  \reduces & \highlightgray{\tmopin{op}{V}{\tmwith{op}{x}{(\tmwith{op'}{y}{M}{q}{\tmawait{q}{z}{M'}})}{p}{\!\highlightwhite{N'}}}}
  \\[1ex]
  \reduces & \highlightgray{\tmlet{p}{(\tmwith{op'}{y}{M[V/x]}{q}{\tmawait{q}{z}{M'}})}{\!\highlightwhite{\tmopin{op}{V}{N'}}}}
  \\[1ex]
  \reduces & \tmwith{op'}{y}{M[V/x]}{q}{\tmawait{q}{z}{(\tmlet{p}{M'}{\tmopin{op}{V}{N'}})}}
\end{array}
\]
and
\[
\hspace{-0.15cm}
\begin{array}{r@{\,} l}
  & \highlightgray{\tmopin{op}{V}{\tmwith{op}{x}{(\tmwith{op'}{y}{M}{q}{\tmawait{q}{z}{M'}})}{p}{\!\highlightwhite{N}}}}
  \\[1ex]
  \reduces & \highlightgray{\tmlet{p}{(\tmwith{op'}{y}{M[V/x]}{q}{\tmawait{q}{z}{M'}})}{\!\highlightwhite{\tmopin{op}{V}{N}}}}
  \\[1ex]
  \reduces & \tmwith{op'}{y}{M[V/x]}{q}{\tmawait{q}{z}{(\tmlet{p}{M'}{\tmopin{op}{V}{N}})}}
\end{array}
\]
Here, both final computations are \emph{temporarily} blocked until an incoming interrupt $\opsym{op'}$
is propagated to them and the (promise) variable $q$ gets bound to a fulfilled promise. Until this happens, 
it is not possible for the blocked continuation $N$ to reduce to $N'$ in the latter final computation.

Another distinct source of non-confluence concerns the commutativity of outgoing signals with enveloping interrupt 
handlers. For instance, the following (closed) composite computation
\[
\tmopin{op}{V}{{\tmwith {op} x {\tmopout{op'}{W'}{M}} p {\tmopout{op''}{W''}{N}}}}
\]
can nondeterministically reduce to either of the next two computations:
\[
\tmopout{op'}{W'}{\tmopout{op''}{W''}{{\tmlet{p}{M}{\tmopin{op}{V}{N}}}}}
\quad
\tmopout{op''}{W''}{\tmopout{op'}{W'}{{\tmlet{p}{M}{\tmopin{op}{V}{N}}}}}
\]
depending on whether we first propagate the interrupt $\op$ inwards or 
the signal $\op''$ outwards. As a result, in the resulting two computations, 
the signals $\op'$ and $\op''$ get issued in a different order.

\subsection{Type-and-Effect System}
\label{sec:basic-calculus:type-system:computations}

We equip \lambdaAEff~with a type system in the tradition of type-and-effect systems for algebraic effects and 
effect handlers \cite{Bauer:EffectSystem,Kammar:Handlers}, by extending the simple type system of FGCBV 
with annotations about possible effects in function and computation types.

\subsubsection{Types}
\label{sec:basic-calculus:type-system:computations:types}

We define types in \autoref{fig:types}, separated into ground, value, and computation types.

As noted in \autoref{sec:basic-calculus:values-and-computations}, \lambdaAEff~is parameterised over a set 
$\sig$ of signal and interrupt \emph{names}. 
To each such name $\op \in \sig$, we assign a fixed 
\emph{signature} $\op : A_\op$ that specifies the type $A_\op$ of the payload of the corresponding 
signal or interrupt.
Crucially, in order to be able to later prove that \lambdaAEff~is type safe 
(see \autoref{theorem:progress}, but also the relevant discussion in \autoref{sec:conclusion}), 
we restrict these signatures to \emph{ground types} $A,B,\ldots$, 
which include standard base, unit, empty, product, and sum types.

\begin{figure}[tb]
  \parbox{\textwidth}{
  \centering
  \small
  \begin{align*}
  \text{Ground type $A$, $B$}
  \bnfis& \tybase \,\bnfor\! \tyunit \,\bnfor\! \tyempty \,\bnfor\! \typrod{A}{B} \,\bnfor\! \tysum{A}{B}
  \\[1ex]
  \text{Signal or interrupt signature:}
  \phantom{\bnfis}& \op : A_\op
  \\[1ex]
  \text{Outgoing signal annotations:}
  \phantom{\bnfis}& \o \in O 
  \\
  \text{Interrupt handler annotations:}
  \phantom{\bnfis}& \i \in I 
  \\[1ex]
  \text{Value type $X$, $Y$}
  \bnfis& A \,\bnfor\! \typrod{X}{Y} \,\bnfor\! \tysum{X}{Y} \,\bnfor\! \tyfun{X}{\tycomp{Y}{(\o,\i)}} \,\bnfor\! \typromise{X}
  \\
  \text{Computation type:}
  \phantom{\bnfis}& \tycomp{X}{(\o,\i)}
  \end{align*}
  } 
  \caption{Value and computation types}
  \label{fig:types}
\end{figure}

\emph{Value types} $X,Y,\ldots$ extend ground types with function and promise types.
The \emph{function type} $\tyfun{X}{\tycomp{Y}{(\o,\i)}}$ classifies functions that take $X$-typed arguments 
to computations classified by the \emph{computation type} $\tycomp{Y}{(\o,\i)}$, i.e., ones that return $Y$-typed 
values, while possibly issuing signals specified by $\o$ and handling interrupts specified by $\i$. 
The \emph{effect annotations} $\o$ and $\i$ are drawn from sets $O$ and $I$ whose definitions we discuss  
below in \autoref{sec:basic-calculus:effect-annotations}. Finally, the \lambdaAEff-specific \emph{promise type} 
$\typromise{X}$ classifies those promises that can be fulfilled by supplying a value of type $X$.

\subsubsection{Effect Annotations}
\label{sec:basic-calculus:effect-annotations}

We now explain how we define the sets $O$ and $I$ from which we draw the 
effect annotations that we use for specifying functions and computations.
Traditionally, effect systems for algebraic effects simply use (flat) sets of 
operation names for effect annotations \cite{Bauer:EffectSystem,Kammar:Handlers}. 
In \lambdaAEff, however, we need to be 
more careful, because triggering an interrupt handler executes a computation 
that can issue potentially different signals and handle different interrupts from the main 
program, and we would like to capture this in types.
 
\paragraph{Signal annotations}
First, as outgoing signals do not carry any computational data, we follow
the tradition of type-and-effect systems for algebraic effects, and let 
$O$ be the \emph{power set} $\Pow \sig$. As such, each $\o \in O$ is a subset of 
the signature $\Sigma$, specifying which signals a computation might issue.

\paragraph{Interrupt handler annotations}
As noted above, for specifying installed interrupt handlers, we cannot use (flat) sets 
of interrupt names as the effect annotations $\i \in I$ if we want to track the nested effectful 
structure. Instead, 
we define $I$ as the \emph{greatest fixed point}
of a set functor $\Phi$ given by
\[
\Phi (X) \defeq \sig \Rightarrow (O \times X)_\bot
\]
where $\Rightarrow$ is exponentiation, $\times$ is Cartesian product, 
and $(-)_\bot$ is the lifting operation $X_\bot \defeq X \cupdot \{\bot\}$, and 
where $\cupdot$ is the disjoint union of sets. Formally speaking, $I$ is given 
by an isomorphism $I \cong \Phi(I)$, but for presentation purposes we leave it 
implicit and work as if we had a strict equality $I = \Phi(I)$.

Intuitively, each $\i \in I$ is a  
\emph{possibly infinite nesting of partial mappings} of pairs of $O$- and $I$-annotations to names in 
$\sig$---these pairs of annotations classify the possible effects of the corresponding interrupt handler code.
We use the 
record notation $\i = \{ \op_1 \mapsto (\o_1,\i_1) , \ldots , \op_n \mapsto (\o_n,\i_n) \}$ to mean that $\i$ maps the names $\op_1, \ldots, \op_n$ to the annotations $(\o_1,\i_1), \ldots, (\o_n,\i_n)$, and any other names in $\sig$ to $\bot$.
We write $\i\, (\op_i) = (\o_i,\i_i)$ to mean that the annotation $\i$ maps $\op_i$ to $(\o_i,\i_i)$.

\paragraph{Subtyping and recursive effect annotations}
Both $O$ and $I$ come equipped with natural \emph{partial orders}: for $O$, $\order O$ is given simply by 
subset inclusion; and for $I$, $\order I$ is characterised as follows:
\[
\begin{array}{l c l}
\i \order I \i'
&
\text{iff}
&
\forall\, (\op \in \sig) \, (\o'' \in O) \, (\i'' \in I) .\, \i\, (\op) = ({\o''} , {\i''}) \implies 
\\[0.5ex]
&& \exists\, (\o''' \in O) \, (\i''' \in I) .\, \i'\, (\op) = ({\o'''} , {\i'''}) \wedge \o'' \order O \o''' \wedge \i'' \order I \i'''
\end{array}
\]
We often also use the \emph{product order} $\order {O \times I}$, defined as
$(\o,\i) \order {O \times I} (\o',\i') \defeq \o \order O \o' \wedge \i \order I \i'$.
In particular, we use $\order {O \times I}$ in \autoref{sect:typing-rules} to define the subtyping 
relation for \lambdaAEff's computation types.

Importantly, the partial orders $(O,\order O)$ and $(I,\order I)$ are both \emph{$\omega$-complete} and \emph{pointed}, i.e., 
they are \emph{pointed cpos}, meaning they have least upper bounds of all increasing $\omega$-chains, and 
least elements (given by the empty set $\emptyset$ and the constant $\bot$-valued mapping, respectively). 
As a result, \emph{least fixed points} of continuous (endo)maps on them are guaranteed to exist.
We refer the interested reader to \citet{Amadio:Domains} and \citet{Gierz:ContinuousLattices} for additional 
domain-theoretic background.

For \lambdaAEff, we are particularly interested in the least fixed points of continuous maps $f : I \to I$, 
so as to specify and typecheck recursive interrupt handler examples, as we illustrate in  
\autoref{sec:basic-calculus:rec-handler-typing}. 

We also note that if we were only interested in the type safety of \lambdaAEff, and not 
in typechecking recursively defined interrupt handlers, then we would not need $(I,\order I)$ to be \emph{$\omega$-complete}, 
and could have instead chosen $I$ to be the 
\emph{least fixed point} of $\Phi$, which is what we do in our \pl{Agda} 
formalisation. In this case, each interrupt handler annotation $\i \in I$ would be a \emph{finite nesting of partial mappings}. 

Finally, we envisage that any future full-fledged high-level language based on \lambdaAEff~would 
allow users to define their (recursive) effect annotations in a small domain-specific language, providing  
a syntactic counterpart to the domain-theoretic development we use for typing \lambdaAEff~in this paper.

\subsubsection{Typing Rules}
\label{sect:typing-rules}

We characterise \emph{well-typed values} using the judgement $\Gamma \types V : X$ 
and \emph{well-typed computations} using $\Gamma \types M : \tycomp{X}{(\o,\i)}$.
In both judgements, $\Gamma$ is a \emph{typing context} of the form $x_1 \of X_1, \ldots, x_n \of X_n$.
The rules defining these judgements are respectively given in \autoref{fig:value-typing-rules} and 
\ref{fig:computation-typing-rules}.

\begin{figure}[tp]
  \centering
  \small
  \begin{mathpar}
  \coopinfer{TyVal-Var}{
  }{
    \Gamma, x \of X, \Gamma' \types x : X
  }
  \qquad
  \coopinfer{TyVal-Unit}{
  }{
    \Gamma \types \tmunit : \tyunit
  }
  \qquad
  \coopinfer{TyVal-Pair}{
    \Gamma \types V : X \\
    \Gamma \types W : Y
  }{
    \Gamma \types \tmpair{V}{W} : \typrod{X}{Y}
  }
  \qquad
  \coopinfer{TyVal-Promise}{
    \Gamma \types V : X
  }{
    \Gamma \types \tmpromise V : \typromise X
  }
  \\
  \coopinfer{TyVal-Inl}{
    \Gamma \types V : X
  }{
    \Gamma \types \tminl[Y]{V} : X + Y
  }
  \qquad
  \coopinfer{TyVal-Inr}{
    \Gamma \types W : Y
  }{
    \Gamma \types \tminr[X]{W} : X + Y
  }
  \qquad
  \coopinfer{TyVal-Fun}{
    \Gamma, x \of X \types M : \tycomp{Y}{(\o,\i)}
  }{
    \Gamma \types \tmfun{x : X}{M} : \tyfun{X}{\tycomp{Y}{(\o,\i)}}
  }
  \end{mathpar}
  \caption{Value typing rules.}
  \label{fig:value-typing-rules}
\end{figure}

\begin{figure}[tp]
  \centering
  \small
  \begin{mathpar}
  \coopinfer{TyComp-Return}{
    \Gamma \types V : X
  }{
    \Gamma \types \tmreturn{V} : \tycomp{X}{(\o,\i)}
  }
  \qquad
  \coopinfer{TyComp-Let}{
    \Gamma \types M : \tycomp{X}{(\o,\i)}
    \\
    \Gamma, x \of X \types N : \tycomp{Y}{(\o,\i)}
  }{
    \Gamma \types
    \tmlet{x}{M}{N} : \tycomp{Y}{(\o,\i)}
  }
  \\
  \coopinfer{TyComp-LetRec}{
    \Gamma, f \of \tyfun{X}{\tycomp{Y}{(\o,\i)}}, x \of X \types M : \tycomp{Y}{(\o,\i)}
    \\
    \Gamma, f \of \tyfun{X}{\tycomp{Y}{(\o,\i)}} \types N : \tycomp{Z}{(\o',\i')}
  }{
    \Gamma \types
    \tmletrec[: \tyfun{X}{\tycomp{Y}{(\o,\i)}}]{f}{x}{M}{N} : \tycomp{Z}{(\o',\i')}
  }
  \\
  \coopinfer{TyComp-Apply}{
    \Gamma \types V : \tyfun{X}{\tycomp{Y}{(\o,\i)}} \\
    \Gamma \types W : X
  }{
    \Gamma \types \tmapp{V}{W} : \tycomp{Y}{(\o,\i)}
  }
  \qquad
  \coopinfer{TyComp-MatchPair}{
    \Gamma \types V : \typrod{X}{Y} \\
    \Gamma, x \of X, y \of Y \types M : \tycomp{Z}{(\o,\i)}
  }{
    \Gamma \types \tmmatch{V}{\tmpair{x}{y} \mapsto M} : \tycomp{Z}{(\o,\i)}
  }
  \\
  \coopinfer{TyComp-MatchEmpty}{
    \Gamma \types V : \tyempty
  }{
    \Gamma \types \tmmatch[\tycomp{Z}{(\o,\i)}]{V}{} : \tycomp{Z}{(\o,\i)}
  }
  \qquad
  \coopinfer{TyComp-MatchSum}{
    \Gamma \types V : X + Y \\\\
    \Gamma, x \of X \types M : \tycomp{Z}{(\o,\i)} \\
    \Gamma, y \of Y \types N : \tycomp{Z}{(\o,\i)} \\
  }{
    \Gamma \types \tmmatch{V}{\tminl{x} \mapsto M, \tminr{y} \mapsto N} : \tycomp{Z}{(\o,\i)}
  }
  \\
  \coopinfer{TyComp-Signal}{
    \op \in \o \\
    \Gamma \types V : A_\op \\
    \Gamma \types M : \tycomp{X}{(\o,\i)} 
  }{
    \Gamma \types \tmopout{op}{V}{M} : \tycomp{X}{(\o,\i)}
  }
  \qquad
  \coopinfer{TyComp-Interrupt}{
    \Gamma \types V : A_\op \\
    \Gamma \types M : \tycomp{X}{(\o,\i)} 
  }{
    \Gamma \types \tmopin{op}{V}{M} : \tycomp{X}{\opincomp {op} (\o,\i)}
  }
  \\
  \coopinfer{TyComp-Promise}{
    \i\, (\op) = ({\o'} , {\i'}) \\
    \Gamma, x \of A_\op \types M : \tycomp{\typromise X}{(\o',\i')} \\
    \Gamma, p \of \typromise X \types N : \tycomp{Y}{(\o,\i)} 
  }{
    \Gamma \types \tmwith{op}{x}{M}{p}{N} : \tycomp{Y}{(\o,\i)}
  }
  \\
  \coopinfer{TyComp-Await}{
    \Gamma \types V : \typromise X \\
    \Gamma, x \of X \types M : \tycomp{Y}{(\o,\i)} 
  }{
    \Gamma \types \tmawait{V}{x}{M} : \tycomp{Y}{(\o,\i)}
  }
  \qquad
   \coopinfer{TyComp-Subsume}{
      \Gamma \types M : \tycomp{X}{(\o, \i)} \\
      (\o,\i) \order {O \times I} (\o',\i')
    }{
      \Gamma \types M : \tycomp{X}{(\o', \i')}
    }
  \end{mathpar}
  \caption{Computation typing rules.}
  \label{fig:computation-typing-rules}
\end{figure}

\paragraph{Values}

The rules for values are mostly standard.
The only \lambdaAEff-specific rule is \textsc{TyVal-Promise}, which states that in order to fulfil 
a \emph{promise} of type $\typromise X$, one has to supply a value of type $X$.

\paragraph{Computations}

Analogously to values, the typing rules are standard for the computation terms that \lambdaAEff~inherits from FGCBV, 
with the \lambdaAEff-rules additionally tracking the effect information $(\o,\i)$.

The \lambdaAEff-specific rule \textsc{TyComp-Signal} states that in order 
to issue a signal $\op$ in a computation with type $\tycomp{X}{(\o,\i)}$, we must have $\op \in \o$ and the type of the payload
has to match $\op$'s signature. 

The rule \textsc{TyComp-Promise} states that 
the interrupt handler code $M$ has to return a fulfilled promise of type $\typromise X$, for some type $X$, 
while possibly issuing signals $\o'$ and handling interrupts $\i'$, both of which are 
determined by the effect annotation $\i$ of the entire computation, i.e., 
$\i\, (\op) = (\o',\i')$. The variable $p$ bound in the continuation, which sub-computations can block on
to await a given interrupt to arrive and be handled, also gets assigned 
the promise type $\typromise X$. It is worth noting that one could have had $M$ simply 
return values of type $X$, but at the cost of not being able to implement some of the more interesting examples, 
e.g., guarded interrupt handlers in \autoref{sec:applications:guarder-handlers}.
At the same time, for \lambdaAEff's type safety, it is 
crucial that $p$ would have remained assigned the promise type $\typromise X$.

The rule \textsc{TyComp-Await} simply says that when awaiting a promise of type $\typromise X$ to be fulfilled, 
the continuation $M$ can refer to the promised value (in the future) using the variable $x$ of type $X$.

The rule \textsc{TyComp-Interrupt} is used to type incoming interrupts. 
In particular, when the outside world propagates an interrupt $\op$ to a computation 
$M$ of type $\tycomp{X}{(\o,\i)}$, the resulting 
computation $\tmopin{op}{V}{M}$ gets assigned the type $\tycomp{X}{\opincomp {op} (\o,\i)}$, 
where the interrupt $\op$ also \emph{acts} on the effect annotations. Intuitively, 
$\opincomp {op} (\o,\i)$ mimics the act of triggering interrupt
handlers for $\op$ at the level of effect annotations. Formally, we define this 
\emph{action of interrupts} on effect annotations as follows:
\[
\opincomp {op} {(\o , \i)}
~\defeq~
  \begin{cases}
   \left(\o \cup \o' , \i[\op \mapsto \bot] \cup \i' \right) & \mbox{if } \i\, (\op) = (\o',\i')\\
   (\o,\i) & \mbox{otherwise} 
  \end{cases}
\vspace{-0.25ex} 
\]
In other words, if $M$ has any interrupt handlers installed for $\op$, then $\i\, (\op) = (\o',\i')$, 
where $(\o',\i')$ specifies the effects of said interrupt handler code. Now, when the inward propagating 
interrupt $\op$ reaches those interrupt handlers, it triggers the execution of the corresponding handler code, 
and thus the entire computation $\tmopin{op}{V}{M}$ can also issue signals in $\o'$ and handle interrupts in $\i'$.

The notation $\i[\op \mapsto \bot]$ sets $\i$ to $\bot$ at $\op$, 
and leaves it unchanged elsewhere.
In particular, mapping $\op$ to $\bot$ captures that the interrupt triggers all corresponding interrupt handlers in $M$.

The \emph{join-semilattice} structure 
$\o \cup \o' \in O$ is given by the union of sets, while 
$\i \cup \i' \in I$ is given by
\[
\i \cup \i'
~\defeq~
\lam {\op} 
\begin{cases}
(\o'' \cup \o''' , \i'' \cup \i''') & \mbox{if } \i\, (\op) = (\o'',\i'') \wedge \i'\, (\op) = (\o''',\i''') \\
(\o'' , \i'') & \mbox{if } \i\, (\op) = (\o'',\i'') \wedge \i'\, (\op) = \bot \\
(\o''' , \i''') & \mbox{if } \i\, (\op) = \bot \wedge \i'\, (\op) = (\o''',\i''') \\
\bot & \mbox{if } \i\, (\op) = \bot \wedge \i'\, (\op) = \bot \\
\end{cases}
\vspace{-0.25ex} 
\]

We also note that the action $\opincomp {op} {(-)}$ has various useful properties, which we use in later proofs
(where we write $\pi_1$ and $\pi_2$ for the two projections associated with the Cartesian product $O \times I$):

\begin{lemma}
\label{lemma:action}
\mbox{}
\begin{enumerate}
\item $\o \order O \pi_1\, (\opincomp {op} {(\o,\i)})$.
\item If $\i\, (\op) = (\o',\i')$, then $(\o',\i') \order {O \times I} \opincomp {op} {(\o,\i)}$.
\item If $\op \neq \op'$ and $\i\, (\op') = (\o',\i')$, then $(\o',\i') \order {O \times I} (\pi_2\, (\opincomp {op} {(\o,\i)}))\, (\op')$.
\end{enumerate}
\end{lemma}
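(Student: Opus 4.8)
The plan is to prove all three parts by case analysis on the definition of the action $\opincomp{op}{(-)}$, which splits according to whether $\i$ has a handler installed for $\op$ (i.e.\ $\i\,(\op) = (\o',\i')$) or not (i.e.\ $\i\,(\op) = \bot$); in each case the goal reduces to two purely order-theoretic facts about $O$ and $I$, namely \emph{reflexivity} of the orders and the \emph{upper-bound property} of the joins, that $\o \order O \o \cup \o'$ and $\o' \order O \o \cup \o'$, and likewise $\i \order I \i \cup \i'$ and $\i' \order I \i \cup \i'$ for all arguments. For $O = \Pow{\sig}$ with $\order O$ being subset inclusion these facts are immediate set-theoretic observations. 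For $I$ they are the content of the promised join-semilattice structure, and I would establish them first as auxiliary lemmas.

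Establishing these auxiliary facts for $I$ is where the real work lies, and is the step I expect to be the main obstacle: since $I$ is the greatest fixed point $I \cong \Phi(I)$ and $\order I$ is characterised self-referentially (the relation recurs at the nested $\i''$-components), both reflexivity and the upper-bound property must be proved \emph{coinductively}. Concretely, I would show reflexivity by verifying that the diagonal $\{(\i,\i) : \i \in I\}$ is a post-fixed point of the monotone operator on relations whose greatest fixed point is $\order I$; the required condition unfolds to: whenever $\i\,(\op) = (\o'',\i'')$ one must exhibit matching data on the right with $\o'' \order O \o''$ and $\i'' \order I \i''$, which is discharged by reflexivity of $\order O$ together with the coinductive hypothesis. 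Similarly, I would prove the upper-bound property by taking the candidate relation $\{(\i', \i \cup \i') : \i,\i' \in I\}$ (and symmetrically for the left argument) and checking it is a simulation: unfolding $(\i \cup \i')\,(\op)$ by its four-case definition, the case $\i\,(\op) = \bot$ reduces to reflexivity, while the case $\i\,(\op) = (\o_1,\i_1)$ yields nested components of the form $\i_1 \cup \i''$ on which the very same upper-bound statement recurs, closing the coinduction. The commutativity of $\cup$ used below follows from its symmetric definition.

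With these auxiliary facts in hand, the three parts are routine case analyses. For part (1), if $\i\,(\op) = (\o',\i')$ then $\pi_1\,(\opincomp{op}{(\o,\i)}) = \o \cup \o'$ and we use $\o \order O \o \cup \o'$; otherwise $\pi_1\,(\opincomp{op}{(\o,\i)}) = \o$ and reflexivity suffices. For part (2), the hypothesis $\i\,(\op) = (\o',\i')$ puts us in the first branch, so $\opincomp{op}{(\o,\i)} = (\o \cup \o', \i[\op \mapsto \bot] \cup \i')$; the first component follows from $\o' \order O \o \cup \o'$ and the second from $\i' \order I \i[\op \mapsto \bot] \cup \i'$, both instances of the upper-bound property. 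For part (3), with $\op \neq \op'$ and $\i\,(\op') = (\o',\i')$: if $\i\,(\op) = \bot$ then $\pi_2\,(\opincomp{op}{(\o,\i)}) = \i$ and $\i\,(\op') = (\o',\i')$, so reflexivity closes the goal; if $\i\,(\op) = (\o_1,\i_1)$ then $\pi_2\,(\opincomp{op}{(\o,\i)}) = \i[\op \mapsto \bot] \cup \i_1$, and since $\op \neq \op'$ we have $\i[\op \mapsto \bot]\,(\op') = \i\,(\op') = (\o',\i')$. A final sub-split on $\i_1\,(\op')$ then applies the four-case definition of $\cup$: if $\i_1\,(\op') = \bot$ the join at $\op'$ is exactly $(\o',\i')$ and reflexivity finishes, while if $\i_1\,(\op') = (\o_2,\i_2)$ the join at $\op'$ is $(\o' \cup \o_2, \i' \cup \i_2)$ and the upper-bound property for each component completes the proof.
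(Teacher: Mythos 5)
Your proof is correct. The paper itself gives no in-text proof of this lemma---it is stated as a list of ``useful properties'' whose verification is deferred to the accompanying \pl{Agda} formalisation---so the comparison is with what the development implicitly requires: a case split on whether $\i\,(\op)$ is defined, discharged by reflexivity of $\order O$ and $\order I$ together with the upper-bound laws of the join-semilattice structure. Your case analyses in all three parts track the definitions of the action and of $\cup$ exactly, including the one delicate point in part (3), namely that $\i[\op \mapsto \bot]\,(\op') = \i\,(\op')$ when $\op \neq \op'$, followed by the sub-split on $\i_1\,(\op')$.

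You are also right that the real content lies in the coinductive proofs of reflexivity and the upper-bound laws for $\order I$, since the paper's $I$ is a greatest fixed point and $\order I$ is characterised corecursively. One technical nit: in the upper-bound coinduction, the case $\i\,(\op) = \bot$ produces diagonal pairs $(\i''',\i''')$, which do not belong to your candidate relation $\{(\i', \i \cup \i')\}$ as literally stated; you must either enlarge the candidate to include the diagonal (merging the two coinductions into one) or invoke coinduction up to the already-established relation $\order I$. Your phrase ``reduces to reflexivity'' is exactly this move and it is standard, but the candidate relation on its own is not a post-fixed point, so the enlargement should be made explicit. Finally, it is worth noting that the paper's \pl{Agda} formalisation takes $I$ to be the \emph{least} fixed point of $\Phi$, so there the same facts go through by structural induction; your coinductive argument is the one that matches the paper's textual, domain-theoretic presentation of $I$.
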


Finally, the rule \textsc{TyComp-Subsume} allows \emph{subtyping}. 
To simplify the presentation, we consider a limited form of subtyping, in which we 
shallowly relate only signal and interrupt annotations.

\subsubsection{Typechecking Recursively Defined Interrupt Handlers}
\label{sec:basic-calculus:rec-handler-typing}

We conclude discussing \lambdaAEff's type-and-effect system by briefly returning to the reason 
why we defined our effect annotations using lightweight domain theory in the first place, namely, 
so as to typecheck recursive interrupt handlers.

As an example, we recall the following fragment of the server code from \autoref{sec:overview:runningexample:server}:
\begin{lstlisting}
let rec waitForBatchSize () =
    promise (batchSizeReq () |-> send batchSizeResp batchSize; waitForBatchSize ()) as p in return p
\end{lstlisting}
Here, \ls$waitForBatchSize ()$ is an interrupt handler for $\opsym{batchSizeReq}$ 
that recursively reinstalls itself immediately 
after issuing a $\opsym{batchSizeResp}$ signal. Due to its recursive definition, 
it is not surprising that the type of \ls$waitForBatchSize$ should also be given recursively, in 
particular, if we want to give it a more precise type than one which simply says that any effect is possible.

To this end, we assign \ls$waitForBatchSize$ the type 
$\tyfun{\tyunit}{\tycomp{\typromise \tyunit}{(\emptyset, \i_{\text{b}})}}$, where 
$\i_{\text{b}}$ is the \emph{least fixed point} of the continuous map 
$\i \mapsto \{~ \opsym{batchSizeReq} \mapsto (\{\opsym{batchSizeResp}\} , \i) ~\} : I \to I$, i.e.,  
\[
\i_{\text{b}} = \big\{~ \opsym{batchSizeReq} \mapsto (\{\opsym{batchSizeResp}\} , \{~ \opsym{batchSizeReq} \mapsto (\{\opsym{batchSizeResp}\} , ~\ldots~) ~\}) ~\big\}
\]
As such, $(\emptyset, \i_{\text{b}})$  
captures that at the top level \ls$waitForBatchSize ()$ 
installs an interrupt handler and issues no signals, and that every $\opsym{batchSizeReq}$
interrupt causes a signal to be issued and the interrupt handler 
to be reinstalled.
Checking that \ls$waitForBatchSize$ has the 
type $\tyfun{\tyunit}{\tycomp{\typromise \tyunit}{(\emptyset, \i_{\text{b}})}}$  
involves unfolding the definition of $\i_{\text{b}}$ and using 
subtyping. The latter is needed when we recursively call \ls$waitForBatchSize ()$
where a computation of type $\tycomp{\typromise \tyunit}{(\{\opsym{batchSizeResp}\}, \i_{\text{b}})}$ 
is expected.

\subsection{Type Safety}
\label{sec:basic-calculus:type-safety}

We now prove type safety for the sequential part of \lambdaAEff, 
showing that ``well-typed programs do not go wrong''. 
As usual, we split type 
safety into \emph{progress} and \emph{preservation} \cite{Wright:SynAppTypeSoundness}.

\subsubsection{Progress}
\label{sec:basic-calculus:type-safety:progress}

The progress result says that well-typed closed computations can either make a step of  
reduction, or are already in a well-defined result form (and thus have stopped reducing).

As such, we first need to define when we consider \lambdaAEff-computations 
to be in result form. 
It is important to note that for \lambdaAEff, 
the result forms have to also incorporate the \emph{temporary blocking} while computations await some promise (variable) $p$ to be fulfilled. 
Therefore, as a first step, we characterise such computations using the judgement $\awaiting p M$, given by the following three rules:
\begin{mathpar}
  \coopinfer{}{
  }{
    \awaiting p {\tmawait p x M}
  }

  \coopinfer{}{
    \awaiting p M
  }{
    \awaiting p {\tmlet x M N}
  }
  
  \coopinfer{}{
    \awaiting p M
  }{
    \awaiting p {\tmopin{op}{V}{M}}
  }
\end{mathpar}

Next, we characterise \lambdaAEff's \emph{result forms} using the judgements $\CompResult {\Psi} {M}$ and
$\RunResult {\Psi} {M}$:
\begin{mathpar}
  \coopinfer{}{
    \CompResult {\Psi} {M}
  }{
    \CompResult {\Psi} {\tmopout {op} V M}
  }
  \quad
  \coopinfer{}{
    \RunResult {\Psi} {M}
  }{
    \CompResult {\Psi} {M}
  }
  \vspace{-1ex}
  \\
  \coopinfer{}{
  }{
    \RunResult {\Psi} {\tmreturn V}
  }
  \quad
  \coopinfer{}{
    \RunResult {\Psi \cup \{p\}} {N}
  }{
    \RunResult {\Psi} {\tmwith {op} x M p N}
  }
  \quad
  \coopinfer{}{
    p \in \Psi \\
    \awaiting p M
  }{
    \RunResult {\Psi} {M}
  }
\end{mathpar}
In these judgements, $\Psi$ is a set  
of (promise) variables that have been bound by interrupt handlers enveloping the computation.
These judgements express that a computation $M$ is in a (top-level) 
result form $\CompResult {\Psi} {M}$ when, considered as a tree, it has a shape in which \emph{all}  
signals are towards the root, interrupt handlers are in the intermediate nodes, and 
the leaves contain return values and computations that are temporarily blocked
while awaiting one of the promise variables in $\Psi$ to be fulfilled. 
The slightly mysterious name of the intermediate judgement $\RunResult {\Psi} {M}$ will become clear
in \autoref{sec:basic-calculus:type-safety:processes}.
The finality of these result forms is captured by the next lemma.

\begin{lemma}
\label{lemma:results-are-final}
Given $\Psi$ and $M$ such that $\CompResult {\Psi} {M}$, then there exists no $N$ with $M \reduces N$.
\end{lemma}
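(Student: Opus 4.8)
The plan is to prove this by induction on the derivation of the result-form judgement $\CompResult{\Psi}{M}$ (and the mutually-defined $\RunResult{\Psi}{M}$), showing in each case that no reduction rule from \autoref{fig:small-step-semantics-of-computations} applies to $M$. The overall strategy is to argue that a computation in result form has a very rigid syntactic shape---signals towards the root, interrupt handlers in the middle, and leaves that are either $\tmreturn V$ or a term blocked on some $p \in \Psi$ via $\awaiting p M$---and that none of these shapes is the left-hand side of a reduction rule, nor can the evaluation-context rule fire inside them.

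First I would set up a \emph{simultaneous} induction over the two judgements, since $\CompResult{\Psi}{-}$ is defined using $\RunResult{\Psi}{-}$ and vice versa. For the $\CompResult{\Psi}{-}$ direction there are two rules: the base case embeds $\RunResult{\Psi}{M}$, handled by the other induction, and the signal case $\CompResult{\Psi}{\tmopout{op}{V}{M}}$. For the latter, I would observe that no reduction rule has a bare outgoing signal $\tmopout{op}{V}{M}$ as its redex at the top level; the only way such a term could reduce is via the evaluation-context rule with $\E = \tmopout{op}{V}{\E'}$ reducing inside $M$, so it suffices to invoke the inductive hypothesis that $M$ (which satisfies $\CompResult{\Psi}{M}$) admits no reduction. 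For $\RunResult{\Psi}{-}$ there are three rules: the $\tmreturn V$ leaf is plainly irreducible (no rule matches and $\tmreturn V$ has no evaluation context below it); the interrupt-handler case $\RunResult{\Psi}{\tmwith{op}{x}{M}{p}{N}}$ reduces only via $\E = \tmwith{op}{x}{M}{p}{\E'}$, so again the inductive hypothesis on $N$ with the extended set $\Psi \cup \{p\}$ closes it; and the awaiting case, where $p \in \Psi$ and $\awaiting{p}{M}$, requires a separate sub-argument.

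The main obstacle, and the case I would treat most carefully, is the awaiting leaf. Here I would prove an auxiliary lemma by induction on the derivation of $\awaiting{p}{M}$: if $\awaiting{p}{M}$ holds then $M$ admits no reduction. The three generating rules for $\awaiting{p}{-}$ mirror the evaluation contexts---$\tmawait{p}{x}{M}$, $\tmlet{x}{M}{N}$ with $\awaiting{p}{M}$, and $\tmopin{op}{V}{M}$ with $\awaiting{p}{M}$---so the danger is precisely that one of the $\tmkw{let}$, interrupt-propagation, algebraicity, or commutativity rules might fire. The crucial point is that the head of a blocked term is always an $\tmkw{await}$ whose promise position is the \emph{variable} $p$ rather than a fulfilled promise $\tmpromise V$, so the $\beta$-rule $\tmawait{\tmpromise V}{x}{M} \reduces M[V/x]$ cannot apply; and because $\tmkw{await}$ is deliberately excluded from the evaluation contexts, no reduction can descend past it either. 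For the $\tmkw{let}$ case I must rule out both the head reduction $\tmlet{x}{(\tmreturn V)}{N}$ (impossible, since by the inductive hypothesis the bound computation is itself blocked, hence not of the form $\tmreturn V$, nor a signal or interrupt handler that would trigger algebraicity) and any context reduction inside the bound computation (ruled out by the inductive hypothesis). Likewise for the $\tmopin{op}{V}{M}$ case I must check that none of the four interrupt-propagation rules applies, which again follows because the blocked sub-term $M$ is neither $\tmreturn W$, nor a signal, nor an interrupt handler. Once this auxiliary lemma is in hand, the awaiting leaf of $\RunResult{\Psi}{-}$ is immediate, and the simultaneous induction concludes that no $N$ with $M \reduces N$ exists.
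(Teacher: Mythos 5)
Your proof is correct, and since the paper states this lemma without an inline proof (deferring, like the rest of the type-safety development, to the accompanying \pl{Agda} formalisation), there is no textual proof to diverge from; your structure---mutual induction on the two result-form judgements together with an auxiliary lemma that $\awaiting{p}{M}$ implies irreducibility---is the natural one, and the auxiliary lemma is exactly the place where the content lies, since $\tmkw{await}$ being excluded from evaluation contexts and the scrutinee being a variable rather than $\tmpromise{V}$ are what block reduction. One spot needs tightening: in the interrupt-handler case you assert that $\tmwith{op}{x}{M}{p}{N}$ ``reduces only via $\E = \tmwith{op}{x}{M}{p}{\E'}$'', but there is a primitive rule whose left-hand side is handler-rooted, namely the commutativity rule $\tmwith{op}{x}{M}{p}{\tmopout{op'}{V}{N}} \reduces \tmopout{op'}{V}{\tmwith{op}{x}{M}{p}{N}}$, and it must be ruled out explicitly. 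This follows by inversion on $\RunResult{\Psi \cup \{p\}}{N}$: any term satisfying $\RunResult{\Psi \cup \{p\}}{-}$ is rooted in $\tmkw{return}$, an interrupt handler, or (via $\awaiting{p}{-}$) an $\tmkw{await}$, a $\tmkw{let}$, or an incoming interrupt, never in an outgoing signal---the same shape analysis you already carry out in the $\tmkw{let}$ and interrupt cases of the auxiliary lemma. With that check added (and the routine bookkeeping that the induction is really on the derivation of $M \reduces N$, so that an application of the context rule with the trivial hole $[~]$ is handled by the sub-derivation), the argument is complete.
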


We are now ready to state and prove the \emph{progress theorem} for the sequential part of \lambdaAEff.

\begin{theorem}
\label{theorem:progress}
Given a well-typed computation $\Gamma \types M : \tycomp{Y}{(\o,\i)}$, where $\Gamma = x_1 \of \typromise {X_1}, \ldots, x_n \of \typromise {X_n}$, 
then either (i) there exists an $N$ such that $M \reduces N$, or
(ii) we have $\CompResult {\{x_1, \ldots, x_n\}} {M}$.
\end{theorem}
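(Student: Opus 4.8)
The plan is to prove the statement by induction on the typing derivation of $\Gamma \types M : \tycomp{Y}{(\o,\i)}$, keeping throughout the invariant that every variable in $\Gamma$ has a promise type and that the tracking set $\Psi = \{x_1, \ldots, x_n\}$ is exactly the set of these (promise) variables. Since the theorem is stated for an \emph{arbitrary} promise-only context, it is directly suitable as its own induction hypothesis, which matters below when we descend under an interrupt handler and the context grows. Before the induction I would record a \emph{canonical forms} observation: because the value typing rules contain no subsumption rule, a value is classified rigidly by its type, so a value of function, product, sum, or promise type is respectively a $\lambda$-abstraction, a pair, an injection, or either a fulfilled promise $\tmpromise{V'}$ or a variable; and because every variable in $\Gamma$ has promise type, a value of function, product, or sum type cannot be a variable, while there is no value of type $\tyempty$ at all. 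This observation is precisely where the hypothesis on $\Gamma$ does its work in the elimination cases.

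The cases then split into three kinds. The \textbf{base and elimination cases} are immediate: $\tmreturn{V}$ is a result form, giving (ii); \textsc{TyComp-LetRec} always unfolds, giving (i); and for $\tmapp{V}{W}$ and the product, sum, and empty matches, canonical forms forces $V$ into the relevant introduction form (or renders the empty-match case vacuous), so the corresponding $\beta$-rule fires, giving (i). The \textsc{TyComp-Subsume} case applies the induction hypothesis to the immediate subderivation of the same $M$, the effect annotations being irrelevant to progress. The \textbf{congruence cases} for $\tmlet{x}{M_1}{N}$ and $\tmopout{op}{V}{M_1}$ apply the induction hypothesis to the inner computation: if it steps, so does $M$ under the evaluation context $\tmlet{x}{\E}{N}$ or $\tmopout{op}{V}{\E}$; if it is a result form, I case on its shape. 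For a signal this is trivial, as $\tmopout{op}{V}{M_1}$ is itself a result form, giving (ii). For $\tmlet{x}{M_1}{N}$, a $\tmkw{return}$, outgoing signal, or interrupt handler at the head of $M_1$ triggers respectively the let-$\beta$, algebraicity-of-signals, or algebraicity-of-handlers rule (case (i)), while a blocked $M_1$ makes $\tmlet{x}{M_1}{N}$ blocked via the corresponding $\awaiting{p}{-}$ rule (case (ii)).

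The \textbf{interesting cases} are \textsc{TyComp-Interrupt}, \textsc{TyComp-Await}, and \textsc{TyComp-Promise}. For $\tmopin{op}{V}{M_1}$ I again appeal to the induction hypothesis on $M_1$ and analyse its result form: a $\tmkw{return}$, an outgoing signal, or an interrupt handler (matching $\op$ or not) each enables one of the four interrupt-propagation rules (case (i)), whereas a blocked $M_1$ propagates to a blocked $\tmopin{op}{V}{M_1}$ (case (ii)). For $\tmawait{V}{x}{M_1}$, canonical forms gives either $V = \tmpromise{V'}$, whence the await-$\beta$ rule fires (case (i)), or $V$ is a promise variable $x_i$; in the latter case $\awaiting{x_i}{M}$ holds, and since $x_i \in \Psi$ we obtain $\RunResult{\Psi}{M}$ and hence $\CompResult{\Psi}{M}$ (case (ii)) --- this is the second essential use of the hypothesis on $\Gamma$, as only promise variables can leave an $\tmkw{await}$ genuinely stuck. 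For the interrupt handler $\tmwith{op}{x}{M_1}{p}{N}$, I apply the induction hypothesis to $N$ in the extended context $\Gamma, p \of \typromise{X}$, which still consists solely of promise variables, so the invariant and the tracking set grow in lockstep to $\Psi \cup \{p\}$, matching the $\RunResult{}{}$ rule for handlers.

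I expect the interrupt-handler case to be the main obstacle. There the induction hypothesis yields $\CompResult{\Psi \cup \{p\}}{N}$, and $\CompResult{}{}$ permits an outgoing signal at the \emph{head} of $N$: when $N = \tmopout{op'}{V'}{N'}$ I must fire the commutativity rule $\tmwith{op}{x}{M_1}{p}{\tmopout{op'}{V'}{N'}} \reduces \tmopout{op'}{V'}{\tmwith{op}{x}{M_1}{p}{N'}}$ (case (i)), and otherwise $N$ is already in $\RunResult{}{}$-form, so $\tmwith{op}{x}{M_1}{p}{N}$ is a result (case (ii)). The delicate point is that this commutativity step is only sound because the promise variable $p$ does not occur free in the payload $V'$, which in turn is guaranteed precisely by the restriction of signal signatures to \emph{ground types}: a promise-typed variable can never be used to build a ground-typed payload. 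Flagging and discharging this free-variable side condition --- via an auxiliary argument that a value of ground type over a promise-only context cannot mention $p$ --- is the crux of the proof, and is the concrete pay-off of the ground-type restriction highlighted earlier in the paper.
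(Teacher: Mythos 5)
Your proof is correct and takes essentially the same route as the paper: induction on the typing derivation, a canonical forms lemma whose force comes entirely from the promise-only shape of $\Gamma$ (in the application, pattern-matching, and $\tmkw{await}$ cases), and the induction hypothesis applied to the continuation for each term figuring in the evaluation contexts $\E$, with the result-form judgements assembled exactly as you describe.

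One correction to your final paragraph, however. The commutativity rule
\[
\tmwith{op}{x}{M_1}{p}{\tmopout{op'}{V'}{N'}} \reduces \tmopout{op'}{V'}{\tmwith{op}{x}{M_1}{p}{N'}}
\]
carries no side condition in the operational semantics, so in the interrupt-handler case you may fire it unconditionally: nothing about the free variables of $V'$ needs to be discharged to establish clause (i), which only asserts the existence of some reduct. The requirement that $p$ not occur free in $V'$ (secured by restricting signal signatures to ground types) is what makes this rule \emph{type-preserving} --- without it, $p$ would escape its binder and the reduct could not be typed in $\Gamma$ --- and the paper accordingly discharges it in the proof of \autoref{theorem:preservation}, by strengthening the typing context of $V'$, not in the progress proof. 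So that paragraph imports a preservation concern into progress; it is harmless (the fact you cite is true and provable), but it is not the crux of this theorem. The actual load-bearing hypothesis here is the promise-only context, which is what makes canonical forms rule out stuck eliminations and makes the $\awaiting{p}{M}$ clause of the result forms absorb every genuinely blocked computation.
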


\begin{proof}
The proof is standard and proceeds by induction 
on the derivation of  $\Gamma \types M : \tycomp{Y}{(\o,\i)}$. For instance, 
if the derivation ends with a typing rule for function application or pattern-matching, 
we use an auxiliary canonical forms lemma to show that the value involved 
is either a function abstraction or in constructor form---thus $M$ can $\beta$-reduce and we prove (i).
Here we crucially rely on the context $\Gamma$ having the specific 
assumed form $x_1 \of \typromise {X_1}, \ldots, x_n \of \typromise {X_n}$.
If the derivation ends with \textsc{TyComp-Await}, then we 
use a canonical forms lemma to show that the promise value is either a variable in
$\Gamma$, in which case we prove (ii), or in constructor form, in which case we prove (i).
If the derivation however ends with a typing rule for any of the terms figuring in the 
evaluation contexts $\E$, then we proceed based on using the induction hypothesis on the 
corresponding continuation.
\end{proof}

\begin{corollary}
\label{corollary:progress}
Given a well-typed closed computation $\types M : \tycomp{X}{(\o,\i)}$,  
then either (i) there exists a computation $N$ such that $M \reduces N$, or
(ii) $M$ is already in result form, i.e., we have $\CompResult {\emptyset} {M}$.
\end{corollary}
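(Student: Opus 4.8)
The plan is to observe that this corollary is simply the specialisation of \autoref{theorem:progress} to the empty typing context, i.e.\ to the case $n = 0$. First I would note that a closed computation $\types M : \tycomp{X}{(\o,\i)}$ is, by definition of closedness, typed in the empty context, and that the empty context trivially has the shape $x_1 \of \typromise{X_1}, \ldots, x_n \of \typromise{X_n}$ required by the hypothesis of the theorem: one just takes $n = 0$, so that there are no variables to list and the condition that every context variable carry a promise type is vacuously satisfied.

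Having checked the hypothesis, I would apply \autoref{theorem:progress} directly to $\types M : \tycomp{X}{(\o,\i)}$. Its conclusion is a disjunction: either (i) there exists an $N$ with $M \reduces N$, which is exactly clause (i) of the corollary, or (ii) $\CompResult{\{x_1, \ldots, x_n\}}{M}$ holds. In the present instance the set $\{x_1, \ldots, x_n\}$ of enveloping promise variables is empty, since $n = 0$, so clause (ii) reads $\CompResult{\emptyset}{M}$, which is precisely clause (ii) of the corollary. Thus the two alternatives of the theorem map onto the two alternatives of the corollary with no further work.

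Since there are no auxiliary constructions or inductions to carry out---the entire content is the instantiation $n = 0$ together with the attendant simplification $\{x_1, \ldots, x_n\} = \emptyset$---I do not expect any genuine obstacle. The only point worth verifying is that the statement of \autoref{theorem:progress} is genuinely meant for an arbitrary $n \geq 0$, including $n = 0$, rather than only for strictly positive $n$; given the phrasing of its hypothesis (a context of the displayed form, with $n$ unconstrained), the empty context is plainly admitted, and the corollary follows immediately.
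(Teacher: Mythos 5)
Your proposal is correct and matches the paper's intent exactly: the corollary is stated (without a separate proof) as the immediate instantiation of \autoref{theorem:progress} at the empty context, i.e.\ $n = 0$, giving $\{x_1, \ldots, x_n\} = \emptyset$ and hence $\CompResult{\emptyset}{M}$ in clause (ii). Your check that the empty context vacuously satisfies the promise-typed-context hypothesis is the only point of substance, and you handled it correctly.
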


\subsubsection{Type Preservation}
\label{sec:basic-calculus:type-safety:preservation}

The type preservation result says that reduction preserves well-typedness.

The results that we present in this section use standard \emph{substitution  
lemmas}. For instance, given $\Gamma, x \of X , \Gamma' \types M : \tycomp{Y}{(\o,\i)}$
and $\Gamma \types V : X$, then we can show that $\Gamma, \Gamma' \types M[V/x] : \tycomp{Y}{(\o,\i)}$.
In the following we also use standard \emph{typing inversion lemmas}. For example, given 
$\Gamma \types \tmopin{op}{V}{M} : \tycomp{X}{(\o,\i)}$, then we can show that 
$\Gamma \types V : A_\op$ and $\Gamma \types M : \tycomp{X}{\opincomp {op} (\o',\i')}$, 
such that $\opincomp {op} (\o',\i') \order {O \times I} (\o,\i)$.

As the proof of type preservation proceeds by induction on reduction steps, 
we find it useful to define an auxiliary \emph{typing judgement for evaluation contexts}, 
written $\Gamma \types\!\![\, \Gamma' \,\vert\, \tycomp{X}{(\o,\i)} \,]~ \E : \tycomp{Y}{(\o',\i')}$, 
which we then use to prove the evaluation context rule case of the proof.
Here, $\Gamma'$ is the context of variables bound by the interrupt handlers in $\E$, and 
$\tycomp{X}{(\o,\i)}$ is the type of the hole $[~]$. This judgement is given using rules 
similar to those for computations, including subtyping, e.g., we have
\begin{mathpar}
  \coopinfer{}{
    \i'\, (\op) = (\o'',\i'') \\
    \Gamma, x \of A_\op \types M : \tycomp{\typromise Y}{(\o'',\i'')} \\
    \Gamma, p \of \langle Y \rangle \types\!\![\, \Gamma' \,\vert\, \tycomp{X}{(\o,\i)} \,]~ \E : \tycomp{Z}{(\o',\i')}
  }{
    \Gamma \types\!\![\, p \of \langle Y \rangle, \Gamma' \,\vert\, \tycomp{X}{(\o,\i)} \,]~ \tmwith{op}{x}{M}{p}{\E} : \tycomp{Z}{(\o',\i')}
  }
\end{mathpar}
It is thus straightforward to relate this typing of evaluation contexts with that of computations.

\begin{lemma}
\label{lemma:eval-ctx-typing}
\mbox{}

$\Gamma \types \E[M] : \tycomp{Y}{(\o',\i')} 
\Leftrightarrow 
\exists\, \Gamma', X, \o, \i .~ 
\Gamma \types\!\![\, \Gamma' \,\vert\, \tycomp{X}{(\o,\i)} \,]~ \E : \tycomp{Y}{(\o',\i')}
~\wedge~
\Gamma,\Gamma' \types M : \tycomp{X}{(\o,\i)}
$.
\end{lemma}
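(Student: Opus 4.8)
The plan is to prove both directions at once by exploiting the fact that the evaluation-context typing judgement $\Gamma \types\!\![\, \Gamma' \,\vert\, \tycomp{X}{(\o,\i)} \,]~ \E : \tycomp{Y}{(\o',\i')}$ is defined by rules that mirror, clause for clause, the computation typing rules of \autoref{fig:computation-typing-rules} for exactly the term formers appearing in evaluation contexts ($\tmkw{let}$, signals, interrupts, and interrupt handlers), together with a base rule for the hole $[~]$ and a subsumption rule. Both implications then amount to transporting a derivation between these two judgements along the single hole-containing spine of $\E$, so the proof proceeds by structural induction on $\E$ (equivalently, by induction on the relevant derivation).

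For the $\Leftarrow$ direction I would induct on the derivation of $\Gamma \types\!\![\, \Gamma' \,\vert\, \tycomp{X}{(\o,\i)} \,]~ \E : \tycomp{Y}{(\o',\i')}$, carrying along the hypothesis $\Gamma, \Gamma' \types M : \tycomp{X}{(\o,\i)}$. In the base case $\E = [~]$ one has $\Gamma'$ empty, $X = Y$, and $(\o,\i) = (\o',\i')$, so $\E[M] = M$ is typed by hypothesis. For each inductive rule, the induction hypothesis yields a typing of the filled sub-context $\E_0[M]$, and applying the matching computation rule (\textsc{TyComp-Let}, \textsc{TyComp-Signal}, \textsc{TyComp-Interrupt}, or \textsc{TyComp-Promise}) to this typing and to the unchanged side-premises reconstructs $\Gamma \types \E[M] : \tycomp{Y}{(\o',\i')}$; the context-subsumption rule is discharged by \textsc{TyComp-Subsume}. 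This direction is essentially bookkeeping.

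For the $\Rightarrow$ direction I would induct on the structure of $\E$ and invert the given derivation of $\Gamma \types \E[M] : \tycomp{Y}{(\o',\i')}$ using the typing inversion lemmas. The hole case is immediate (take $\Gamma'$ empty, $X = Y$, $(\o,\i) = (\o',\i')$). In each inductive case, inversion exposes a typing of the immediate sub-context $\E_0[M]$ together with the side-premises ($\Gamma \types V : A_\op$ and $\op \in \o'$ for signals, $\Gamma \types V : A_\op$ for interrupts, and $\i'(\op) = (\o'',\i'')$ together with the typing of the handler code $M_0$ for interrupt handlers), plus the appropriate bound-variable extension of the context for $\tmkw{let}$ and $\tmkw{promise}$. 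Applying the induction hypothesis to $\E_0[M]$ supplies the witnesses $\Gamma', X, \o, \i$, the hole typing of $\E_0$, and the typing of $M$; prepending the relevant context-typing rule then yields the hole typing of $\E$, while the typing of $M$ is passed through unchanged (noting that the bound promise variable $p$ is threaded into $\Gamma'$ in the handler case).

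The main obstacle is the interaction with subtyping: since computation typing includes \textsc{TyComp-Subsume}, inversion does not return the premises on the nose but only up to the orders $\order O$, $\order I$, and $\order {O \times I}$, as in the interrupt inversion lemma that gives $\opincomp{op}{(\o',\i')} \order{O \times I} (\o,\i)$. The key point is that the evaluation-context judgement is itself closed under subtyping, so these residual inequalities can be absorbed into the context derivation rather than blocking the reassembly, and the innermost hole type $\tycomp{X}{(\o,\i)}$ threads through untouched to match the typing of $M$. For the $\tmopin{op}{V}{\E_0}$ case I would additionally invoke the monotonicity of the action $\opincomp{op}{(-)}$ and \sref{Lemma}{lemma:action} to check that composing the subtyping obtained from inversion with the subtyping already present in the sub-context's hole typing again yields a valid (subsumed) context typing, keeping the effect annotations coherent across the hole. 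Once subsumption is handled uniformly in this way, the remaining cases are routine.
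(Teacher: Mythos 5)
Your proposal is correct and takes essentially the same route the paper intends: the paper omits the proof as ``straightforward,'' and the argument it has in mind is precisely your pair of inductions (on the context-typing derivation for $\Leftarrow$, on the structure of $\E$ with inversion for $\Rightarrow$), with subsumption absorbed by the context judgement's own subtyping rule. One minor simplification: in the $\tmopin{op}{V}{\E_0}$ case, the inequality $\opincomp{op}{(\o'',\i'')} \order{O \times I} (\o',\i')$ returned by inversion can be discharged directly by context subsumption, so the extra appeal to monotonicity of the action and to \sref{Lemma}{lemma:action} is not actually needed.
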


We are now ready to state and prove the \emph{type preservation theorem} for the sequential part of \lambdaAEff.

\begin{theorem}
\label{theorem:preservation}
Given $\Gamma \types M : \tycomp{X}{(\o,\i)}$ and 
$M \reduces N$, then we have $\Gamma \types N : \tycomp{X}{(\o,\i)}$.
\end{theorem}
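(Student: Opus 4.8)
The plan is to prove the statement by induction on the derivation of $M \reduces N$, with a case analysis on the last rule applied. Since the calculus has a subsumption rule \textsc{TyComp-Subsume}, I would not invert the typing derivation of the redex directly, but instead rely on the \emph{typing inversion lemmas} (illustrated by the inversion example preceding \autoref{lemma:eval-ctx-typing}), which recover the premises of the relevant syntax-directed rule together with an ambient ordering in $\order{O \times I}$ that absorbs any intervening uses of subsumption. The uniform pattern in each case is then: invert the typing of the redex to obtain the types of its subterms (up to $\order{O \times I}$), reassemble a typing derivation for the reduct, and, where the reconstructed annotations only sit below the required $(\o,\i)$, close with one final application of \textsc{TyComp-Subsume}.

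The routine cases are the standard FGCBV reductions --- $\beta$ for application, $\tmlet{x}{(\tmreturn V)}{N}$, the two sum eliminations, the product elimination, and unfolding of $\tmkw{let}~\tmkw{rec}$ --- each of which follows by inversion and the substitution lemma (for $\tmkw{let}~\tmkw{rec}$, substituting the recursive abstraction for $f$). The $\tmkw{await}$ rule $\tmawait{\tmpromise V}{x}{M} \reduces M[V/x]$ is equally immediate: inverting \textsc{TyComp-Await} and \textsc{TyVal-Promise} gives $\Gamma \types V : X$, whence the substitution lemma applies. The two algebraicity rules preserve the annotation pair $(\o,\i)$ verbatim, so they amount to reassembling \textsc{TyComp-Signal} or \textsc{TyComp-Promise} around a reshuffled $\tmkw{let}$; the only care needed is a weakening step in $\tmlet{x}{(\tmwith{op}{y}{M}{p}{N_1})}{N_2} \reduces \tmwith{op}{y}{M}{p}{(\tmlet{x}{N_1}{N_2})}$, where the promise variable $p$ comes to scope over $N_2$, which is harmless as $p \notin \mathrm{fv}(N_2)$. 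The commutativity rule $\tmwith{op}{x}{M}{p}{\tmopout{op'}{V}{N}} \reduces \tmopout{op'}{V}{\tmwith{op}{x}{M}{p}{N}}$ conversely requires moving the payload $V$ \emph{out} from under the binder for $p$; here I would use that $V$ has the ground type $A_{\op'}$ whereas $p$ has a promise type, so that $p \notin \mathrm{fv}(V)$ and a strengthening lemma yields $\Gamma \types V : A_{\op'}$ without $p$ in the context.

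The crux is the interrupt-propagation group, in which the incoming interrupt \emph{acts} on the effect annotations through $\opincomp{op}{(-)}$, and \autoref{lemma:action} is exactly what makes the reduct's annotations line up with those of the redex. For $\tmopin{op}{V}{\tmreturn W} \reduces \tmreturn W$ nothing is needed, as a $\tmkw{return}$ admits any annotation. For $\tmopin{op}{V}{\tmopout{op'}{W}{M}} \reduces \tmopout{op'}{W}{\tmopin{op}{V}{M}}$, part~(1) of \autoref{lemma:action} ensures the issued name survives the action, $\op' \in \o \subseteq \pi_1(\opincomp{op}{(\o,\i)})$, so that \textsc{TyComp-Signal} still applies to the reduct. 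The matching case $\tmopin{op}{V}{\tmwith{op}{x}{M}{p}{N}} \reduces \tmlet{p}{M[V/x]}{\tmopin{op}{V}{N}}$ is the most delicate: inversion of the handler yields $\i(\op) = (\o',\i')$ with its code $M$ carrying annotation $(\o',\i')$, and I would invoke part~(2), $(\o',\i') \order{O \times I} \opincomp{op}{(\o,\i)}$, to subsume $M[V/x]$ up to the annotation of the whole interrupted term, so that the reconstructed $\tmkw{let}$ --- whose continuation $\tmopin{op}{V}{N}$ reuses \textsc{TyComp-Interrupt} --- typechecks. The non-matching case $\tmopin{op'}{V}{\tmwith{op}{x}{M}{p}{N}} \reduces \tmwith{op}{x}{M}{p}{\tmopin{op'}{V}{N}}$ (with $\op \neq \op'$) is symmetric and relies on part~(3): it guarantees that the handled name $\op$ retains at least its original annotation inside $\pi_2(\opincomp{op'}{(\o,\i)})$, so the handler for $\op$ can be reinstalled around the propagated interrupt, its code $M$ being subsumed to the (possibly larger) annotation now recorded there.

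The genuinely inductive case is the evaluation-context rule: given $\E[M] \reduces \E[N]$ derived from $M \reduces N$, I would split the typing of $\E[M]$ using \autoref{lemma:eval-ctx-typing} into a context typing for $\E$ together with $\Gamma, \Gamma' \types M : \tycomp{X}{(\o,\i)}$, apply the induction hypothesis to get $\Gamma, \Gamma' \types N : \tycomp{X}{(\o,\i)}$, and reassemble via the same lemma. I expect the main obstacle to be the bookkeeping in the interrupt-propagation cases --- keeping the action $\opincomp{op}{(-)}$ synchronised between the matched (or unmatched) handler annotation and the residual interrupt --- which is precisely why \autoref{lemma:action} is isolated in advance; the remaining cases are standard subject-reduction reasoning.
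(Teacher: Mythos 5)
Your proposal is correct and follows essentially the same route as the paper's proof: induction on the derivation of $M \reduces N$ with typing inversion lemmas absorbing subsumption, \srefcase{Lemma}{lemma:action}{2} and \srefcase{Lemma}{lemma:action}{3} for the matching and non-matching interrupt--handler cases, strengthening of the context (justified by ground-typed signatures) for the signal/interrupt-handler commutativity case, and \sref{Lemma}{lemma:eval-ctx-typing} combined with the induction hypothesis for the evaluation-context case. The only difference is presentational: you spell out some routine cases (and the use of \srefcase{Lemma}{lemma:action}{1} for commuting an interrupt past a signal) that the paper leaves implicit.
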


\begin{proof}
The proof is standard and proceeds by induction on the derivation of  
$M \reduces N$, using typing inversion lemmas depending on 
the structure forced upon $M$ by the last rule used in $M \reduces N$.

There are four cases of interest in this proof. The first two concern
the interaction of incoming interrupts and interrupt handlers.
On the one hand, if the given derivation of $\reduces$ ends with 
\[
\tmopin{op}{V}{\tmwith{op}{x}{M}{p}{N}} \reduces \tmlet{p}{M[V/x]}{\tmopin{op}{V}{N}}
\]
then in order to type the right-hand side of this rule, we are led to use subtyping with  
\srefcase{Lemma}{lemma:action}{2}, so as to show that $M$'s effect information is 
included in $\opincomp {op} {(\o , \i)}$. On the other hand, given
\[
\tmopin{op'}{V}{\tmwith{op}{x}{M}{p}{N}} \reduces \tmwith{op}{x}{M}{p}{\tmopin{op'}{V}{N}}
\quad {\color{rulenameColor}(\op \neq \op')}
\]
then in order to type the right-hand side of this rule, we are led to use subtyping with  
\srefcase{Lemma}{lemma:action}{3}, so as to show that 
after acting on $(\o,\i)$ with $\op'$, $\op$ remains mapped to $M$'s effect information.

The third case of interest concerns the commutativity of signals with interrupt handlers:
\[
\tmwith{op}{x}{M}{p}{\tmopout{op'}{V}{N}} \reduces \tmopout{op'}{V}{\tmwith{op}{x}{M}{p}{N}}
\]
where in order to type the signal's payload $V$ in the right-hand side, 
it is crucial that the promise-typed variable $p$ cannot appear in $V$---this is ensured by 
our type system that restricts the signatures $\op : A_\op$ to ground types. As a result, 
we can strengthen the typing context of $V$ by removing $p$.

Finally, in the evaluation context rule case, we use the induction hypothesis with \sref{Lemma}{lemma:eval-ctx-typing}.
\end{proof}

Interestingly, the proof of \autoref{theorem:preservation} tells us that if 
one were to consider a variant of \lambdaAEff~in which the 
\textsc{TyComp-Subsume} rule appeared as an explicit coercion term $\tmkw{coerce}_{(\o,\i) \order {O \times I} (\o',\i')}\, M$, then 
the right-hand sides of the two interrupt propagation rules highlighted in the above proof 
would also need to involve such coercions, corresponding to the uses of \sref{Lemma}{lemma:action}. 
This however means that other computations involved in these reduction rules would also need to be type-annotated.


\section{A Calculus for Asynchronous Effects: Parallel Processes}
\label{sec:basic-calculus:processes}

Next, we describe the parallel part of \lambdaAEff. Similarly to the sequential part, we 
again present the corresponding syntax, a small-step semantics, 
a type-and-effect system, and type safety results.

\subsection{Parallel Processes}

To keep the presentation focussed on the asynchronous use of algebraic effects, we 
consider a very simple model of parallelism: a process is either an \emph{individual computation} 
or the \emph{parallel composition} of two processes. To facilitate interactions between processes, they also  
contain outward propagating \emph{signals} and inward propagating \emph{interrupts}. Formally, the 
syntax of \emph{parallel processes} is
\[
  P, Q
  \bnfis \tmrun M
  \,\bnfor\! \tmpar P Q
  \,\bnfor\! \tmopout{op}{V}{P}
  \,\bnfor\! \tmopin{op}{V}{P}
\]
Note that processes do not include interrupt handlers---these are local to individual computations.

We leave first-class processes and their dynamic creation for future work, as discussed in \autoref{sec:conclusion}.

\subsection{Small-Step Operational Semantics}

We equip the parallel part of \lambdaAEff~with a small-step semantics that  
naturally extends that of \lambdaAEff's sequential part.
The semantics is defined using a reduction relation $P \reduces Q$, as given in \autoref{fig:processes}.

\begin{figure}[tp]
  \parbox{\textwidth}{
  \centering
  \small
  \begin{minipage}[t]{0.4\textwidth}
  \centering
  \begin{align*}
  \intertext{\textbf{Individual computations}}
    \coopinfer{}{
      M \reduces N
    }{
      \tmrun M \reduces \tmrun N
    }
  \end{align*}
  \begin{align*}
    \intertext{\textbf{Signal hoisting}}
    \tmrun {(\tmopout{op}{V}{M})}  &\reduces \tmopout{op}{V}{\tmrun M}
    \\[1ex]
    \intertext{\textbf{Broadcasting}}
    \tmpar{\tmopout{op}{V}{P}}{Q} &\reduces \tmopout{op}{V}{\tmpar{P}{\tmopin{op}{V}{Q}}}
    \\
    \tmpar{P}{\tmopout{op}{V}{Q}} &\reduces \tmopout{op}{V}{\tmpar{\tmopin{op}{V}{P}}{Q}}
  \end{align*}
  \vspace{-1ex}
  \end{minipage}
  \qquad
  \begin{minipage}[t]{0.4\textwidth}
  \centering
  \begin{align*}
    \intertext{\textbf{Interrupt propagation}}
    \tmopin{op}{V}{\tmrun M} &\reduces \tmrun {(\tmopin{op}{V}{M})}
    \\
    \tmopin{op}{V}{\tmpar P Q} &\reduces \tmpar {\tmopin{op}{V}{P}} {\tmopin{op}{V}{Q}}
    \\
    \tmopin{op}{V}{\tmopout{op'}{W}{P}} &\reduces \tmopout{op'}{W}{\tmopin{op}{V}{P}}
  \end{align*}
  \begin{align*}
    \intertext{\quad\textbf{Evaluation context rule}}
    \quad
    \coopinfer{}{
      P \reduces Q
    }{
      \F[P] \reduces \F[Q]
    }
  \end{align*}
  \end{minipage}
  \begin{align*}
  \intertext{\textbf{where}\vspace{1ex}}
  \text{$\F$}
  \bnfis& [~]
  \bnfor \tmpar \F Q \bnfor\! \tmpar P \F
  \bnfor \tmopout{op}{V}{\F}
  \bnfor \tmopin{op}{V}{\F}
  \end{align*}
  } 
  \caption{Small-step operational semantics of parallel processes.}
  \label{fig:processes}
\end{figure}

\paragraph{Individual computations}
This reduction rule states that, as processes, individual computations evolve according to the small-step
operational semantics $M \reduces N$ we defined for them in \autoref{sec:basic-calculus:semantics:computations}.

\paragraph{Signal hoisting}
This rule propagates signals out of individual computations.
It is important to note that we only hoist those signals that have propagated to the outer boundary
of a computation.

\paragraph{Broadcasting}
The broadcast rules turn outward moving signals in one process into inward moving interrupts 
for the process parallel to it, while continuing to propagate the signals outwards to any 
further parallel processes. The latter ensures that the semantics is compositional.

\paragraph{Interrupt propagation}
These three rules simply propagate interrupts inwards into individual computations, 
into all branches of parallel compositions, and past any outward moving signals.

\paragraph{Evaluation contexts}
Analogously to the semantics of computations, the semantics of processes also includes a context rule, which allows reductions under \emph{evaluation contexts} 
$\F$. Observe that compared to the evaluation contexts for computations, those for processes
do not bind variables. 

\subsection{Type-and-Effect System}

Analogously to its sequential part, we also equip \lambdaAEff's parallel part with a type-and-effect system.

\paragraph{Types} The \emph{types of processes} are designed to match their parallel structure---they are given by
\[
  \text{$\tyC$, $\tyD$}
  \bnfis \tyrun X \o \i
  \,\bnfor\! \typar \tyC \tyD
\]
Intuitively, $\tyrun X \o \i$ is a process type of an individual computation of type $\tycomp{X}{(\o,\i)}$, and $\typar \tyC \tyD$
is the type of the parallel composition of two processes that respectively have types $\tyC$ and $\tyD$.

\paragraph{Typing judgements}
\emph{Well-typed processes} are characterised using the judgement
$\Gamma \vdash P : \tyC$. We present the typing rules in \autoref{fig:process-typing-rules}.
While our processes are not currently higher-order, we allow 
non-empty contexts $\Gamma$ to model the possibility of using libraries and top-level function definitions.

\begin{figure}[tp]
  \centering
  \small
  \begin{mathpar}
  \coopinfer{TyProc-Run}{
    \Gamma \types M : \tycomp{X}{(\o,\i)}
  }{
    \Gamma \types \tmrun{M} : \tyrun{X}{\o}{\i}
  }
  \quad
  \coopinfer{TyProc-Par}{
    \Gamma \types P : \tyC \\
    \Gamma \types Q : \tyD
  }{
    \Gamma \types \tmpar{P}{Q} : \typar{\tyC}{\tyD}
  }
  \quad
  \coopinfer{TyProc-Signal}{
    \op \in \mathsf{signals\text{-}of}{(\tyC)} \\\\
    \Gamma \types V : A_\op \\
    \Gamma \types P : \tyC 
  }{
    \Gamma \types \tmopout{op}{V}{P} : \tyC
  }
  \quad
  \coopinfer{TyProc-Interrupt}{
    \Gamma \types V : A_\op \\
    \Gamma \types P : \tyC 
  }{
    \Gamma \types \tmopin{op}{V}{P} : \opincomp{op}{\tyC}
  }  
  \end{mathpar}
  \caption{Process typing rules.}
  \label{fig:process-typing-rules}
\end{figure}

The rules \textsc{TyProc-Run} and \textsc{TyProc-Par} capture the earlier 
intuition about the types of processes matching their parallel structure. The rules 
\textsc{TyProc-Signal} and \textsc{TyProc-Interrupt} are similar to the corresponding rules 
from \autoref{fig:computation-typing-rules}.
The \emph{signal annotations} of a process type are calculated as
\[
\mathsf{signals\text{-}of}(\tyrun{X}{\o}{\i}) ~\defeq~ \o
\qquad\qquad
\mathsf{signals\text{-}of}(\typar{\tyC}{\tyD}) ~\defeq~ \mathsf{signals\text{-}of}(\tyC) \cup \mathsf{signals\text{-}of}(\tyD)
\]
and the \emph{action of interrupts} on process types $\opincomp{op}{\tyC}$ extends the action on effect annotations as
\[
\opincomp{op}(\tyrun{X}{\o}{\i}) 
~\defeq~
X \att (\opincomp {op} {(\o , \i)})
\qquad\qquad
\opincomp{op}(\typar{\tyC}{\tyD}) 
~\defeq~
\typar{(\opincomp{op}{\tyC})}{(\opincomp{op}{\tyD})}
\]
by propagating the interrupt towards the types of individual computations. 
We then have:

\begin{lemma}
\label{lemma:signals-of-interrupt-action}
For any process type $\tyC$ and interrupt $\op$, we have that $\mathsf{signals\text{-}of}(\tyC) \order O \pi_1\, (\opincomp{op}{\tyC})$.
\end{lemma}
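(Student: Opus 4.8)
The plan is to prove the stated inequality by \emph{structural induction} on the process type $\tyC$, since both $\mathsf{signals\text{-}of}(-)$ and the action $\opincomp{op}{(-)}$ on process types are defined by recursion on exactly this structure. Before starting, I would make explicit the mild abuse in the statement: for a process type $\tyD$, the quantity $\pi_1(\tyD)$ is to be read as $\mathsf{signals\text{-}of}(\tyD)$, i.e.\ the signal annotation extracted from $\tyD$; on a run leaf $X \att (\o,\i)$ this is literally $\pi_1$ applied to the effect-annotation pair $(\o,\i)$, which is what justifies the notation.

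First I would handle the base case $\tyC = \tyrun X \o \i$. Here $\mathsf{signals\text{-}of}(\tyC) = \o$, while unfolding the action gives $\opincomp{op}{\tyC} = X \att (\opincomp{op}{(\o,\i)})$, whose signal component is exactly $\pi_1(\opincomp{op}{(\o,\i)})$. Thus the goal reduces to $\o \order O \pi_1(\opincomp{op}{(\o,\i)})$, which is precisely \srefcase{Lemma}{lemma:action}{1}. So the base case closes immediately by appeal to the already-established monotonicity property of the action on effect annotations.

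Next, the inductive step $\tyC = \typar{\tyC_1}{\tyC_2}$. By definition $\mathsf{signals\text{-}of}(\tyC) = \mathsf{signals\text{-}of}(\tyC_1) \cup \mathsf{signals\text{-}of}(\tyC_2)$, and since the action distributes over parallel composition, the signal component of $\opincomp{op}{\tyC}$ is $\mathsf{signals\text{-}of}(\opincomp{op}{\tyC_1}) \cup \mathsf{signals\text{-}of}(\opincomp{op}{\tyC_2})$. Applying the two induction hypotheses together with the fact that $\order O$ is subset inclusion---so that set union is monotone in each argument---yields the desired inclusion.

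In terms of difficulty, there is no real obstacle: all the mathematical content is delegated to \srefcase{Lemma}{lemma:action}{1}, and the inductive step is just monotonicity of union. The only point demanding care is the bookkeeping noted above, namely recognising that $\mathsf{signals\text{-}of}$ composed with the process-level action agrees with $\pi_1$ composed with the annotation-level action at each run leaf; once this is fixed, both cases are routine.
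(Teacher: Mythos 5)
Your proof is correct and matches the intended argument: the paper states this lemma without an explicit proof (deferring to its \pl{Agda} formalisation), and the natural justification is exactly your structural induction on $\tyC$, delegating the leaf case $\tyrun{X}{\o}{\i}$ to \srefcase{Lemma}{lemma:action}{1} and closing the parallel case by monotonicity of union. Your clarification that $\pi_1$ applied to a process type should be read as $\mathsf{signals\text{-}of}$ (agreeing with the literal $\pi_1$ on the annotation pair at run leaves) is also the right reading of the statement's notation.
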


It is worth noting that \autoref{fig:process-typing-rules} does not include an analogue  
of \textsc{TyComp-Subsume}. This is 
deliberate because as we shall see below, \emph{process types reduce}
in conjunction with the processes they are assigned to, and the outcome   
is generally neither a sub- nor supertype of the original type.

\subsection{Type Safety}
\label{sec:basic-calculus:type-safety:processes}

We conclude the meta-theory of \lambdaAEff~by proving type safety 
for its parallel part. Analogously to \autoref{sec:basic-calculus:type-safety}, 
we once again split type safety into separate proofs of \emph{progress} 
and \emph{preservation}.

\subsubsection{Progress}

We characterise the \emph{result forms} of parallel processes 
by defining two judgements, $\ProcResult P$ and $\ParResult P$, 
and by using the judgement $\RunResult {\Psi} {M}$ from 
\autoref{sec:basic-calculus:type-safety}, as follows:
\begin{mathpar}
  \coopinfer{}{
    \ProcResult {P}
  }{
    \ProcResult {\tmopout {op} V P}
  }
  \qquad
  \coopinfer{}{
    \ParResult {P}
  }{
    \ProcResult {P}
  }
  \qquad
  \coopinfer{}{
    \RunResult {\emptyset} {M}
  }{
    \ParResult {\tmrun M}
  }
  \qquad
  \coopinfer{}{
    \ParResult P \\
    \ParResult Q
  }{
    \ParResult {\tmpar P Q}
  }
\end{mathpar}
These judgements express that a process $P$ is in a (top-level) 
result form $\ProcResult {P}$ when, considered as a tree, it has a shape in which 
\emph{all} signals are towards the root, parallel compositions are in 
the intermediate nodes, and individual computation results are at the leaves. 
Importantly, the computation results $\RunResult {\emptyset} {M}$ we use here are those from 
which signals have been propagated out of 
(see \autoref{sec:basic-calculus:type-safety:progress}). 
The finality of these results forms is then captured by the next lemma.

\begin{lemma}
\label{lemma:results-are-final:processes}
Given a process $P$ such that $\ProcResult {P}$, then there exists no $Q$ such that $P \reduces Q$.
\end{lemma}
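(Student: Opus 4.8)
The plan is to prove the statement by mutual induction on the derivations of $\ProcResult{P}$ and $\ParResult{P}$, strengthening the induction hypothesis so that it simultaneously establishes that $\ParResult{P}$ also implies that $P$ admits no reduction; this mirrors the proof of \sref{Lemma}{lemma:results-are-final} for computations. In each case I would invert on the last rule of a hypothetical reduction $P \reduces Q$, using that the reduction relation of \autoref{fig:processes} is generated by its base rules closed under the evaluation-context rule. Hence it suffices to (a) check that no base rule fires at the root of $P$, classified by the head constructor of each rule's left-hand side, and (b) appeal to the induction hypothesis whenever a non-trivial evaluation context $\F$ descends into an immediate subprocess of $P$.

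For the base case $\ParResult{\tmrun M}$, derived from $\RunResult{\emptyset}{M}$, note first that $\tmrun{(-)}$ is not an evaluation-context former, so the only way to reduce inside $M$ is the dedicated individual-computations rule $\tmrun M \reduces \tmrun N$; this requires $M \reduces N$, which is impossible because $\RunResult{\emptyset}{M}$ entails $\CompResult{\emptyset}{M}$, and the latter forbids reduction by \sref{Lemma}{lemma:results-are-final}. The only other base rule with left-hand side of the form $\tmrun{(-)}$ is signal hoisting, which would require $M$ to be a leading signal $\tmopout{op}{V}{M'}$; but the $\RunResult{-}{-}$ judgement produces only $\tmreturn{V}$, interrupt handlers, and $\awaiting{p}{M}$-blocked forms, none of which exposes a leading signal. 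For $\ParResult{\tmpar{P}{Q}}$, the only base rules whose left-hand side is a parallel composition are the two broadcast rules, and each requires one of $P$ or $Q$ to expose a leading signal; since $\ParResult{P}$ and $\ParResult{Q}$ again forbid leading signals, neither fires, while the contexts $\tmpar{\F}{Q}$ and $\tmpar{P}{\F}$ descend into $P$ or $Q$ and are blocked by the induction hypothesis. Finally, for $\ProcResult{\tmopout{op}{V}{P}}$ no base rule has an outgoing signal at its root, and the sole applicable context $\tmopout{op}{V}{\F}$ descends into $P$, closed off by the induction hypothesis; the embedding of $\ParResult{P}$ into $\ProcResult{P}$ is discharged directly by the hypothesis for $\ParResult{-}$.

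I expect the only genuinely subtle point to be the bookkeeping of where signals may legitimately appear, which is exactly the reason for splitting the result predicate into $\ProcResult{-}$ and $\ParResult{-}$. The broadcast rules are the only reductions available to a parallel composition, and they fire precisely when a component has hoisted a signal to its outer boundary; by admitting exposed signals in $\ProcResult{-}$ but not in $\ParResult{-}$, a result-form process keeps every parallel node free of immediately available signals, so no broadcast can occur. The corresponding fact at the computation level is that $\RunResult{\emptyset}{M}$ forbids leading signals in $M$, which is what blocks signal hoisting out of $\tmrun M$. Once the base rules are organised by the head constructor of their left-hand sides, every remaining case is a routine inversion.
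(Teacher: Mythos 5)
Your proposal is correct, and it is essentially the argument the paper relies on: the paper states this lemma without an inline proof (deferring to its \pl{Agda} formalisation), and the natural proof is exactly your mutual induction on $\ProcResult{P}$ and $\ParResult{P}$, where the base case $\tmrun M$ is closed off by \sref{Lemma}{lemma:results-are-final} together with the observation that $\RunResult{\emptyset}{M}$ exposes no leading signal (so signal hoisting cannot fire), and the parallel case is closed off because $\ParResult{-}$ forbids leading signals (so neither broadcast rule fires), with all remaining cases handled by inversion on the head constructor against the base rules and evaluation contexts of \autoref{fig:processes}. Your strengthening of the induction hypothesis to cover $\ParResult{-}$, and your isolation of the signal bookkeeping as the one subtle point, match the intended structure of the development.
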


We are now ready to state and prove the \emph{progress theorem} for the parallel part of \lambdaAEff.

\begin{theorem}
Given a well-typed closed process $\types P : \tyC$,  
then either (i) there exists a process $Q$ such that $P \reduces Q$, or
(ii) the process $P$ is already in a (top-level) result form, i.e., we have $\ProcResult {P}$.
\end{theorem}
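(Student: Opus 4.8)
The plan is to prove this by induction on the derivation of $\types P : \tyC$, which---since \autoref{fig:process-typing-rules} contains no subsumption rule---amounts to a case analysis on the outermost process former, with exactly one case per typing rule. In each case I would apply the induction hypothesis to the immediate subprocess(es) and argue that either some subprocess already reduces, in which case the evaluation-context rule lifts that step to $P$ and establishes (i); or all subprocesses are in result form, in which case $P$ either reduces by hoisting, broadcasting, or propagating a signal or interrupt, or is itself a result, establishing (ii). The only external input is the sequential progress result, \autoref{corollary:progress}, which I would invoke in the \textsc{TyProc-Run} case.

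Before the case analysis I would record a small structural observation about the result-form judgements: unfolding their definitions shows that $\ProcResult{P}$ holds exactly when $P$ is a (possibly empty) stack of outgoing signals wrapping a process $P^\circ$ with $\ParResult{P^\circ}$, and that every such $P^\circ$ is a binary tree of parallel compositions whose leaves are individual computations $\tmrun M$ with $\RunResult{\emptyset}{M}$. Consequently, a process in result form is headed either by a signal, by $\tmrun{(-)}$, or by $\tmpar{(-)}{(-)}$; I would keep this dichotomy (``signal on top, or $\ParResult{(-)}$'') handy throughout.

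For \textsc{TyProc-Run}, with $P = \tmrun M$ and $\types M : \tycomp{X}{(\o,\i)}$, \autoref{corollary:progress} gives either $M \reduces N$, whence $\tmrun M \reduces \tmrun N$ and (i) holds, or $\CompResult{\emptyset}{M}$; in the latter case I would sub-analyse whether $M$ is headed by a signal---if $M = \tmopout{op}{V}{M'}$, signal hoisting yields $\tmrun M \reduces \tmopout{op}{V}{\tmrun{M'}}$ and (i) holds, otherwise $\RunResult{\emptyset}{M}$ holds, hence $\ParResult{\tmrun M}$ and $\ProcResult{\tmrun M}$, giving (ii). For \textsc{TyProc-Par}, with $P = \tmpar{P_1}{P_2}$, if either $P_i$ reduces I lift the step through $\tmpar{\F}{Q}$ or $\tmpar{P}{\F}$; otherwise both are in result form, and by the dichotomy, if $P_1$ (resp.\ $P_2$) is headed by a signal I fire the corresponding broadcast rule, obtaining (i), while if neither is, both satisfy $\ParResult{(-)}$, so $\ParResult{\tmpar{P_1}{P_2}}$ and thus $\ProcResult{\tmpar{P_1}{P_2}}$ hold, obtaining (ii). \textsc{TyProc-Signal} is immediate: apply the induction hypothesis to $P'$ and either lift its step through $\tmopout{op}{V}{\F}$, or wrap its result form using the rule $\ProcResult{P'} \Rightarrow \ProcResult{\tmopout{op}{V}{P'}}$.

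The \textsc{TyProc-Interrupt} case is where the structural observation pays off and, I expect, where the only care is needed: with $P = \tmopin{op}{V}{P'}$, if $P'$ reduces I lift through $\tmopin{op}{V}{\F}$, and if $P'$ is in result form then by the dichotomy its head is one of $\tmopout{op}{W}{(-)}$, $\tmrun{(-)}$, or $\tmpar{(-)}{(-)}$---and each of these matches exactly one of the three process-level interrupt-propagation rules, so $\tmopin{op}{V}{P'}$ always takes a step and (i) holds. The main obstacle, such as it is, is thus not any single hard argument but getting the bookkeeping of the four mutually referring judgements $\CompResult{\Psi}{M}$, $\RunResult{\Psi}{M}$, $\ParResult{P}$, and $\ProcResult{P}$ exactly right, so that ``no subprocess reduces'' genuinely forces a hoist, a broadcast, or an interrupt-propagation step whenever $P$ is not already a result; the structural observation is precisely what makes each of these residual steps available.
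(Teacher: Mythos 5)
Your proposal is correct and follows essentially the same route as the paper: induction on the derivation of $\types P : \tyC$, invoking the sequential progress result (\autoref{corollary:progress}) in the \textsc{TyProc-Run} case. The paper states this proof only as a sketch; your case analysis and the dichotomy on result forms (signal-headed versus $\ParResult{(-)}$) correctly fill in the details it leaves implicit.
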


\begin{proof}
The proof is standard and proceeds by induction on the derivation of $\types P : \tyC$. 
In the base case, when the derivation ends with the \textsc{TyProc-Run} rule, 
and $P \hspace{-0.05cm}=\hspace{-0.05cm} \tmrun {\hspace{-0.05cm}M}$, we use  
\sref{Corollary}{corollary:progress}.
\end{proof}

\subsubsection{Type Preservation}
First, we note that the broadcast rules in \autoref{fig:processes} introduce new 
inward propagating interrupts in their right-hand sides that originally do not exist in their left-hand sides. As a result, 
compared to the types one assigns to the left-hand sides of these reduction rules, the types assigned to 
their right-hand sides will need to feature corresponding type-level actions of these interrupts.
We formalise this idea using a \emph{process type reduction} relation $\tyC \tyreduces \tyD$, given by
\[
  \coopinfer{}{
  }{
    \tyrun{X}{\o}{\i} \tyreduces \tyrun{X}{\o}{\i} 
  }
  \quad
  \coopinfer{}{
  }{
    X \att \opincompp {ops} {(\o , \i)} \tyreduces X \att \opincompp {ops} {(\opincomp {op} {(\o , \i)})}
  }
  \quad
  \coopinfer{}{
    \tyC \tyreduces \tyC' \\
    \tyD \tyreduces \tyD'
  }{
    \typar{\tyC}{\tyD} \tyreduces \typar{\tyC'}{\tyD'}
  }
\]
where we write $\opincompp {ops} {(\o , \i)}$ for a recursively defined \emph{action of a list of interrupts} on $(\o , \i)$, 
given by
\[
\opincompp {[]} {(\o , \i)} ~\defeq~ (\o , \i)
\qquad
\opincompp {(\op :: \opsym{ops})} {(\o , \i)} ~\defeq~ \opincomp {op} {(\opincompp {ops} (\o , \i))}
\]
Intuitively, $\tyC \tyreduces \tyD$ describes how process types reduce by being acted upon by 
freshly arriving interrupts. While we define the action behaviour only at the leaves of process types (under some 
enveloping sequence of actions), we can prove expected properties for arbitrary process types:

\begin{lemma}
\label{lemma:type-reduction} \mbox{}
\begin{enumerate}
\item Process types can remain unreduced, i.e., $\tyC \tyreduces \tyC$ for any process type $\tyC$.
\item Process types reduce by being acted upon, i.e., $\tyC \tyreduces \opincomp {op} \tyC$ for any type $\tyC$ and interrupt $\op$.
\item Process types can reduce under enveloping actions, i.e., $\opincomp {op} \tyC \tyreduces \opincomp {op} \tyD$ when $\tyC \tyreduces \tyD$.
\item Process type reduction can introduce signals, i.e., $\mathsf{signals\text{-}of} (\tyC) \order O \mathsf{signals\text{-}of} (\tyD)$
when $\tyC \tyreduces \tyD$.
\end{enumerate}
\end{lemma}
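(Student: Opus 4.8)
The plan is to prove parts (1) and (2) by induction on the structure of the process type $\tyC$, and parts (3) and (4) by induction on the derivation of $\tyC \tyreduces \tyD$. In all four parts the parallel-composition case is discharged uniformly by the congruence rule for $\typar{(-)}{(-)}$ together with the induction hypotheses, using that both $\opincomp{op}{(-)}$ and $\mathsf{signals\text{-}of}(-)$ are defined componentwise on $\typar{\tyC}{\tyD}$ and that set union is monotone. For part (1), the leaf case $\tyC = \tyrun{X}{\o}{\i}$ is exactly the reflexive leaf rule. For part (2), the leaf case is an instance of the leaf-action rule with the empty action list, since $\opincompp{[]}{(\o,\i)} = (\o,\i)$ turns it into $X \att (\o,\i) \tyreduces X \att (\opincomp{op}{(\o,\i)}) = \opincomp{op}{(\tyrun{X}{\o}{\i})}$. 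For part (3), the interesting case is the leaf-action rule, where $\tyC = X \att \opincompp{ops}{(\o,\i)}$ reduces to $\tyD = X \att \opincompp{ops}{(\opincomp{op'}{(\o,\i)})}$ for the interrupt $\op'$ driving the reduction; here the recursive definition of the list action does the work, as the enveloping action $\op$ absorbs into the action list via $\opincomp{op}{(\opincompp{ops}{(\o,\i)})} = \opincompp{(\op :: \opsym{ops})}{(\o,\i)}$, so that $\opincomp{op}{\tyC} \tyreduces \opincomp{op}{\tyD}$ is again an instance of the same leaf-action rule, now with action list $\op :: \opsym{ops}$.

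Part (4) is where the actual content lies. Reducing to the leaf-action case as above (the reflexive case being trivial and the parallel case following from monotonicity of $\cup$), the goal becomes
\[
  \pi_1(\opincompp{ops}{(\o,\i)}) \ \order O\ \pi_1(\opincompp{ops}{(\opincomp{op}{(\o,\i)})}).
\]
I would first record the easy fact (by induction on $\opsym{ops}$) that applying $\opincompp{ops}{(-)}$ after a single action $\opincomp{op}{(-)}$ coincides with the action list obtained by appending $\op$ at the innermost end of $\opsym{ops}$; this exhibits the right-hand side as the \emph{same} action sequence extended by one extra, innermost interrupt action. The intended intuition is that each interrupt action contracts a matched handler into the root, adding that handler's signals and merging its nested handlers outward by $\cup$, so that performing one additional innermost contraction can only expose more handlers to the remaining actions, whence the accumulated signals can only grow.

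The main obstacle is pinning down the invariant that makes this monotonicity precise. The naive strategy --- deriving the inequality from a single-step monotonicity $(\o,\i) \order{O \times I} \opincomp{op}{(\o,\i)}$ propagated through $\opincompp{ops}{(-)}$ --- fails already at its base, because $(\o,\i) \order{O \times I} \opincomp{op}{(\o,\i)}$ does \emph{not} hold in general: at the very name $\op$ being acted upon, the handler $\i(\op) = (\o',\i')$ is dissolved and its slot is overwritten by the strictly deeper annotation $\i'(\op)$, which need not dominate $(\o',\i')$. The resolution is that, although the $\op$-handler vanishes pointwise, its content is \emph{globally absorbed} into the result: \srefcase{Lemma}{lemma:action}{2} gives $(\o',\i') \order{O \times I} \opincomp{op}{(\o,\i)}$, \srefcase{Lemma}{lemma:action}{3} shows every \emph{other} handler is preserved pointwise, and \srefcase{Lemma}{lemma:action}{1} supplies the base-level growth of signals. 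I would therefore carry through the induction on $\opsym{ops}$ a weaker, absorption-aware relation between the two evolving annotations that, for every handler reachable in $\opincompp{ops}{(\o,\i)}$, records a dominating handler somewhere in $\opincompp{ops}{(\opincomp{op}{(\o,\i)})}$ --- in effect a simulation of handler-trees that is preserved by each further action and that entails the desired first-component inequality. In the greatest-fixed-point presentation of $I$ this invariant is naturally coinductive, whereas in the least-fixed-point (finite-nesting) presentation used in the \pl{Agda} development it reduces to a well-founded induction on the depth of handler nesting, which I expect to be the cleanest route to a fully formal proof. The remaining bookkeeping across all four parts is routine.
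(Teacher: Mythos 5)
Your proposal is correct and takes essentially the same route as the paper: parts (1)--(3) go by the same inductions, with the leaf-action case of (3) resolved exactly by the observation the paper itself highlights (the rule is stated under an arbitrary enveloping action list, so the new action absorbs as $\op :: \opsym{ops}$), and part (4) is reduced to precisely the paper's \sref{Lemma}{lemma:signal-inclusion-lists-of-interrupts}. Your additional analysis of that lemma---that naive single-step monotonicity $(\o,\i) \order {O \times I} \opincomp{op}{(\o,\i)}$ fails at the dissolved handler slot, and that an absorption-aware invariant built from \sref{Lemma}{lemma:action} is needed---goes beyond what the paper presents in the text (it defers this proof to the \pl{Agda} formalisation), and is a sound sketch of that missing argument.
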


For the proof of \srefcase{Lemma}{lemma:type-reduction}{3}, it is important that we 
introduce interrupts under an arbitrary enveloping sequence of interrupt actions, 
and not simply as 
$X \att {(\o , \i)} \tyreduces X \att (\opincomp {op} {(\o , \i)})$.
Further, the proof of \srefcase{Lemma}{lemma:type-reduction}{4} requires us to generalise \srefcase{Lemma}{lemma:action}{1} to lists of enveloping actions:

\begin{lemma}
\label{lemma:signal-inclusion-lists-of-interrupts}
$\pi_1\, (\opincompp {ops} {(\o,\i)}) \order O \pi_1\, (\opincompp {ops} {(\opincomp {op} {(\o,\i)})})$
\end{lemma}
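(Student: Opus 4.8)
The plan is to prove the statement by induction on the list $\opsym{ops}$, viewing it as a generalisation of \srefcase{Lemma}{lemma:action}{1} from a single interrupt to a list. First I would record that $\opincompp{ops}{(\opincomp{op}{(\o,\i)})}$ simply performs one extra innermost action by $\op$ before the actions in $\opsym{ops}$, so that the statement compares the signal component obtained by firing $\opsym{ops}$ against the one obtained by firing $\op$ and then $\opsym{ops}$. The base case $\opsym{ops} = []$ is then exactly \srefcase{Lemma}{lemma:action}{1}, since both sides reduce to comparing $\o$ with $\pi_1\,(\opincomp{op}{(\o,\i)})$.

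The inductive step $\opsym{ops} = \op' \mathbin{::} \opsym{ops}'$ is where care is needed. Writing $\opincompp{op' :: ops'}{(-)} = \opincomp{op'}{(\opincompp{ops'}{(-)})}$, one sees that an induction hypothesis about first components alone is too weak: the outermost action $\opincomp{op'}{(-)}$ inspects the \emph{second} component at $\op'$, so knowing only $\pi_1\,(\opincompp{ops'}{(\o,\i)}) \order O \pi_1\,(\opincompp{ops'}{(\opincomp{op}{(\o,\i)})})$ does not fix the relationship of the two results after the final action. The naive fix — strengthening to $\order{O \times I}$ and using monotonicity of the action — fails too, because $(\o,\i) \order{O \times I} \opincomp{op}{(\o,\i)}$ is false in general: the action \emph{shrinks} the second component at the fired name $\op$ (it is reset to $\bot$ before being joined with $\i'$), even though it never shrinks the first component.

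To carry the induction I would instead introduce an auxiliary \emph{signal-domination} relation $\lesssim$ on $O \times I$, defined coinductively as the largest relation such that $(\o_1,\i_1) \lesssim (\o_2,\i_2)$ implies (i) $\o_1 \order O \o_2$, and (ii) for every $\op'' \in \sig$ with $\i_1\,(\op'') = (\o_a,\i_a)$ we have $(\o_a,\i_a) \lesssim (\o_b \cup \o_2, \i_b)$, where $(\o_b,\i_b) \defeq \i_2\,(\op'')$ if defined and $(\emptyset, \bot)$ otherwise. Intuitively, $(\o_1,\i_1) \lesssim (\o_2,\i_2)$ records that any signals the left annotation could ever accumulate through future firings are already covered by the right — either directly, or because they have already been absorbed into its top-level component $\o_2$. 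I would then establish three facts: (a) $\lesssim$ refines $\order O$ on first components, immediately from (i); (b) $\lesssim$ is preserved by the action, i.e.\ $(\o_1,\i_1) \lesssim (\o_2,\i_2)$ implies $\opincomp{c}{(\o_1,\i_1)} \lesssim \opincomp{c}{(\o_2,\i_2)}$ for every $c \in \sig$; and (c) $(\o,\i) \lesssim \opincomp{op}{(\o,\i)}$, a strengthening of \srefcase{Lemma}{lemma:action}{1}. Given these, a routine induction on $\opsym{ops}$ (using (b) in the step and (a) at the leaves) yields $\opincompp{ops}{(\o,\i)} \lesssim \opincompp{ops}{(\opincomp{op}{(\o,\i)})}$, and projecting first components via (a) gives the claim.

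The main obstacle is verifying (b) and (c), where the coinductive definition must be matched against the case split in the definition of $\opincomp{c}{(-)}$. The crucial point in both is the \emph{absorption} phenomenon already visible in \srefcase{Lemma}{lemma:action}{1}: firing the handler at a name discards that name's handler signals from the second component but adds them to the first (the $\o \cup \o'$ summand), and promotes its sub-handlers into the top-level mapping via the join $\cup$. Condition (ii), with its augmentation $\o_b \cup \o_2$, is engineered precisely so that this transfer of signals from ``pending in a handler'' to ``already issued at top level'' keeps the two annotations related; checking it amounts to using monotonicity of $\cup$ on both $O$ and $I$, together with the (coinductively established) fact that $\lesssim$ is monotone in the first component of its right argument, so that enlarging $\o_2$ can only help.
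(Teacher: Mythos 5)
Your diagnosis of where the naive arguments fail is accurate: induction on first components alone is too weak, and strengthening to $\order {O \times I}$ breaks because the action can shrink the interrupt component at the fired name. The genuine gap is in your repair: fact (c), that $(\o,\i) \lesssim \opincomp{op}{(\o,\i)}$, is \emph{false} for the relation you define. Take three distinct names $\opsym{op}, \opsym{op'}, \opsym{s} \in \sig$ and set
\[
\i' \defeq \{\, \opsym{op'} \mapsto (\{\opsym{s}\}, \{\}) \,\},
\qquad
\i \defeq \{\, \opsym{op} \mapsto (\emptyset, \i') \,\},
\qquad
\o \defeq \emptyset ,
\]
where $\{\}$ denotes the everywhere-$\bot$ map. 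Then $\opincomp{op}{(\emptyset,\i)} = (\emptyset \cup \emptyset ,\, \i[\opsym{op} \mapsto \bot] \cup \i') = (\emptyset, \i')$. Suppose some relation satisfies your closure conditions (i)--(ii) and contains the pair $\big((\emptyset,\i),(\emptyset,\i')\big)$. Unfolding (ii) at $\opsym{op}$: here $\i\,(\opsym{op}) = (\emptyset,\i')$ while $\i'$ is undefined at $\opsym{op}$, so $(\o_b,\i_b) = (\emptyset,\{\})$, and the relation must also contain $\big((\emptyset,\i'),(\emptyset \cup \emptyset,\{\})\big)$. Unfolding (ii) once more, at $\opsym{op'}$: here $\i'(\opsym{op'}) = (\{\opsym{s}\},\{\})$ while the empty map is undefined at $\opsym{op'}$, so the relation must contain $\big((\{\opsym{s}\},\{\}),(\emptyset,\{\})\big)$, whose condition (i) demands that $\{\opsym{s}\}$ be contained in $\emptyset$ --- a contradiction. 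Hence no post-fixed point of your conditions contains $\big((\o,\i),\opincomp{op}{(\o,\i)}\big)$, so neither does the largest one, and (c) fails. The lemma itself is perfectly fine on this instance: taking $\opsym{ops} = [\opsym{op'}]$, the left-hand side is $\pi_1\,(\opincomp{op'}{(\emptyset,\i)}) = \emptyset$ and the right-hand side is $\pi_1\,(\opincomp{op'}{(\emptyset,\i')}) = \{\opsym{s}\}$.

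The failure is structural, not a matter of tuning the augmentation $\o_b \cup \o_2$: condition (ii) discards the right-hand side's \emph{top-level} map $\i_2$ in its recursive call, keeping only the entry $\i_b$ at the inspected name. But the whole point of the action $\opincomp{op}{(-)}$ is that it \emph{promotes} the fired handler's sub-map $\i'$ into the top-level map; after this promotion, the left annotation's pending content under $\opsym{op}$ is covered by the right's top-level entries (above, its entry at $\opsym{op'}$), not by the right's entry under $\opsym{op}$, which is $\bot$. A corrected condition must keep $\i_2[\op'' \mapsto \bot]$ in play, i.e.\ relate the full post-firing pairs $\opincomp{op''}{(\o_1,\i_1)}$ and $\opincomp{op''}{(\o_2,\i_2)}$, which turns your relation into the greatest simulation for the action. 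That makes your facts (a) and (b) hold by definition, but it relocates all the difficulty into (c): exhibiting a concrete post-fixed point containing $\big((\o,\i),\opincomp{op}{(\o,\i)}\big)$ then requires describing the pairs $\big(\opincompp{ops}{(\o,\i)},\, \opincompp{ops}{(\opincomp{op}{(\o,\i)})}\big)$ and verifying signal inclusion for each of them, which is the original lemma again. A route that does close is more elementary: first show the action distributes over componentwise joins, $\opincomp{op''}{(\o_1 \cup \o_2, \i_1 \cup \i_2)}$ being the join of $\opincomp{op''}{(\o_1,\i_1)}$ and $\opincomp{op''}{(\o_2,\i_2)}$; then, when $\i\,(\op) = (\o',\i')$, decompose $(\o,\i)$ and $\opincomp{op}{(\o,\i)}$ over their common part $(\o, \i[\opsym{op} \mapsto \bot])$, reducing the claim to comparing $(\emptyset, \{\opsym{op} \mapsto (\o',\i')\})$ against $(\o',\i')$ under $\opincompp{ops}{(-)}$; finally locate the innermost occurrence of $\opsym{op}$ in $\opsym{ops}$ (if none, the left side contributes no signals) and conclude by strong induction on the length of $\opsym{ops}$, the list-free case being \srefcase{Lemma}{lemma:action}{1}.
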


As in \autoref{sec:basic-calculus:type-safety:preservation}, we again find it useful 
to define a separate \emph{typing judgement for evaluation contexts}, this time written 
$\Gamma \types\!\![\, \tyC \,]~ \F : \tyD$, together with an 
analogue of \sref{Lemma}{lemma:eval-ctx-typing}, which we omit here. Instead, we 
observe that this typing judgement is subject to process type reduction:

\begin{lemma}
\label{lemma:hoisting-and-evaluation-context-types}
Given $\Gamma \types\!\![\, \tyC \,]~ \F \hspace{-0.05cm}:\hspace{-0.05cm} \tyD$ and $\tyC \hspace{-0.05cm}\tyreduces\hspace{-0.05cm} \tyC'$, then there exists $\tyD'$ with 
$\tyD \hspace{-0.05cm}\tyreduces\hspace{-0.05cm} \tyD'$ and $\Gamma \types\!\![\, \tyC' \,]~ \F \hspace{-0.05cm}:\hspace{-0.05cm} \tyD'$.
\end{lemma}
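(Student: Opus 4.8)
The plan is to prove the statement by induction on the structure of the evaluation context $\F$, equivalently on the derivation of $\Gamma \types\!\![\, \tyC \,]~ \F : \tyD$. Since the evaluation-context typing rules mirror the process typing rules of \autoref{fig:process-typing-rules}, threading the hole's type $\tyC$ through each premise, there is exactly one case per production of $\F$. In the base case $\F = [~]$ the result type coincides with the hole type, $\tyD = \tyC$, so given $\tyC \tyreduces \tyC'$ I take $\tyD' = \tyC'$ and both conclusions hold trivially.

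For the parallel cases $\F = \tmpar{\F''}{Q}$ and $\F = \tmpar{P}{\F''}$, the context type is a parallel type $\typar{\tyD_1}{\tyD_2}$ with the hole confined to one branch. I would apply the induction hypothesis to the sub-context $\F''$, obtaining a reduced type for that branch, and keep the other branch fixed using \srefcase{Lemma}{lemma:type-reduction}{1}. The parallel rule for $\tyreduces$ then combines these into $\typar{\tyD_1}{\tyD_2} \tyreduces \typar{\tyD_1'}{\tyD_2}$ (and symmetrically for the other branch), after which reassembling the evaluation-context typing derivation at hole type $\tyC'$ gives the desired $\tyD'$.

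The two interesting cases are the interrupt and signal contexts. For $\F = \tmopin{op}{V}{\F''}$, the context type is $\opincomp{op}{\tyD_1}$, where $\Gamma \types\!\![\, \tyC \,]~ \F'' : \tyD_1$. The induction hypothesis yields $\tyD_1 \tyreduces \tyD_1'$ together with $\Gamma \types\!\![\, \tyC' \,]~ \F'' : \tyD_1'$, and \srefcase{Lemma}{lemma:type-reduction}{3} lifts this to $\opincomp{op}{\tyD_1} \tyreduces \opincomp{op}{\tyD_1'}$; taking $\tyD' = \opincomp{op}{\tyD_1'}$ and reapplying the interrupt rule (whose other premise $\Gamma \types V : A_\op$ is unaffected by the hole type) closes the case.

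For $\F = \tmopout{op}{V}{\F''}$ the result type equals the inner type $\tyD_1$, since signals leave the type unchanged, but the typing rule now carries the side condition $\op \in \mathsf{signals\text{-}of}(\tyD_1)$. After the induction hypothesis supplies $\tyD_1 \tyreduces \tyD_1'$, I must re-establish this condition at the reduced type, i.e.\ $\op \in \mathsf{signals\text{-}of}(\tyD_1')$. This is precisely where \srefcase{Lemma}{lemma:type-reduction}{4} is used: it gives $\mathsf{signals\text{-}of}(\tyD_1) \order O \mathsf{signals\text{-}of}(\tyD_1')$, and since $\order O$ is subset inclusion, membership is preserved; taking $\tyD' = \tyD_1'$ finishes the case. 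I expect this signal case to be the main obstacle, as it is the only one whose typing rule imposes a genuine side condition that must survive process type reduction, and re-establishing it is exactly the purpose for which \srefcase{Lemma}{lemma:type-reduction}{4} was proved.
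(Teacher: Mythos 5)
Your proposal is correct and takes essentially the approach the paper intends: although the paper omits the proof of this lemma (deferring it to the \pl{Agda} formalisation), its surrounding remarks show that \sref{Lemma}{lemma:type-reduction} was set up precisely for this induction on the context typing derivation. In particular, you deploy its parts exactly as designed—part (1) to keep the non-hole parallel branch unreduced, part (3) to push reduction under an interrupt context, and part (4) to re-establish the $\op \in \mathsf{signals\text{-}of}(\tyD_1')$ side condition in the signal case—so there is nothing to correct.
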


We are now ready to state and prove the \emph{type preservation theorem} for the parallel part of \lambdaAEff.

\begin{theorem}
\label{theorem:preservation:processes}
Given a well-typed process $\Gamma \types P : \tyC$, such that $P$ can reduce as 
$P \reduces Q$, then there exists a process type $\tyD$, such 
that the process type $\tyC$ can reduce as $\tyC \tyreduces \tyD$, and we have $\Gamma \types Q : \tyD$.
\end{theorem}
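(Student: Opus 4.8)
The plan is to proceed by induction on the derivation of $P \reduces Q$, treating each process reduction rule of \autoref{fig:processes} in turn, and in every case exhibiting an explicit witness $\tyD$ together with derivations of $\tyC \tyreduces \tyD$ and of $\Gamma \types Q : \tyD$. Throughout I would rely on inversion lemmas for the four process typing rules in \autoref{fig:process-typing-rules}, exactly analogous to the typing inversion lemmas used in the proof of \sref{Theorem}{theorem:preservation}.

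For the rule lifting computation reductions to processes ($\tmrun M \reduces \tmrun N$ from $M \reduces N$), inverting \textsc{TyProc-Run} gives $\Gamma \types M : \tycomp{X}{(\o,\i)}$ with $\tyC = \tyrun X \o \i$; I would simply apply \sref{Theorem}{theorem:preservation} to obtain $\Gamma \types N : \tycomp{X}{(\o,\i)}$, retype with \textsc{TyProc-Run}, and take $\tyD = \tyC$, justified by \srefcase{Lemma}{lemma:type-reduction}{1}. The signal-hoisting rule and the three interrupt-propagation rules (into $\tmrun M$, across a parallel composition, and past an outgoing signal) are all cases in which the process type is left unchanged: here I would invert the relevant typing rules, rebuild the derivation for $Q$ using \textsc{TyProc-Signal}/\textsc{TyProc-Interrupt}, and again take $\tyD = \tyC$ via \srefcase{Lemma}{lemma:type-reduction}{1}. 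The only mild subtlety among these is the rule $\tmopin{op}{V}{\tmopout{op'}{W}{P}} \reduces \tmopout{op'}{W}{\tmopin{op}{V}{P}}$: after pushing the interrupt past the signal I must re-check that $\op'$ remains a permitted signal of the acted-upon type, which is exactly what \sref{Lemma}{lemma:signals-of-interrupt-action} provides.

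The genuinely interesting base cases are the two broadcast rules, where the process type really does reduce. For $\tmpar{\tmopout{op}{V}{P}}{Q} \reduces \tmopout{op}{V}{\tmpar{P}{\tmopin{op}{V}{Q}}}$, inverting \textsc{TyProc-Par} and \textsc{TyProc-Signal} yields $\tyC = \typar{\tyC_1}{\tyD_1}$ with $\op \in \mathsf{signals\text{-}of}(\tyC_1)$, $\Gamma \types V : A_\op$, $\Gamma \types P : \tyC_1$, and $\Gamma \types Q : \tyD_1$. I would type the right-hand side as $\typar{\tyC_1}{(\opincomp{op}{\tyD_1})}$, using \textsc{TyProc-Interrupt} on the freshly introduced interrupt and \textsc{TyProc-Signal} on the outer signal (its side condition holds since $\op$ already lies in $\mathsf{signals\text{-}of}(\tyC_1)$, hence in $\mathsf{signals\text{-}of}(\typar{\tyC_1}{(\opincomp{op}{\tyD_1})})$). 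The required type reduction $\typar{\tyC_1}{\tyD_1} \tyreduces \typar{\tyC_1}{(\opincomp{op}{\tyD_1})}$ then follows from the parallel-congruence rule for $\tyreduces$, combining $\tyC_1 \tyreduces \tyC_1$ (\srefcase{Lemma}{lemma:type-reduction}{1}) with $\tyD_1 \tyreduces \opincomp{op}{\tyD_1}$ (\srefcase{Lemma}{lemma:type-reduction}{2}); the symmetric broadcast rule is handled identically.

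Finally, the evaluation-context rule $\F[P] \reduces \F[Q]$ is the inductive step and the main obstacle, since the type reduction produced by the induction hypothesis on the hole must be transported through the surrounding context. Here I would use the evaluation-context typing judgement $\Gamma \types\!\![\, \tyC' \,]~ \F : \tyC$ and its decomposition lemma (the process analogue of \sref{Lemma}{lemma:eval-ctx-typing}) to split the hypothesis into $\Gamma \types\!\![\, \tyC_1 \,]~ \F : \tyC$ and $\Gamma \types P : \tyC_1$. The induction hypothesis gives $\tyC_1 \tyreduces \tyC_1'$ with $\Gamma \types Q : \tyC_1'$, and then \sref{Lemma}{lemma:hoisting-and-evaluation-context-types} supplies precisely the missing ingredient: a $\tyD$ with $\tyC \tyreduces \tyD$ and $\Gamma \types\!\![\, \tyC_1' \,]~ \F : \tyD$, after which reassembling the context yields $\Gamma \types \F[Q] : \tyD$. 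I expect \sref{Lemma}{lemma:hoisting-and-evaluation-context-types}---the statement that makes the context's type reduce in lockstep with the hole's---to be the crux, with the broadcast cases being the only other place where $\tyD$ differs genuinely from $\tyC$.
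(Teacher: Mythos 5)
Your proposal is correct and takes essentially the same route as the paper's proof: induction on the reduction derivation, with $\tyD = \tyC$ (via \srefcase{Lemma}{lemma:type-reduction}{1}) for all but the broadcast and evaluation-context cases, the interrupt action plus the parallel congruence rule of $\tyreduces$ (via \srefcase{Lemma}{lemma:type-reduction}{2}) for the broadcast rules, \sref{Lemma}{lemma:hoisting-and-evaluation-context-types} combined with the induction hypothesis for the context rule, and \sref{Lemma}{lemma:signals-of-interrupt-action} to discharge the signal side-condition when commuting interrupts past signals. Your write-up is simply a more detailed elaboration of the same argument, including the correct identification of the crux cases.
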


\begin{proof}
The proof proceeds by induction on the derivation of  
$P \reduces Q$, using auxiliary typing inversion lemmas depending on 
the structure forced upon $P$ by the last rule used in $P \reduces Q$.
For all but the broadcast and evaluation context rules, we can pick $\tyD$ to be $\tyC$ and use 
\srefcase{Lemma}{lemma:type-reduction}{1}.
For the broadcast rules, we define $\tyD$ by introducing the corresponding 
interrupt, and build $\tyC \tyreduces \tyD$ using the parallel composition 
rule together with \srefcase{Lemma}{lemma:type-reduction}{2}.
For the evaluation context rule, we use \sref{Lemma}{lemma:hoisting-and-evaluation-context-types}
in combination with the induction hypothesis. Finally, in order to discharge effects-related side-conditions 
when commuting interrupts with signals, 
we use \sref{Lemma}{lemma:signals-of-interrupt-action}.
\end{proof}


\section{Asynchronous Effects in Action}
\label{sec:applications}

We now show some examples of the kinds of programs one can write in \lambdaAEff.
Similarly to \autoref{sec:overview:runningexample},   
we again allow ourselves access to mutable references, and use 
generic versions $\tmopoutgen {op} V$ of signals.

\subsection{Guarded Interrupt Handlers}
\label{sec:applications:guarder-handlers}

Before diving into the examples, we note that we often want the 
triggering of interrupt handlers to be 
based on not only the names of interrupts, but also the payloads that they carry.
In order to express such more fine-grained triggering behaviour, we shall use a
\emph{guarded interrupt handler}: 
\begin{lstlisting}
promise (op x when guard |-> comp) as p in cont
\end{lstlisting}
which is simply a syntactic sugar for the following interrupt handler that recursively reinstalls 
itself until the boolean \ls$guard$ becomes true, in which case it executes the handler code \ls$comp$:
\begin{lstlisting}
let rec waitForGuard () =
    promise (op x |-> if guard then comp else waitForGuard ()) as p' in return p'
in
let p = waitForGuard () in cont
\end{lstlisting}
Here, \ls$x$ is bound both in \ls$guard$ and \ls$comp$. Further, if \ls$comp$ has type $\tycomp{\typromise X}{(\o',\i')}$
and \ls$cont$ has type $\tycomp{Y}{(\o,\i)}$, such that $\i\, (\op) = (\o',\i')$, then we can assign the entire computation
the type $\tycomp{Y}{(\o,\i \cup \i_h)}$, where the effect annotation $\i_h$ is the least fixed point of the map
$\i'' \mapsto \{ \op \mapsto (\o',\i' \cup \i'') \} : I \to I$. Observe that some of the recursive encoding leaks into the 
type of the entire computation via $\i_h$.

Note that regardless whether \ls$guard$ is true, every interrupt is propagated into \ls$cont$.
To typecheck their definition, and to ensure that guarded interrupt handlers
are non-blocking, it is crucial that the handler code of 
ordinary interrupt handlers returns promise-typed values, as noted in \autoref{sect:typing-rules}.

\subsection{Preemptive Multi-Threading}
\label{sec:applications:multithreading}

Multi-threading remains one of the most exciting applications of algebraic effects, with the possibility of expressing 
many evaluation strategies being the main reason for the extension of \pl{Multicore OCaml} with effect handlers~\cite{Dolan:MulticoreOCaml}.
These evaluation strategies are however \emph{cooperative} in nature, where each thread needs to explicitly yield back 
control, stalling other threads until then. 

While it is possible to also simulate \emph{preemptive multi-threading} within the conventional treatment of algebraic effects, 
it requires a low-level access to the specific runtime environment, so as to inject 
yields into the currently running computation~\cite{Dolan:MulticoreOCaml}.
In contrast, implementing preemptive multi-threading in \lambdaAEff~is quite straightforward, and importantly, 
possible within the language itself---the injections into the running computation 
take the form of incoming interrupts.

For this, let us consider two interrupts, $\opsym{stop} : \tyunit$ and $\opsym{go} : \tyunit$, that communicate to a thread whether to \emph{pause} or 
\emph{resume} execution. These interrupts can originate from a timer process we run in parallel.

At the core of our implementation of preemptive multi-threading is
the recursive function
\begin{lstlisting}
let rec waitForStop () =
    promise (stop _ |->
        promise (go _ |-> return <<()>>) as p in (await p until <<_>> in waitForStop ())
    ) as p' in return p'
\end{lstlisting}
which first installs an interrupt handler for $\opsym{stop}$, letting subsequent computations run their course. Once 
the $\opsym{stop}$ interrupt arrives, the interrupt handler for it is triggered and the next one for $\opsym{go}$ gets 
installed. In contrast to the interrupt handler for $\opsym{stop}$, the one for $\opsym{go}$ starts awaiting 
the (unit) promise \ls$p$. This means that any subsequent computations are blocked until a $\opsym{go}$ interrupt 
is received, after which we recursively reinstall the interrupt handler for $\opsym{stop}$ and repeat the cycle.

To \emph{initiate the preemptive behaviour} for some computation \ls{comp}, we simply run the program
\begin{lstlisting}
waitForStop (); comp
\end{lstlisting}
The algebraicity reduction rules for interrupt handlers ensure that they propagate out of \ls{waitForStop} 
and encompass the entire computation, including \ls{comp}.
Observe that in contrast to usual effect handler based encodings of multi-threading, \ls$waitForStop$ does 
not need any access to a thunk \linebreak \lstinline{fun () |-> comp} representing the threaded computation. 
In particular, the given computation \ls$comp$ can be completely unaware of the multi-threaded behaviour, 
both in its definition and its type.

This approach can be easily extended to multiple threads, by using interrupts' payloads to communicate thread IDs. To this end, 
we can consider interrupts $\opsym{stop} : \tyint$ and $\opsym{go} : \tyint$, and define 
\begin{lstlisting}
let rec waitForStop threadID =
    promise (stop threadID' when threadID = threadID' |->
        promise (go threadID' when threadID = threadID' |-> return <<()>>) as p in
        await p until <<_>> in waitForStop threadID
    ) as p' in return p'
\end{lstlisting}
using guarded interrupt handlers, and 
conditioning their triggering based on the received IDs.

\subsection{Remote Function Calls}
\label{sec:applications:remotecall}

One of the main uses of asynchronous computation is to offload the execution of 
\emph{long-running functions} $f \!: \tyfun{A}{\tycomp{B}{(\o,\i)}}$ to remote processes. 
Below we show how to implement this in \lambdaAEff.

One invokes a remote function by issuing a signal named $\opsym{call}$ with the \emph{function's 
argument}, and then awaits an interrupt named $\opsym{result}$ with the \emph{function's result},  
with all effects specified by $(\o,\i)$ happening at the callee site. The caller then calls 
such a remote function through a wrapper \ls$callWith$, 
which issues the $\opsym{call}$ signal, installs a handler for the $\opsym{result}$ 
interrupt, and returns a thunk that awaits the function's result.
For instance, one may then use remote functions in their code as
\begin{lstlisting}
let subtally = callWith "SELECT count(col) FROM table WHERE cond" in
let tally = callWith "SELECT count(col) FROM table" in
printf "Percentage: %d" (100 * subtally () / tally ())
\end{lstlisting}

To avoid the results of earlier remote function calls from fulfilling the promises of 
later ones, we assign to each call a unique identifier, and communicate 
those in payloads. We implement these unique identifiers using a counter.
For a remote function $f : \tyfun{A}{\tycomp{B}{(\o,\i)}}$, we 
type the signals and interrupts as $\opsym{call} : \typrod{A}{\tysym{int}}$ and $\opsym{result} : \typrod{B}{\tysym{int}}$.
The \emph{caller site} function \ls$callWith$ is then defined as
\begin{lstlisting}
let callWith x =
    let callNo = !callCounter in callCounter := !callCounter + 1;
    send call (x, callNo);
    promise (result (y, callNo') when callNo = callNo' |-> return <<y>>) as resultPromise in
    return (fun () -> await resultPromise until <<resultValue>> in return resultValue)
\end{lstlisting}
After issuing the $\opsym{call}$ signal, \ls$callWith$ installs a guarded interrupt handler 
for the corresponding $\opsym{result}$ interrupt, and then returns a function that, 
when called, awaits the result of the remote call.

At the \emph{callee site}, we simply install an interrupt handler that executes the function in question, 
issues an outgoing signal with the function's result, and then recursively reinstalls itself, as follows:
\begin{lstlisting}
let remote f =
    let rec loop () =
        promise (call (x, callNo) |-> let y = f x in send result (y, callNo); loop ()) as p in return p
    in loop ()
\end{lstlisting}

Unlike effect handlers, our interrupt handlers have very limited control over the execution of 
their continuation. However, we can still simulate \emph{cancellations of asynchronous 
computations} by awaiting a promise that will never be fulfilled. We achieve this with the help of the function
\begin{lstlisting}
let awaitCancel callNo runBeforeStall =
    promise (cancel callNo' when callNo = callNo' |->
        promise (dummy () |-> return <<()>>) as dummyPromise in
        runBeforeStall ();
        await dummyPromise until <<_>> return <<()>>
    ) as _ in return ()
\end{lstlisting}
which takes the identifier of the remote function call that we want to make cancellable, and a thunked 
computation to run before the continuation is stalled. We can then extend 
the callee site with cancellable function calls by invoking \lstinline{awaitCancel} before 
we start executing the long-running computation \lstinline{f x}. In particular, 
we change the interrupt handler code in \ls$remote f$ to read as follows: 
\begin{lstlisting}
  call (x, callNo) |-> awaitCancel callNo loop; let y = f x in send result (y, callNo); loop ()
\end{lstlisting}

However, if left as is, cancelling one call would cancel all
unfinished remote function calls because they would be part of the stalled continuation. 
To overcome this, we run the callee site in parallel with an auxiliary process (which we omit here) that
reacts to a $\opsym{cancel}$ interrupt by \emph{reinvoking these unfinished calls} (minus the cancelled one) 
by reissuing the corresponding $\opsym{call}$ signals, which then get propagated to the callee site, and to  
the \lstinline{loop ()} we run in \lstinline{awaitCancel callNo loop} before stalling.

We note that the cancelled computation is only \emph{perpetually stalled}, but not discarded completely, leading 
to a memory leak. We conjecture that extending \lambdaAEff~with effect handlers that  
have greater control over the continuation could lead to a more efficient code for the callee site.
We also conjecture that a future extension of \lambdaAEff~with dynamic process creation would 
eliminate the need for the auxiliary reinvoker process, because then the callee site could create a 
new process for every remote function call it receives, and each $\opsym{cancel}$ interrupt would 
stall only one of such (sub-)processes. 

\subsection{Runners of Algebraic Effects}
\label{sec:applications:runners}

Next, we use \lambdaAEff~to implement a parallel variant 
of \emph{runners of algebraic effects} \cite{Ahman:Runners}. These are a 
natural mathematical model and programming abstraction for resource management based on 
algebraic effects, and correspond to effect handlers that apply continuations (at most) 
once in a tail call position.
In a nutshell, for a signature of operation 
symbols $\op : A_\op \to B_\op$, a \emph{runner} $\mathcal{R}$ comprises a family of stateful functions 
$\overline{\op}_{\mathcal{R}} : A_\op \to R \Rightarrow B_\op \times R$, 
called \emph{co-operations}, where $R$ is the type of \emph{resources} that the runner manipulates.
In the more general setting of \citet{Ahman:Runners}, the co-operations also model other, external 
effects, such as native calls to the operating system, and can furthermore raise  
exceptions---all of which we shall gloss over here.

Given a runner $\mathcal{R}$, \citet{Ahman:Runners} provide the programmer with a construct
\vspace{-0.05em}
\[
\tmkw{using}~\mathcal{R}~\tmkw{@}~V_{\text{init}}~\tmkw{run}~M~\tmkw{finally}~\{ \tmreturn x ~\tmkw{@}~ r_{\text{fin}} \mapsto N  \}
\vspace{-0.05em}
\]
which runs $M$ using $\mathcal{R}$, with resources initially set to $V_{\text{init}}$; and 
finalises the return value and final resources using $N$, e.g., ensuring that all file handles get closed.
This is a form of effect handling: it executes $M$ by invoking 
co-operations in place of operation calls, while doing resource-passing under the hood. 
Below we show by means of examples how one can use \lambdaAEff~to naturally separate $\mathcal{R}$ and $M$ 
into different processes.
For simplicity, we omit the initialisation and finalisation phases.

For our first example, let us consider a runner that implements a \emph{pseudo-random number generator} by providing a co-operation for ${\opsym{random} : \tyunit \to \tyint}$, which we can for example implement as 
\begin{lstlisting}
let linearCongruenceGeneratorRunner modulus a c initialSeed =
    let rec loop seed =
        promise (randomReq callNo |->
            let seed' = (a * seed + c) mod modulus in send randomRes (seed, callNo); loop seed'
        ) as p in return p
    in loop initialSeed
\end{lstlisting}
It is given by a recursive interrupt handler, which listens for $\opsym{randomReq} : \tysym{int}$ requests 
issued by clients, and itself issues $\opsym{randomRes} : \typrod{\tysym{int}}{\tysym{int}}$ responses. The resource this runner 
manages is the seed, which it passes between subsequent co-operation calls as an argument to the recursive \ls$loop$.

For the client code $M$, we implement operation calls \ls$random ()$ as discussed in \autoref{sec:overview:signals}, 
by decoupling them into signals and interrupts. We again use guarded interrupt handlers and call identifiers to 
avoid a response to one operation call fulfilling the promises of subsequent ones. 

\begin{lstlisting}
  let random () =
      let callNo = !callCounter in callCounter := callNo + 1;
      send randomReq callNo;
      promise (randomRes (n, callNo') when callNo = callNo' |-> return <<n mod 10>>) as p in
      await p until <<m>> in return m
\end{lstlisting}

As a second example, we show that this parallel approach to runners naturally 
extends to multiple co-operations. Specifically, we implement a \emph{runner for a heap}, 
by providing co-operations for 
\vspace{-0.05em}
\[
    \opsym{alloc} : \tysym{int} \to \tysym{loc} \qquad
    \opsym{lookup} : \tysym{loc} \to \tysym{int} \qquad
    \opsym{update} : \tysym{loc} \times \tysym{int} \to \tyunit
\vspace{-0.05em}
\]
We represent these co-operations using a signal/interrupt pair $(\opsym{opReq},\opsym{opRes})$ with payload types

\begin{lstlisting}
type payloadReq = | AllocReq of int | LookupReq of loc | UpdateReq of loc * int
type payloadRes = | AllocRes of loc | LookupRes of int | UpdateRes of unit
\end{lstlisting}

\noindent
The resulting runner is then implemented by pattern-matching on the payload value as follows:
\begin{lstlisting}
let rec heapRunner heap =
    promise (opReq (payloadReq, callNo) |->
        let heap', payloadRes =
            match payloadReq with
            | AllocReq v |-> let heap', l = allocHeap heap v in return (heap', AllocRes l)
            | LookupReq l |-> let v = lookupHeap heap l in return (heap, LookupRes v)
            | UpdateReq (l, v) |-> let heap' = updateHeap heap l v in return (heap', UpdateRes ())
        in
        send opRes (payloadRes, callNo); heapRunner heap'
    ) as p in return p
\end{lstlisting}
Note that by storing \ls$heap$ in memory, we could have also used three signal/interrupt pairs and 
split \ls$heapRunner$ into three distinct interrupt handlers, one for each of allocation, lookup, and update.

\subsection{Non-Blocking Post-Processing of Promised Values}
\label{sec:applications:chaining}

As discussed in \autoref{sect:overview:promising}, interrupt handlers differ
from ordinary operation calls by allowing user-side post-processing of received data. 
In this final example, we show that \lambdaAEff~is flexible enough to modularly perform \emph{further 
non-blocking post-processing} of this data anywhere in a program.

For instance, let us assume we are writing a program that contains an interrupt handler (for some $\op$)
that promises to return us a list of integers. Now, at some later point 
in the program, we decide that we want to further process this list 
if and when it becomes available, 
e.g., by using some of its elements to issue an outgoing signal. 
Of course, we could do this by going back and changing the original interrupt handler, 
but this would not be very modular; 
nor do we want to block the entire program's execution (using \ls$await$) until $\op$ 
arrives and the concrete list becomes available.

Instead, we can define a generic combinator for \emph{non-blocking post-processing} of  
promised values
\begin{lstlisting}
process$_{\op}$ p with (<<x>> |-> comp) as q in cont
\end{lstlisting}
that takes an earlier made promise \ls$p$ (which we assume originates 
from handling the specified interrupt $\op$), 
and makes a new promise to
execute the post-processing code \ls$comp[v/x]$ once \ls$p$ gets fulfilled with some value \ls$v$. The (non-blocking) continuation
\ls$cont$ can refer to \ls$comp$'s result using the new promise-typed variable \ls$q$ bound 
in it. Under the hood, \ls{process$_{\op}$} is a syntactic sugar for
\begin{lstlisting}
promise (op _ |-> await p until <<x>> in let y = comp in return <<y>>) as q in cont
\end{lstlisting}
While \ls{process$_{\op}$} does involve an \ls$await$, it gets 
exposed only after  
\ls$op$ is received, but by that time \ls$p$ will have been fulfilled with some  
\ls$v$ by an earlier interrupt handler, and thus the \ls$await$ can reduce.

Returning to post-processing a list of integers promised by an existing interrupt handler, 
below is an example showing the use of the \ls{process$_{\op}$} combinator and how to \emph{chain multiple 
post-processing computations together} (here, filtering, folding, and issuing an outgoing signal), 
in the same spirit as how one is taught to program compositionally with futures and promises~\cite{Haller:Futures}:
\begin{lstlisting}
promise (op x |-> original_interrupt_handler) as p in 
...
process$_{\op}$ p with (<<is>> |-> filter (fun i |-> i > 0) is) as q in 
process$_{\op}$ q with (<<js>> |-> fold (fun j j' |-> j * j') 1 js) as r in 
process$_{\op}$ r with (<<k>> |-> send productOfPositiveElements k) as _ in 
...
\end{lstlisting}
We note that for this to work, it is crucial that incoming interrupts behave 
like (deep) effect handling (see \autoref{sec:basic-calculus:semantics:computations})
so that all three post-processing computations get executed, in their program order.


\section{Conclusion}
\label{sec:conclusion}

We have shown how to incorporate asynchrony within 
algebraic effects, by decoupling 
the execution of operation calls into signalling that an operation's implementation 
needs to be executed, and interrupting a running computation with the operation's result, 
to which it can react by installing interrupt handlers.
We have shown that our approach is flexible enough that not all signals have to have a matching 
interrupt, and vice versa, allowing us to also model spontaneous behaviour, such as a user 
clicking a button or the environment preempting a thread. We have formalised these ideas in a small 
calculus, called \lambdaAEff, and demonstrated its flexibility on a number of examples.
We have also accompanied the paper with an \pl{Agda} formalisation and a prototype implementation of \lambdaAEff.
However, various future work directions still remain. We discuss these and related work below.

\paragraph{Asynchronous effects}
As asynchrony is desired in practice, it is no surprise that \pl{Koka} \cite{Leijen:AsyncAwait} 
and \pl{Multicore OCaml} \cite{Dolan:MulticoreOCaml}, the two largest implementations of algebraic effects 
and handlers, have been extended accordingly. 
In \pl{Koka}, algebraic operations  
reify their continuation into an explicit callback structure that is then dispatched to a primitive 
such as \lstinline{setTimeout} in its \pl{Node.JS} backend. In \pl{Multicore OCaml}, one uses low-level functions 
such as \lstinline{set_signal} or \lstinline{timer_create} that modify the runtime by interjecting operation 
calls inside the currently running code. Both approaches thus \emph{delegate} the actual asynchrony to existing 
concepts in their backends. In contrast, in \lambdaAEff, we 
can express such backend features within the core calculus itself.

Further, in \lambdaAEff, we avoid having to manually use (un)masking to 
disable asynchronous effects in unwanted places 
in our programs, which can be a very tricky business to get right, as noted by \citet{Dolan:MulticoreOCaml}.
Instead, by design, interrupts in \lambdaAEff~\emph{never} 
influence running code unless the code has an explicit interrupt handler installed, 
and they \emph{always} wait for any potential handler to present itself during
execution (recall that they get discarded only when reaching a $\tmkw{return}$).

\paragraph{Message-passing}
While in this paper we have focussed on the foundations of asynchrony in the 
context of algebraic effects, the ideas we propose have also many common 
traits with concurrency models based on \emph{message-passing}, 
such as the Actor model \cite{Hewitt:Actors}, the $\pi$-calculus \cite{Milner:PiCalculus}, 
and the join-calculus \cite{FournetGonthier:JoinCalculus}, just to name a few.
Namely, one can view the issuing of a signal $\tmopout{op}{V}{M}$ as sending a message, 
and handling an interrupt $\tmopin{op}{W}{M}$ as receiving a message, along a channel
named $\op$. 
In fact, we believe that in our prototype implementation we could replace the semantics 
presented in the paper with an equivalent one based on shared channels
(one for each $\op$), to which the interrupt handlers could subscribe to.
Instead of propagating signals first out and then in, they would be sent directly to channels 
where interrupt handlers immediately receive them, drastically reducing the cost of communication.

Comparing \lambdaAEff~to the Actor model, we see that 
the $\tmrun M$ processes evolve in their own bubbles, and only communicate with other
processes via signals and interrupts, similarly to actors.
However, in contrast to messages not being required to be ordered 
in the Actor model, in our $\tmpar P Q$, the process $Q$ receives 
interrupts in the same order as the respective signals are issued by $P$ 
(and vice versa). This communication ordering could be relaxed by allowing 
signals to be hoisted out of computations from deeper than just the top level.
Another difference 
with actors is that \lambdaAEff-computations 
can react to interrupts only sequentially, and not by dynamically creating new parallel 
processes---first-class parallel processes and their dynamic creation is something 
we plan to address in future work.

It is worth noting that our interrupt handlers are similar to the message receiving construct 
in the $\pi$-calculus, in that they both synchronise with matching incoming
interrupts/messages. However, the two are also different, in that interrupt handlers allow
reductions to take place under them and non-matching interrupts to propagate past them.
Further, our interrupt handlers are also similar to join definitions in the join-calculus, describing
how to react when a corresponding interrupt arrives or join pattern appears, where in both cases
the reaction could involve effectful code. To this end, our interrupt handlers resemble join definitions 
with simple one-channel join patterns. However, where the two constructs differ is that join definitions 
also serve to define new (local) channels, similarly to the restriction operator in the $\pi$-calculus, 
whereas we assume a fixed global set of channels (i.e., signal and interrupt names). 
We expect that extending \lambdaAEff~with local algebraic effects 
\cite{Staton:Instances,Biernacki:AbstractingAlgEffects}
could help us fill this gap between the formalisms.

\paragraph{Scoped operations}
As discussed in \autoref{sec:basic-calculus:semantics:computations}, despite their name, interrupt handlers
behave like algebraic operations, not like effect handlers. However, one should also note 
that they are not conventional operations because they carry computational data that sequential 
composition does not interact with, and that only gets triggered when a corresponding interrupt is received. 

Such generalised operations are known in the literature as \emph{scoped operations}~\cite{Pirog:ScopedOperations},  
a leading example of which is $\opsym{spawn}(M;N)$, where $M$ is the new child process to be executed and $N$ is 
the current process. Crucially, the child $M$ should not directly interact with the current process. Scoped operations 
achieve this behaviour by declaring $M$ to be in the scope of $\opsym{spawn}$, resulting in  
$\tmlet x {\opsym{spawn}(M;N)} {K} \!\reduces\! \opsym{spawn}(M;\tmlet x N K)$, exactly 
as we have for interrupt handlers.

Further recalling \autoref{sec:basic-calculus:semantics:computations}, despite their appearance, 
incoming interrupts behave computationally like effect handling, not like algebraic operations. 
In fact, it turns out they correspond to effect handling 
induced by an instance of \emph{scoped effect handlers} \cite{Pirog:ScopedOperations}.
Compared to ordinary effect handlers, scoped effect handlers explain both 
how to interpret operations and their scopes. In our setting, this 
corresponds to selectively executing the handler code of interrupt handlers.

It would be interesting to extend our work both with 
scoped operations having more general signatures, and with additional effect handlers 
for them. The latter could allow  
preventing the propagation of incoming interrupts into continuations, discarding the continuation 
of a cancelled remote call, and techniques such as masking or reordering interrupts
according to priority levels.

\paragraph{Modal types}
We recall that the type safety of \lambdaAEff~crucially relies on 
the promise-typed variables bound by interrupt handlers not being
allowed to appear in the payloads of signals. This ensures that it is safe to propagate  
signals past all enveloping interrupt handlers, and communicate their payloads 
to other processes. In its essence, this is similar to the use of \emph{modal types} in distributed 
\cite{Murphy:PhDThesis} and reactive programming \cite{Krishnaswami:HOFRP,Bahr:RATT}
to classify values that can travel through space and time. In our case, 
it is the omission of promise types from ground types that allows us to consider  
the payloads of signals and interrupts as such \emph{mobile values}. 

We expect that these connections to modal types will be key for  
extending \lambdaAEff~with (i)~higher-order payloads and (ii) process 
creation. For (i), we want to avoid the bodies of function-typed payloads to be able 
to await enveloping promise variables to be fulfilled. For (ii), 
we want to do the same for the dynamically created processes.
In both cases, the reason is to be able to safely propagate the corresponding 
programming constructs past enveloping interrupt handlers, and eventually 
hoist them out of individual computations. We believe that the more structured 
treatment of contexts $\Gamma$, as studied in various modal type 
systems, will hold the key for these extensions to be type safe.

\paragraph{Denotational semantics}
In this paper we study only the operational side of \lambdaAEff, 
and leave developing its denotational semantics for the future.
In light of how we have motivated the \lambdaAEff-specific programming 
constructs, and based on the above discussions, we expect the denotational semantics 
to take the form of an algebraically natural \emph{monadic semantics}, where the monad would 
be given by an instance of the one studied by \citet{Pirog:ScopedOperations} for 
scoped operations (quotiented by the commutativity of signals and interrupt handlers, 
and extended with nondeterminism to model different evaluation
outcomes), incoming interrupts would be modelled as homomorphisms
induced by scoped algebras, and parallel composition 
by considering all nondeterministic interleavings of (the outgoing signals of) individual computations, e.g., 
based on how \citet{Plotkin:BinaryHandlers} and \citet{Lindley:DoBeDoBeDo} model 
it in the context of general effect handlers.
Finally, we expect to take inspiration for the denotational semantics of the promise type 
from that of modal logics and modal types.

\paragraph{Reasoning about asynchronous effects}
In addition to using \lambdaAEff's type-and-effect system only for specification purposes (such as specifying 
that $M : \tycomp{X}{(\emptyset,\{\})}$ raises no signals and installs no interrupt handlers), 
we wish to make further use of it for validating \emph{effect-dependent optimisations} \cite{Kammar:Optimisations}. 
For instance, whenever $M : \tycomp{X}{(\o,\i)}$ and $\i\, (\op) = \bot$, we would like to know  
that $\tmopin{\op}{V}{M} \reduces^* M$. One way to validate such optimisations 
is to develop an adequate denotational semantics, 
and then use a semantic \emph{computational induction} principle \cite{Bauer:EffectSystem,Plotkin:Logic}.
For \lambdaAEff, this would amount to only having to prove the optimisations for return values, signals, 
and interrupt handlers. Another way to validate effect-dependent optimisations would 
be to define a suitable logical 
relation for \lambdaAEff~\cite{Benton:AbstractEffects}.

In addition to optimisations based on \lambdaAEff's existing effect system, 
we plan to explore extending processes and their types 
with \emph{communication protocols} inspired by session types \cite{Honda:LangPrimitives}, 
so as to refine the current ``broadcast 
everything everywhere'' communication strategy.

\section*{Acknowledgements}

We thank the anonymous reviewers, Otterlo IFIP WG 2.1 meeting participants, 
and Andrej Bauer, Gavin Bierman, Žiga Lukšič, and Alex Simpson for their useful feedback.
%
%
This project has received funding from the European Union’s Horizon 2020 research and 
innovation programme under the Marie Sk\l{}odowska-Curie grant agreement No 834146
\raisebox{-0.05cm}{
\hspace{-0.15cm}
\includegraphics[width=0.5cm]{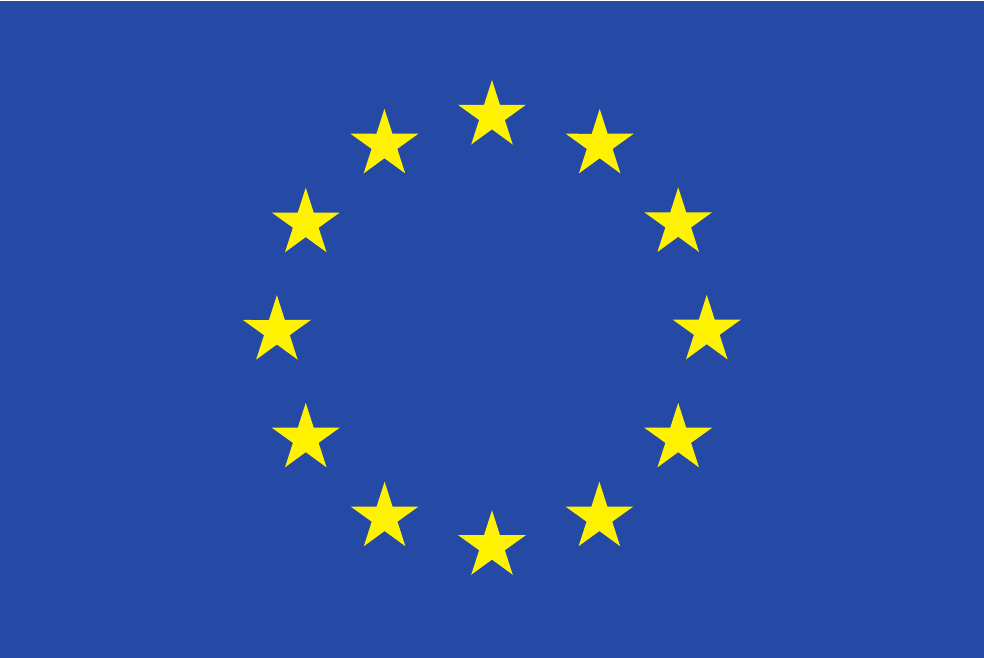}
\hspace{-0.15cm}
}.
This material is based upon work supported by the Air Force Office of Scientific 
Research under award number FA9550-17-1-0326.

\bibliographystyle{ACM-Reference-Format}
\bibliography{references}


\begin{thebibliography}{37}


\ifx \showCODEN    \undefined \def \showCODEN     #1{\unskip}     \fi
\ifx \showDOI      \undefined \def \showDOI       #1{#1}\fi
\ifx \showISBNx    \undefined \def \showISBNx     #1{\unskip}     \fi
\ifx \showISBNxiii \undefined \def \showISBNxiii  #1{\unskip}     \fi
\ifx \showISSN     \undefined \def \showISSN      #1{\unskip}     \fi
\ifx \showLCCN     \undefined \def \showLCCN      #1{\unskip}     \fi
\ifx \shownote     \undefined \def \shownote      #1{#1}          \fi
\ifx \showarticletitle \undefined \def \showarticletitle #1{#1}   \fi
\ifx \showURL      \undefined \def \showURL       {\relax}        \fi
\providecommand\bibfield[2]{#2}
\providecommand\bibinfo[2]{#2}
\providecommand\natexlab[1]{#1}
\providecommand\showeprint[2][]{arXiv:#2}

\bibitem[\protect\citeauthoryear{Ahman}{Ahman}{2020}]%
        {ahman20:AeffAgda}
\bibfield{author}{\bibinfo{person}{D. Ahman}.} \bibinfo{year}{2020}\natexlab{}.
\newblock \bibinfo{title}{\textsc{Agda} formalisation of the
  \lambdaAEff-calculus}.
\newblock \bibinfo{howpublished}{Available at
  \url{https://github.com/danelahman/aeff-agda/releases/tag/popl-2021}}.
\newblock


\bibitem[\protect\citeauthoryear{Ahman and Bauer}{Ahman and Bauer}{2020}]%
        {Ahman:Runners}
\bibfield{author}{\bibinfo{person}{D. Ahman} {and} \bibinfo{person}{A. Bauer}.}
  \bibinfo{year}{2020}\natexlab{}.
\newblock \showarticletitle{Runners in action}. In
  \bibinfo{booktitle}{\emph{Proc. of 29th European Symp.~on Programming, ESOP
  2020}} \emph{(\bibinfo{series}{LNCS}, Vol.~\bibinfo{volume}{12075})}.
  \bibinfo{publisher}{Springer}, \bibinfo{pages}{29--55}.
\newblock


\bibitem[\protect\citeauthoryear{Ahman and Pretnar}{Ahman and Pretnar}{2020}]%
        {AhmanPretnar20:Artefact}
\bibfield{author}{\bibinfo{person}{D. Ahman} {and} \bibinfo{person}{M.
  Pretnar}.} \bibinfo{year}{2020}\natexlab{}.
\newblock \bibinfo{title}{Software artefact for the POPL 2021 paper
  "Asynchronous Effects"}.
\newblock \bibinfo{howpublished}{Available at
  \url{https://doi.org/10.5281/zenodo.4072753}}.
\newblock


\bibitem[\protect\citeauthoryear{Amadio and Curien}{Amadio and Curien}{1998}]%
        {Amadio:Domains}
\bibfield{author}{\bibinfo{person}{R.~M. Amadio} {and} \bibinfo{person}{P-L.
  Curien}.} \bibinfo{year}{1998}\natexlab{}.
\newblock \bibinfo{booktitle}{\emph{Domains and {Lambda} {Calculi}}}.
\newblock \bibinfo{publisher}{Cambridge University Press}.
\newblock


\bibitem[\protect\citeauthoryear{Bahr, Graulund, and M{\o}gelberg}{Bahr
  et~al\mbox{.}}{2019}]%
        {Bahr:RATT}
\bibfield{author}{\bibinfo{person}{P. Bahr}, \bibinfo{person}{C. Graulund},
  {and} \bibinfo{person}{R.~E. M{\o}gelberg}.} \bibinfo{year}{2019}\natexlab{}.
\newblock \showarticletitle{Simply RaTT: a fitch-style modal calculus for
  reactive programming without space leaks}.
\newblock \bibinfo{journal}{\emph{Proc. ACM Program. Lang.}}
  \bibinfo{volume}{3}, \bibinfo{number}{{ICFP}} (\bibinfo{year}{2019}),
  \bibinfo{pages}{109:1--109:27}.
\newblock


\bibitem[\protect\citeauthoryear{Bauer and Pretnar}{Bauer and Pretnar}{2014}]%
        {Bauer:EffectSystem}
\bibfield{author}{\bibinfo{person}{A. Bauer} {and} \bibinfo{person}{M.
  Pretnar}.} \bibinfo{year}{2014}\natexlab{}.
\newblock \showarticletitle{An Effect System for Algebraic Effects and
  Handlers}.
\newblock \bibinfo{journal}{\emph{Logical Methods in Computer Science}}
  \bibinfo{volume}{10}, \bibinfo{number}{4} (\bibinfo{year}{2014}).
\newblock


\bibitem[\protect\citeauthoryear{Bauer and Pretnar}{Bauer and Pretnar}{2015}]%
        {Bauer:AlgebraicEffects}
\bibfield{author}{\bibinfo{person}{A. Bauer} {and} \bibinfo{person}{M.
  Pretnar}.} \bibinfo{year}{2015}\natexlab{}.
\newblock \showarticletitle{Programming with algebraic effects and handlers}.
\newblock \bibinfo{journal}{\emph{J. Log. Algebr. Meth. Program.}}
  \bibinfo{volume}{84}, \bibinfo{number}{1} (\bibinfo{year}{2015}),
  \bibinfo{pages}{108--123}.
\newblock


\bibitem[\protect\citeauthoryear{Benton, Hofmann, and Nigam}{Benton
  et~al\mbox{.}}{2014}]%
        {Benton:AbstractEffects}
\bibfield{author}{\bibinfo{person}{N. Benton}, \bibinfo{person}{M. Hofmann},
  {and} \bibinfo{person}{V. Nigam}.} \bibinfo{year}{2014}\natexlab{}.
\newblock \showarticletitle{Abstract effects and proof-relevant logical
  relations}. In \bibinfo{booktitle}{\emph{Proc. of 41st Ann. {ACM}
  {SIGPLAN-SIGACT} Symp. on Principles of Programming Languages, {POPL} 2014}}.
  \bibinfo{publisher}{{ACM}}, \bibinfo{pages}{619--632}.
\newblock


\bibitem[\protect\citeauthoryear{Biernacki, Pir\'{o}g, Polesiuk, and
  Sieczkowski}{Biernacki et~al\mbox{.}}{2019}]%
        {Biernacki:AbstractingAlgEffects}
\bibfield{author}{\bibinfo{person}{D. Biernacki}, \bibinfo{person}{M.
  Pir\'{o}g}, \bibinfo{person}{P. Polesiuk}, {and} \bibinfo{person}{F.
  Sieczkowski}.} \bibinfo{year}{2019}\natexlab{}.
\newblock \showarticletitle{Abstracting Algebraic Effects}.
\newblock \bibinfo{journal}{\emph{Proc. ACM Program. Lang.}}
  \bibinfo{volume}{3}, \bibinfo{number}{POPL} (\bibinfo{year}{2019}),
  \bibinfo{pages}{6:1--6:28}.
\newblock


\bibitem[\protect\citeauthoryear{Bingham, Chen, Jankowiak, Obermeyer, Pradhan,
  Karaletsos, Singh, Szerlip, Horsfall, and Goodman}{Bingham
  et~al\mbox{.}}{2019}]%
        {Bingham:Pyro}
\bibfield{author}{\bibinfo{person}{E. Bingham}, \bibinfo{person}{J.~P. Chen},
  \bibinfo{person}{M. Jankowiak}, \bibinfo{person}{F. Obermeyer},
  \bibinfo{person}{N. Pradhan}, \bibinfo{person}{T. Karaletsos},
  \bibinfo{person}{R. Singh}, \bibinfo{person}{P. Szerlip}, \bibinfo{person}{P.
  Horsfall}, {and} \bibinfo{person}{N.~D. Goodman}.}
  \bibinfo{year}{2019}\natexlab{}.
\newblock \showarticletitle{Pyro: Deep Universal Probabilistic Programming}.
\newblock \bibinfo{journal}{\emph{J. Mach. Learn. Res.}} \bibinfo{volume}{20},
  \bibinfo{number}{1} (\bibinfo{date}{Jan.} \bibinfo{year}{2019}),
  \bibinfo{pages}{973--978}.
\newblock


\bibitem[\protect\citeauthoryear{Convent, Lindley, McBride, and
  McLaughlin}{Convent et~al\mbox{.}}{2020}]%
        {Convent:DooBeeDooBeeDoo}
\bibfield{author}{\bibinfo{person}{L. Convent}, \bibinfo{person}{S. Lindley},
  \bibinfo{person}{C. McBride}, {and} \bibinfo{person}{C. McLaughlin}.}
  \bibinfo{year}{2020}\natexlab{}.
\newblock \showarticletitle{Doo bee doo bee doo}.
\newblock \bibinfo{journal}{\emph{J. Funct. Program.}}  \bibinfo{volume}{30}
  (\bibinfo{year}{2020}), \bibinfo{pages}{e9}.
\newblock


\bibitem[\protect\citeauthoryear{Dolan, Eliopoulos, Hillerstr{\"o}m,
  Madhavapeddy, Sivaramakrishnan, and White}{Dolan et~al\mbox{.}}{2018}]%
        {Dolan:MulticoreOCaml}
\bibfield{author}{\bibinfo{person}{S. Dolan}, \bibinfo{person}{S. Eliopoulos},
  \bibinfo{person}{D. Hillerstr{\"o}m}, \bibinfo{person}{A. Madhavapeddy},
  \bibinfo{person}{K.~C. Sivaramakrishnan}, {and} \bibinfo{person}{L. White}.}
  \bibinfo{year}{2018}\natexlab{}.
\newblock \showarticletitle{Concurrent System Programming with Effect
  Handlers}. In \bibinfo{booktitle}{\emph{Proc. of 18th Int. Sym. Trends in
  Functional Programming, TFP 2017}}. \bibinfo{publisher}{Springer},
  \bibinfo{pages}{98--117}.
\newblock


\bibitem[\protect\citeauthoryear{Fournet and Gonthier}{Fournet and
  Gonthier}{1996}]%
        {FournetGonthier:JoinCalculus}
\bibfield{author}{\bibinfo{person}{C. Fournet} {and} \bibinfo{person}{G.
  Gonthier}.} \bibinfo{year}{1996}\natexlab{}.
\newblock \showarticletitle{The Reflexive CHAM and the Join-Calculus}. In
  \bibinfo{booktitle}{\emph{Proc. of 23rd ACM SIGPLAN-SIGACT Symp. on
  Principles of Programming Languages, POPL'96}}. \bibinfo{publisher}{ACM},
  \bibinfo{pages}{372--385}.
\newblock


\bibitem[\protect\citeauthoryear{Gierz, Hofmann, Keimel, Lawson, Mislove, and
  Scott}{Gierz et~al\mbox{.}}{2003}]%
        {Gierz:ContinuousLattices}
\bibfield{author}{\bibinfo{person}{G. Gierz}, \bibinfo{person}{K.~H. Hofmann},
  \bibinfo{person}{K. Keimel}, \bibinfo{person}{J.~D. Lawson},
  \bibinfo{person}{M. Mislove}, {and} \bibinfo{person}{D.~S. Scott}.}
  \bibinfo{year}{2003}\natexlab{}.
\newblock \bibinfo{booktitle}{\emph{Continuous Lattices and Domains}}.
\newblock Number~93 in \bibinfo{series}{Encyclopedia of Mathematics and its
  Applications}. \bibinfo{publisher}{Cambridge University Press}.
\newblock


\bibitem[\protect\citeauthoryear{Haller, Prokopec, Miller, Klang, Kuhn, and
  Jovanovic}{Haller et~al\mbox{.}}{2020}]%
        {Haller:Futures}
\bibfield{author}{\bibinfo{person}{P. Haller}, \bibinfo{person}{A. Prokopec},
  \bibinfo{person}{H. Miller}, \bibinfo{person}{V. Klang}, \bibinfo{person}{R.
  Kuhn}, {and} \bibinfo{person}{V. Jovanovic}.}
  \bibinfo{year}{2020}\natexlab{}.
\newblock \bibinfo{title}{\pl{Scala} documentation: Futures and Promises}.
  (\bibinfo{date}{July} \bibinfo{year}{2020}).
\newblock
\newblock
\shownote{Available online at
  \url{https://docs.scala-lang.org/overviews/core/futures.html}.}


\bibitem[\protect\citeauthoryear{Hewitt, Bishop, and Steiger}{Hewitt
  et~al\mbox{.}}{1973}]%
        {Hewitt:Actors}
\bibfield{author}{\bibinfo{person}{C. Hewitt}, \bibinfo{person}{P. Bishop},
  {and} \bibinfo{person}{R. Steiger}.} \bibinfo{year}{1973}\natexlab{}.
\newblock \showarticletitle{A Universal Modular {ACTOR} Formalism for
  Artificial Intelligence}. In \bibinfo{booktitle}{\emph{Proc. of 3rd Int.
  Joint Conf. on Artificial Intelligence, IJCAI'73}}.
  \bibinfo{publisher}{Morgan Kaufmann Publishers Inc.},
  \bibinfo{pages}{235--245}.
\newblock


\bibitem[\protect\citeauthoryear{Honda, Vasconcelos, and Kubo}{Honda
  et~al\mbox{.}}{1998}]%
        {Honda:LangPrimitives}
\bibfield{author}{\bibinfo{person}{K. Honda}, \bibinfo{person}{V.~T.
  Vasconcelos}, {and} \bibinfo{person}{M. Kubo}.}
  \bibinfo{year}{1998}\natexlab{}.
\newblock \showarticletitle{Language Primitives and Type Discipline for
  Structured Communication-Based Programming}. In
  \bibinfo{booktitle}{\emph{Proc. of 7th European Symp. on Programming, ESOP
  1998}} \emph{(\bibinfo{series}{LNCS}, Vol.~\bibinfo{volume}{1381})}.
  \bibinfo{publisher}{Springer}, \bibinfo{pages}{122--138}.
\newblock


\bibitem[\protect\citeauthoryear{Kammar, Lindley, and Oury}{Kammar
  et~al\mbox{.}}{2013}]%
        {Kammar:Handlers}
\bibfield{author}{\bibinfo{person}{O. Kammar}, \bibinfo{person}{S. Lindley},
  {and} \bibinfo{person}{N. Oury}.} \bibinfo{year}{2013}\natexlab{}.
\newblock \showarticletitle{Handlers in Action}. In
  \bibinfo{booktitle}{\emph{Proc. of 18th {ACM} {SIGPLAN} Int. Conf. on
  Functional Programming, ICFP 2013}}. \bibinfo{publisher}{ACM},
  \bibinfo{pages}{145--158}.
\newblock


\bibitem[\protect\citeauthoryear{Kammar and Plotkin}{Kammar and
  Plotkin}{2012}]%
        {Kammar:Optimisations}
\bibfield{author}{\bibinfo{person}{O. Kammar} {and} \bibinfo{person}{G.~D.
  Plotkin}.} \bibinfo{year}{2012}\natexlab{}.
\newblock \showarticletitle{Algebraic foundations for effect-dependent
  optimisations}. In \bibinfo{booktitle}{\emph{Proc. of 39th {ACM}
  {SIGPLAN-SIGACT} Symp. on Principles of Programming Languages, POPL 2012}}.
  \bibinfo{publisher}{{ACM}}, \bibinfo{pages}{349--360}.
\newblock


\bibitem[\protect\citeauthoryear{Krishnaswami}{Krishnaswami}{2013}]%
        {Krishnaswami:HOFRP}
\bibfield{author}{\bibinfo{person}{N.~R. Krishnaswami}.}
  \bibinfo{year}{2013}\natexlab{}.
\newblock \showarticletitle{Higher-Order Functional Reactive Programming
  without Spacetime Leaks}. In \bibinfo{booktitle}{\emph{Proc of 18th ACM
  SIGPLAN Int. Conf. on Functional Programming, ICFP 2013}}.
  \bibinfo{publisher}{ACM}, \bibinfo{pages}{221--232}.
\newblock


\bibitem[\protect\citeauthoryear{Leijen}{Leijen}{2017}]%
        {Leijen:AsyncAwait}
\bibfield{author}{\bibinfo{person}{D. Leijen}.}
  \bibinfo{year}{2017}\natexlab{}.
\newblock \showarticletitle{Structured asynchrony with algebraic effects}. In
  \bibinfo{booktitle}{\emph{Proc. of 2nd {ACM} {SIGPLAN} Int. Wksh. on
  Type-Driven Development, TyDe@ICFP 2017}}. \bibinfo{publisher}{{ACM}},
  \bibinfo{pages}{16--29}.
\newblock


\bibitem[\protect\citeauthoryear{Levy, Power, and Thielecke}{Levy
  et~al\mbox{.}}{2003}]%
        {Levy:FGCBV}
\bibfield{author}{\bibinfo{person}{P.~B. Levy}, \bibinfo{person}{J. Power},
  {and} \bibinfo{person}{H. Thielecke}.} \bibinfo{year}{2003}\natexlab{}.
\newblock \showarticletitle{Modelling environments in call-by-value programming
  languages}.
\newblock \bibinfo{journal}{\emph{Inf. Comput.}} \bibinfo{volume}{185},
  \bibinfo{number}{2} (\bibinfo{year}{2003}), \bibinfo{pages}{182--210}.
\newblock


\bibitem[\protect\citeauthoryear{Lindley, McBride, and McLaughlin}{Lindley
  et~al\mbox{.}}{2017}]%
        {Lindley:DoBeDoBeDo}
\bibfield{author}{\bibinfo{person}{S. Lindley}, \bibinfo{person}{C. McBride},
  {and} \bibinfo{person}{C. McLaughlin}.} \bibinfo{year}{2017}\natexlab{}.
\newblock \showarticletitle{Do Be Do Be Do}. In \bibinfo{booktitle}{\emph{Proc.
  of 44th {ACM} {SIGPLAN} Symp. on Principles of Programming Languages, POPL
  2017}}. \bibinfo{publisher}{ACM}, \bibinfo{pages}{500--514}.
\newblock


\bibitem[\protect\citeauthoryear{Milner, Parrow, and Walker}{Milner
  et~al\mbox{.}}{1992}]%
        {Milner:PiCalculus}
\bibfield{author}{\bibinfo{person}{R. Milner}, \bibinfo{person}{J. Parrow},
  {and} \bibinfo{person}{D. Walker}.} \bibinfo{year}{1992}\natexlab{}.
\newblock \showarticletitle{A calculus of mobile processes, I}.
\newblock \bibinfo{journal}{\emph{Inf. Comput.}} \bibinfo{volume}{100},
  \bibinfo{number}{1} (\bibinfo{year}{1992}), \bibinfo{pages}{1--40}.
\newblock


\bibitem[\protect\citeauthoryear{{Murphy VII}}{{Murphy VII}}{2008}]%
        {Murphy:PhDThesis}
\bibfield{author}{\bibinfo{person}{T. {Murphy VII}}.}
  \bibinfo{year}{2008}\natexlab{}.
\newblock \emph{\bibinfo{title}{Modal Types for Mobile Code}}.
\newblock \bibinfo{thesistype}{Ph.D. Dissertation}. \bibinfo{school}{Carnegie
  Mellon University}.
\newblock


\bibitem[\protect\citeauthoryear{Pir{\'{o}}g, Schrijvers, Wu, and
  Jaskelioff}{Pir{\'{o}}g et~al\mbox{.}}{2018}]%
        {Pirog:ScopedOperations}
\bibfield{author}{\bibinfo{person}{M. Pir{\'{o}}g}, \bibinfo{person}{T.
  Schrijvers}, \bibinfo{person}{N. Wu}, {and} \bibinfo{person}{M. Jaskelioff}.}
  \bibinfo{year}{2018}\natexlab{}.
\newblock \showarticletitle{Syntax and Semantics for Operations with Scopes}.
  In \bibinfo{booktitle}{\emph{Proc. of 33rd Annual {ACM/IEEE} Symp. on Logic
  in Computer Science, {LICS} 2018}}. \bibinfo{publisher}{{ACM}},
  \bibinfo{pages}{809--818}.
\newblock


\bibitem[\protect\citeauthoryear{Plotkin}{Plotkin}{2012}]%
        {Plotkin:BinaryHandlers}
\bibfield{author}{\bibinfo{person}{G.~D. Plotkin}.}
  \bibinfo{year}{2012}\natexlab{}.
\newblock \bibinfo{title}{Concurrency and the algebraic theory of effects}.
  (\bibinfo{year}{2012}).
\newblock
\newblock
\shownote{Invited talk at the 23rd Int. Conf. on Concurrency Theory, CONCUR
  2012.}


\bibitem[\protect\citeauthoryear{Plotkin and Power}{Plotkin and Power}{2002}]%
        {Plotkin:NotionsOfComputation}
\bibfield{author}{\bibinfo{person}{G.~D. Plotkin} {and} \bibinfo{person}{J.
  Power}.} \bibinfo{year}{2002}\natexlab{}.
\newblock \showarticletitle{Notions of Computation Determine Monads}. In
  \bibinfo{booktitle}{\emph{Proc. of 5th Int. Conf. on Foundations of Software
  Science and Computation Structures, FOSSACS 2002}}
  \emph{(\bibinfo{series}{LNCS}, Vol.~\bibinfo{volume}{2303})}.
  \bibinfo{publisher}{Springer}, \bibinfo{pages}{342--356}.
\newblock


\bibitem[\protect\citeauthoryear{Plotkin and Pretnar}{Plotkin and
  Pretnar}{2008}]%
        {Plotkin:Logic}
\bibfield{author}{\bibinfo{person}{G.~D. Plotkin} {and} \bibinfo{person}{M.
  Pretnar}.} \bibinfo{year}{2008}\natexlab{}.
\newblock \showarticletitle{A Logic for Algebraic Effects}. In
  \bibinfo{booktitle}{\emph{Proc. of 23th Ann. {IEEE} Symp. on Logic in
  Computer Science, LICS 2008}}. \bibinfo{publisher}{IEEE},
  \bibinfo{pages}{118--129}.
\newblock


\bibitem[\protect\citeauthoryear{Plotkin and Pretnar}{Plotkin and
  Pretnar}{2013}]%
        {Plotkin:HandlingEffects}
\bibfield{author}{\bibinfo{person}{G.~D. Plotkin} {and} \bibinfo{person}{M.
  Pretnar}.} \bibinfo{year}{2013}\natexlab{}.
\newblock \showarticletitle{Handling Algebraic Effects}.
\newblock \bibinfo{journal}{\emph{Logical Methods in Computer Science}}
  \bibinfo{volume}{9}, \bibinfo{number}{4:23} (\bibinfo{year}{2013}).
\newblock


\bibitem[\protect\citeauthoryear{Poulson}{Poulson}{2020}]%
        {Poulson:AsyncEffectHandling}
\bibfield{author}{\bibinfo{person}{L. Poulson}.}
  \bibinfo{year}{2020}\natexlab{}.
\newblock \emph{\bibinfo{title}{Asynchronous Effect Handling}}.
\newblock \bibinfo{thesistype}{Master's\ thesis}. \bibinfo{school}{School of
  Informatics, University of Edinburgh}.
\newblock


\bibitem[\protect\citeauthoryear{Pretnar}{Pretnar}{2015}]%
        {Pretnar:Tutorial}
\bibfield{author}{\bibinfo{person}{M. Pretnar}.}
  \bibinfo{year}{2015}\natexlab{}.
\newblock \showarticletitle{An Introduction to Algebraic Effects and Handlers.
  {Invited tutorial paper}}.
\newblock \bibinfo{journal}{\emph{Electr. Notes Theor. Comput. Sci.}}
  \bibinfo{volume}{319} (\bibinfo{year}{2015}), \bibinfo{pages}{19--35}.
\newblock


\bibitem[\protect\citeauthoryear{Pretnar}{Pretnar}{2020}]%
        {pretnar20:AEff}
\bibfield{author}{\bibinfo{person}{M. Pretnar}.}
  \bibinfo{year}{2020}\natexlab{}.
\newblock \bibinfo{title}{Programming language {\textsc{{\AE}ff}}}.
\newblock \bibinfo{howpublished}{Available at
  \url{https://github.com/matijapretnar/aeff/releases/tag/popl-2021}}.
\newblock


\bibitem[\protect\citeauthoryear{Schwinghammer}{Schwinghammer}{2002}]%
        {Schwinghammer:Thesis}
\bibfield{author}{\bibinfo{person}{J. Schwinghammer}.}
  \bibinfo{year}{2002}\natexlab{}.
\newblock \emph{\bibinfo{title}{A Concurrent Lambda-Calculus with Promises and
  Futures}}.
\newblock \bibinfo{thesistype}{Master's\ thesis}. \bibinfo{school}{Programming
  Systems Lab, Universit{\"a}t des Saarlandes}.
\newblock


\bibitem[\protect\citeauthoryear{Staton}{Staton}{2013}]%
        {Staton:Instances}
\bibfield{author}{\bibinfo{person}{S. Staton}.}
  \bibinfo{year}{2013}\natexlab{}.
\newblock \showarticletitle{Instances of Computational Effects: An Algebraic
  Perspective}. In \bibinfo{booktitle}{\emph{Proc. of 28th Ann. {ACM/IEEE}
  Symp. on Logic in Computer Science, LICS 2013}}. \bibinfo{publisher}{IEEE},
  \bibinfo{pages}{519--519}.
\newblock


\bibitem[\protect\citeauthoryear{Staton}{Staton}{2015}]%
        {Staton:AlgEffQuantum}
\bibfield{author}{\bibinfo{person}{S. Staton}.}
  \bibinfo{year}{2015}\natexlab{}.
\newblock \showarticletitle{Algebraic Effects, Linearity, and Quantum
  Programming Languages}. In \bibinfo{booktitle}{\emph{Proc. of 42nd Annual
  {ACM} {SIGPLAN-SIGACT} Symp. on Principles of Programming Languages, {POPL}
  2015}}. \bibinfo{publisher}{{ACM}}, \bibinfo{pages}{395--406}.
\newblock


\bibitem[\protect\citeauthoryear{Wright and Felleisen}{Wright and
  Felleisen}{1994}]%
        {Wright:SynAppTypeSoundness}
\bibfield{author}{\bibinfo{person}{A.~K. Wright} {and} \bibinfo{person}{M.
  Felleisen}.} \bibinfo{year}{1994}\natexlab{}.
\newblock \showarticletitle{A Syntactic Approach to Type Soundness}.
\newblock \bibinfo{journal}{\emph{Information and Computation}}
  \bibinfo{volume}{115}, \bibinfo{number}{1} (\bibinfo{year}{1994}),
  \bibinfo{pages}{38--94}.
\newblock


\end{thebibliography}

\typeout{get arXiv to do 4 passes: Label(s) may have changed. Rerun}
\end{document}